\documentclass[a4paper,11pt]{article}

\usepackage[utf8x]{inputenc}
\usepackage{amsmath,amsfonts,amssymb,amstext,amsthm}
\usepackage{graphicx}
\usepackage{a4wide}
\usepackage{enumerate}

\usepackage{tikz}
\usetikzlibrary{calc}
\usetikzlibrary{shapes.symbols}

\usepackage{algorithm}
\usepackage{filecontents}
\usepackage[noend]{algpseudocode}

\begin{filecontents*}{\jobname.bib}

@InProceedings{ghlx,
  author =	{H. Guo and S. Huang and P. Lu and M. Xia},
  title =	{The Complexity of Weighted {B}oolean {\#CSP} Modulo $k$},
  booktitle =	{28th International Symposium on Theoretical Aspects of Computer Science (STACS 2011) },
  pages =	{249--260},
  series =	{Leibniz International Proceedings in Informatics (LIPIcs)},
  ISBN =	{978-3-939897-25-5},
  ISSN =	{1868-8969},
  year =	{2011},
  volume =	{9},
  publisher =	{Schloss Dagstuhl--Leibniz-Zentrum f{\"u}r Informatik},
  URL =		{http://drops.dagstuhl.de/opus/volltexte/2011/3015},
  URN =		{urn:nbn:de:0030-drops-30158},
  doi =		{http://dx.doi.org/10.4230/LIPIcs.STACS.2011.249},
  annote =	{Keywords: \#CSP, dichotomy theorem, counting problems, computational complexity}
}

@article {SF,
    AUTHOR = {M. Conforti and G. Cornu{\'e}jols and
              K. Vu{\v{s}}kovi{\'c}},
     TITLE = {Square-free perfect graphs},
   JOURNAL = {J. Combin. Theory Ser. B},
  FJOURNAL = {Journal of Combinatorial Theory. Series B},
    VOLUME = {90},
      YEAR = {2004},
    NUMBER = {2},
     PAGES = {257--307},
      ISSN = {0095-8956},
     CODEN = {JCBTB8},
   MRCLASS = {05C17 (05C15)},
  MRNUMBER = {2034030 (2005b:05105)},
MRREVIEWER = {G. Ravindra},
       DOI = {10.1016/j.jctb.2003.08.003},
       URL = {http://dx.doi.org/10.1016/j.jctb.2003.08.003},
}

@inproceedings{SF2,
  author    = {F. Arends and
               J. Ouaknine and
               C. W. Wampler},
  title     = {On Searching for Small {K}ochen--{S}pecker Vector Systems},
  booktitle = {37th International Workshop on Graph-Theoretic Concepts in Computer Science (WG 2011), revised papers},
  series    = {LNCS},
  volume    = 6986,
  pages     = {23--34},
  year      = {2011},
  publisher = {Springer-Verlag},
  url       = {http://dx.doi.org/10.1007/978-3-642-25870-1_4},
  doi       = {10.1007/978-3-642-25870-1_4},
  timestamp = {Wed, 30 Nov 2011 12:43:43 +0100},
  biburl    = {http://dblp.uni-trier.de/rec/bib/conf/wg/ArendsOW11},
  bibsource = {dblp computer science bibliography, http://dblp.org}
}

@article{BG,
  author    = {A. A. Bulatov and
               M. Grohe},
  title     = {The complexity of partition functions},
  journal   = {Theor. Comput. Sci.},
  volume    = {348},
  number    = {2--3},
  year      = {2005},
  pages     = {148--186},
  ee        = {http://dx.doi.org/10.1016/j.tcs.2005.09.011}
  }

@article{GGJT,
  author    = {L. A. Goldberg and
               M. Grohe and
               M. Jerrum and
               M. Thurley},
  title     = {A complexity dichotomy for partition functions with mixed
               signs},
  journal   = {SIAM J. Comput.},
  volume    = {39},
  number    = {7},
  year      = {2010},
  pages     = {3336--3402},
  ee        = {http://dx.doi.org/10.1137/090757496}
  }

@book{Arm88:GroupSym,
    author    = {M. A. Armstrong},
    title     = {Groups and Symmetry},
    publisher = {Springer-Verlag},
    year      = 1988
}

@article{ABL00:Modular-auto,
    author    = {V. Arvind and R. Beigel and A. Lozano},
    title     = {The Complexity of Modular Graph Automorphism},
    journal   = {SIAM J. Comput.},
    volume    = 30,
    number    = 4,
    pages     = {1299--1320},
    year      = 2000
}

@article{CL11:Holographic,
    author    = {J.-Y. Cai and P. Lu},
    title     = {Holographic Algorithms: From Art to Science},
    journal   = {J. Comput. Syst. Sci.},
    volume    = 77,
    number    = 1,
    pages     = {41--61},
    year      = 2011
}

@article{DG00:Homomorphisms,
    author    = {M. E. Dyer and C. S. Greenhill},
    title     = {The Complexity of Counting Graph Homomorphisms},
    journal   = {Random Struct. Algorithms},
    volume    = 17,
    number    = {3--4},
    pages     = {260--289},
    year      = 2000
}

@phdthesis{Fab12:Thesis,
    author    = {J. Faben},
    title     = {The Complexity of Modular Counting in Constraint
                 Satisfaction Problems},
    school    = {Queen Mary, University of London},
    year      = 2012
}

@article{FJ13,
    author    = {J. Faben and M. Jerrum},
    title     = {The complexity of parity graph homomorphism:
                 an initial investigation},
    journal   = {Theor. Comput.},
    volume    = 11,
    pages     = {35--57},
    year      = 2015
}

@article{GGR14:Cactus,
    author    = {A. G{\"o}bel and L. A. Goldberg and D. Richerby},
    title     = {The Complexity of Counting Homomorphisms to Cactus Graphs
                 Modulo~$2$},
    journal   = {ACM T. Comput. Theory},
    volume    = 6,
    number    = 4,
    pages     = {article~17},
    year      = 2014
}

@article{GGL14:Locally-optimal,
    author    = {L. A. Goldberg and R. Gysel and J. Lapinskas},
    title     = {Approximately Counting Locally-Optimal Structures},
    journal   = {CoRR},
    volume    = {abs/1411.6829},
    year      = 2014
}

@article{GP86:Parallel,
    author    = {L. M. Goldschlager and I. Parberry},
    title     = {On the construction of parallel computers from various
                 bases of {B}oolean functions},
    journal   = {Theor. Comput. Sci.},
    volume    = {43},
    year      = {1986},
    pages     = {43--58}
}

@article{HN90:Hcol,
    author    = {P. Hell and J. \Nesetril},
    title     = {On the complexity of {$H$}-coloring},
    journal   = {J. Comb. Theory, Ser. B},
    volume    = {48},
    number    = {1},
    pages     = {92--110},
    year      = 1990
}

@book{HN04:HomBook,
    author    = {P. Hell and J. Ne{\v{s}}et{\v{r}}il},
    title     = {Graphs and Homomorphisms},
    publisher = {Oxford University Press},
    year      = 2004
}

@article{Lov67:OpStruct,
    author    = {L. Lov{\'a}sz},
    title     = {Operations with Structures},
    journal   = {Acta Math. Acad. Sci. Hungar.},
    volume    = 18,
    number    = {3--4},
    pages     = {321--328},
    year      = 1967
}

@inproceedings{PZ82:Counting,
    author    = {C. H. Papadimitriou and S. Zachos},
    title     = {Two remarks on the power of counting},
    booktitle = {Proc. 6th GI-Conference on Theoretical Computer Science},
    publisher = {Springer-Verlag},
    pages     = {269--276},
    year      = 1982
}  

@inproceedings{Sch78:Satisfiability,
    author    = {T. J. Schaefer},
    title     = {The Complexity of Satisfiability Problems},
    booktitle = {Proc. 10th Annual ACM Symposium on Theory of Computing
                 (STOC 1978)},
    publisher = {ACM Press},
    pages     = {216--226},
    year      = 1978
}

@article{Tod91:PP-PH,
    author    = {S. Toda},
    title     = {{PP} is as hard as the polynomial-time hierarchy},
    journal   = {SIAM J. Comput.},
    volume    = 20,
    number    = 5,
    pages     = {865--877},
    year      = 1991
}

@inproceedings{Val06:Accidental,
    author    = {L. G. Valiant},
    title     = {Accidental Algorithms},
    booktitle = {Proc. 47th Annual IEEE Symposium on Foundations of
                 Computer Science (FOCS 2006)},
    publisher = {IEEE},
    pages     = {509--517},
    year      = 2006
}

@article{XZZ07:3-regular,
    author    = {M. Xia and P. Zhang and W. Zhao},
    title     = {Computational Complexity of Counting Problems on
                 $3$-regular planar graphs},
    journal   = {Theoret. Comput. Sci.},
    volume    = 384,
    number    = 1,
    pages     = {111--125},
    year      = 2007
}
\end{filecontents*}  

\begin{filecontents*}{\jobname.bbl}

\end{filecontents*}

\newcommand{\parp}{\ensuremath{\oplus\mathrm{P}}}

\newcommand{\Ptime}{\ensuremath{\mathrm{P}}}
\newcommand{\NPtime}{\ensuremath{\mathrm{NP}}}
\newcommand{\nump}{\ensuremath{\mathrm{\#P}}}
  
  \newcommand\pin{p}  
\newcommand{\prb}[1]{\textsc{#1}}
\newcommand{\pinnedparhcol}{\ensuremath{\oplus \prb{PinnedHomsTo}H}}

\newcommand{\parhcol}[1][H]{\ensuremath{\oplus\prb{HomsTo}{#1}}}
\newcommand{\paris}{\ensuremath{\oplus\prb{IS}}}
\newcommand{\partlabparhcol}{\ensuremath{\oplus \prb{PartLabHomsTo}H}}

\newcommand{\dist}{\mathrm{dist}}
\newcommand{\Aut}{\mathrm{Aut}}
\newcommand{\Orb}{\mathrm{Orb}}
\newcommand{\HomPin}{\mathrm{PinHom}}

\newcommand{\dom}{\mathrm{dom}}

\newcommand{\calG}{\ensuremath{\mathcal{G}}}
\newcommand{\calI}{\ensuremath{\mathcal{I}}}
\newcommand{\calX}{\ensuremath{\mathcal{X}}}

\newcommand{\NCl}{\ensuremath{\#C_\ell}}

\newcommand{\Oy}{\ensuremath{\Omega_y}}
\newcommand{\Oz}{\ensuremath{\Omega_z}}
\newcommand{\Sox}{\ensuremath{\Sigma_{o,x}}}
\newcommand{\Sos}{\ensuremath{\Sigma_{o,s}}}
\newcommand{\Six}{\ensuremath{\Sigma_{i,x}}}
\newcommand{\Sis}{\ensuremath{\Sigma_{i,s}}}

\newcommand{\vecv}{\mathbf{v}}
\newcommand{\xbar}{\bar{x}}
\newcommand{\ybar}{\bar{y}}

\newcommand{\isoto}{\cong}

\newcommand{\eqclass}[1]{[\hspace{-0.2em}[{#1}]\hspace{-0.2em}]}

\newcommand{\Homs}[2]{\mathrm{Hom}({#1}\to {#2})}
\newcommand{\InjHoms}[2]{\mathrm{InjHom}({#1}\to {#2})}

\newcommand{\Lovasz}{Lov\'asz}
\newcommand{\Nesetril}{Ne\v{s}et\v{r}il}

\newtheorem{theorem}{Theorem}[section]
 
 \newtheorem{lemma}[theorem]{Lemma}
 \newtheorem{corollary}[theorem]{Corollary}
 \newtheorem{conjecture}[theorem]{Conjecture}

 \theoremstyle{definition}
 \newtheorem{definition}[theorem]{Definition}

\title{Counting Homomorphisms to Square-Free Graphs, Modulo~2%
\thanks{
The research leading to these results has received funding from 
the European Research Council under the European Union's Seventh Framework Programme (FP7/2007--2013) ERC grant agreement no.\ 334828. The paper 
reflects only the authors' views and not the views of the ERC or the European Commission. The European Union is not liable for any use that may be made of the information contained therein.
 Authors' address: Department of Computer Science, University of Oxford, Wolfson Building, Parks Road, Oxford, OX1~3QD, UK.}}
\author{Andreas G\"obel \and Leslie Ann Goldberg \and David Richerby}
\date{\today}
\begin{document}

\maketitle

\begin{abstract}
    We study the problem \parhcol{} of counting, modulo~$2$, the
    homomorphisms from an input graph to a fixed undirected graph~$H$.
A characteristic feature of modular counting is that cancellations make wider classes of instances tractable than is the case for exact (non-modular) counting, so subtle dichotomy theorems can arise.    
       We show the following dichotomy: for any $H$ that contains no
    $4$-cycles, \parhcol{} is either in polynomial time or
    is \parp{}-complete.  This partially confirms a conjecture of Faben and
    Jerrum that was previously only known to hold for trees and 
for a restricted class of tree-width-$2$ graphs called    
cactus graphs.  
We confirm the conjecture for 
a rich class of graphs including graphs of unbounded tree-width. In particular, 
we focus on  square-free graphs, which are graphs without 
$4$-cycles.    These graphs arise frequently 
in combinatorics, for example in connection with the strong perfect graph theorem and in 
certain graph algorithms.
Previous dichotomy theorems required the graph to be tree-like so that
tree-like decompositions could be exploited in the proof. We prove the conjecture
for a much richer class of graphs by adopting a much more general approach.
\end{abstract}

\section{Introduction}

 A homomorphism from a graph~$G$ to a graph~$H$ 
is a function from~$V(G)$ to~$V(H)$ that preserves edges, in the sense of mapping  
every edge of~$G$ to an edge of~$H$; non-edges of~$G$ may be mapped to edges or non-edges of~$H$.
Many structures arising in graph theory
can be represented naturally as homomorphisms. 
For example, the proper $q$-colourings of a graph~$G$
correspond to the homomorphisms from~$G$ 
to a $q$-clique. 
For this reason, homomorphisms from $G$ to a graph~$H$ are
often called ``$H$-colourings'' of~$G$.
Independent sets of~$G$
correspond to the homomorphisms from~$G$ to
the connected graph with two vertices  and one self-loop
(vertices of~$G$ which are mapped to the self-loop are out
of the corresponding independent set; vertices which are
mapped to the other vertex are in it).
Homomorphism problems can also be
seen as constraint satisfaction problems (CSPs) in which the
constraint language consists of a single symmetric binary relation.
Partition functions in statistical physics 
such as the Ising model, the Potts model, and the
hard-core model arise naturally as weighted
sums of homomorphisms~\cite{BG, GGJT}.

In this paper, we study the
complexity of counting homomorphisms modulo~$2$. 
For graphs $G$ and $H$,   $\Homs{G}{H}$ denotes 
the set of homomorphisms from~$G$ to~$H$. 
For each fixed~$H$,
we study the computational problem~\parhcol,
which is the problem of computing~${\left|\Homs{G}{H}\right|} \bmod 2$,
given an input graph~$G$.
 
The structure of the graph~$H$
strongly influences the complexity of~$\parhcol$.  
For example, consider the graphs~$H_1$ and~$H_2$ 
in Figure~\ref{fig:example}.
Our result (Theorem~\ref{thm:main}) shows
that \parhcol[H_1] is \parp{}-complete,
whereas \parhcol[H_2] is in~\Ptime{}.

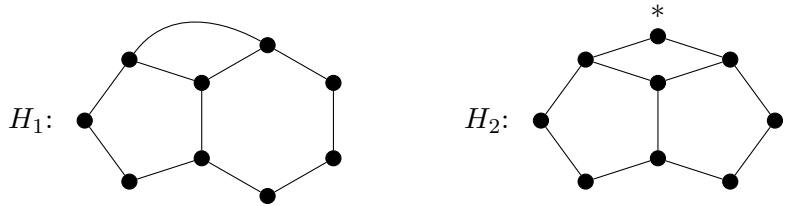
\begin{figure}
\begin{center}

\begin{tikzpicture}[scale=1,node distance = 1.5cm]
\tikzstyle{dot}   =[fill=black, draw=black, circle, inner sep=0.15mm]
\tikzstyle{vertex}=[fill=black, draw=black, circle, inner sep=2pt]
\tikzstyle{dist}  =[fill=white, draw=black, circle, inner sep=2pt]
\tikzstyle{pinned}=[draw=black, minimum size=11mm, circle]

    \begin{scope}[shift={(-3,0)}]
        \node at (-2.25,0) {$H_1$:};

        \node[vertex] (a) at (0,-0.5) {};
        \node[vertex] (b) at (0, 0.5) {};
        \node[vertex] (c) at ($(b)+(162:1)$) {};
        \node[vertex] (d) at ($(c)+(234:1)$) {};
        \node[vertex] (e) at ($(d)+(306:1)$) {};

        \node[vertex] (f) at ($(b)+( 30:1)$) {};
        \node[vertex] (g) at ($(f)+(330:1)$) {};
        \node[vertex] (h) at ($(g)+(270:1)$) {};
        \node[vertex] (i) at ($(h)+(210:1)$) {};

        \draw (a) -- (b) -- (c) -- (d) -- (e) --
              (a) -- (i) -- (h) -- (g) -- (f) -- (b);
        \draw (c) .. controls ($(c)+(54:1)$) and ($(f)+(150:0.4)$) .. (f);
    \end{scope}

    \begin{scope}[shift={(3,0)}]
        \node at (-2.25,0) {$H_2$:};

        \node[vertex] (a) at (0,-0.5) {};
        \node[vertex] (b) at (0, 0.5) {};
        \node[vertex] (c) at ($(b)+(162:1)$) {};
        \node[vertex] (d) at ($(c)+(234:1)$) {};
        \node[vertex] (e) at ($(d)+(306:1)$) {};

        \node[vertex] (f) at ($(b)+( 18:1)$) {};
        \node[vertex] (g) at ($(f)+(306:1)$) {};
        \node[vertex] (h) at ($(g)+(234:1)$) {};

        \node[vertex] (i) at ($(c)+( 18:1)$) [label=90:$*$] {};

        \draw (a) -- (b) -- (c) -- (d) -- (e) --
              (a) -- (h) -- (g) -- (f) -- (b);
        \draw (c) -- (i) -- (f);
    \end{scope}

\end{tikzpicture}
\end{center}
\caption{
Theorem~\ref{thm:main} shows
that \parhcol[H_1] is \parp{}-complete,
whereas \parhcol[H_2] is in~\Ptime{}.
This, and the role of the starred vertex are explained later in the introduction.}
\label{fig:example}
\end{figure}

 The aim of research in this area 
 is to understand for which graphs~$H$ the problem \parhcol{} is in~\Ptime,
 for which graphs~$H$  the problem  is \parp{}-complete, and to prove
 that, for all graphs~$H$, one or the other is true.
 Note that it isn't obvious, a priori, that there are no graphs~$H$
 for which \parhcol{} has intermediate complexity -- proving that
 there are no such graphs~$H$ is the main work of a so-called \emph{dichotomy theorem}.
 
This line of work was introduced by Faben and Jerrum~\cite{FJ13}.
They made the following important conjecture (which requires
a few definitions to state).
An \emph{involution} of a graph is an
automorphism of order~$2$, i.e., an  
automorphism~$\rho$ that is not the identity but for which $\rho^2$ is the identity.
Given a graph~$H$ and an involution~$\rho$,
$H^\rho$ denotes the subgraph of~$H$ induced by the fixed points of~$\rho$.
We write $H \Rightarrow H'$ if
there is an involution~$\rho$ of~$H$ such that $H^\rho=H'$
and we write $H \Rightarrow^* H'$
if either $H$ is isomorphic to~$H'$ (written
$H\isoto H'$)
or, for some positive integer~$k$,  there are graphs $H_1, \dots, H_k$ such that 
$H \isoto H_1$,
$H_1 \Rightarrow \cdots \Rightarrow H_k$, and
$H_k  \isoto H'$.
Faben and Jerrum showed \cite[Theorem 3.7]{FJ13}
that for every graph~$H$ there is (up to isomorphism) exactly one 
involution-free graph~$H^*$
such that  $H \Rightarrow^* H^*\!$. This graph~$H^*$ is called the 
\emph{involution-free reduction} of~$H$.
See \cite[Figure 1]{FJ13} for a diagram showing a graph  
 being reduced to its involution-free reduction. Faben and Jerrum make the following conjecture.

\begin{conjecture} \label{conj:FJ} (Faben and Jerrum~\cite{FJ13})
  Let $H$ be a  graph.  If its involution-free
    reduction~$H^*$ has at most one vertex, then \parhcol{} is in
    \Ptime{}; otherwise, \parhcol{} is \parp{}-complete.\end{conjecture}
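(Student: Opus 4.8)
The plan is to separate the easy ``polynomial-time'' direction, which is routine, from the hard ``$\parp$-complete'' direction, where essentially all the difficulty lies; I will describe an approach that at least settles the conjecture when $H$ is square-free (the general case seems to require structural information about arbitrary involution-free graphs that is not currently available). First I would reduce to the involution-free case: Faben and Jerrum showed that if $\rho$ is an involution of $H$ then the homomorphisms $G\to H$ that do not land in $H^\rho$ are permuted freely in pairs by $\rho$, so $|\Homs{G}{H}| \equiv |\Homs{G}{H^\rho}| \pmod 2$; iterating along $H \Rightarrow^* H^*$ gives $|\Homs{G}{H}| \equiv |\Homs{G}{H^*}| \pmod 2$, hence $\parhcol \equiv_T \parhcol[H^*]$. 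Since $H^*$ is obtained from $H$ by repeatedly passing to induced subgraphs, it is square-free whenever $H$ is, so we may assume $H=H^*$. If $H$ has no vertices then $|\Homs{G}{H}|$ is $1$ when $G$ is empty and $0$ otherwise; if $H$ is a single vertex then $|\Homs{G}{H}|$ is $1$ when $G$ has no edges and $0$ otherwise; either way $\parhcol \in \FP$, and $\parhcol \in \parp$ for every $H$ because $|\Homs{G}{H}|$ is a $\nump$ function of~$G$. So the whole content is: \emph{if $H$ is involution-free with at least two vertices, then $\parhcol$ is $\parp$-hard.}

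For the hardness proof I would first reduce to connected $H$: some component $C$ of $H$ has at least two vertices, and extending an involution of $C$ by the identity on the rest of $H$ shows $C$ is itself involution-free (and square-free if $H$ is), while a gadget forcing a designated part of the instance to map into $C$ reduces $\parhcol[C]$ to $\parhcol$. Next I would establish \emph{pinning}: that the partially labelled problem $\partlabparhcol$ (equivalently $\pinnedparhcol$) reduces to $\parhcol$. The mechanism is the standard one: for a target vertex $h$, build a gadget with a distinguished vertex $u$ whose vector $\bigl(\#\{\varphi\colon u\mapsto h'\}\bigr)_{h'\in V(H)}$ is, modulo $2$, the indicator of $\{h\}$, and then identify $u$ with the vertex to be pinned. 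The achievable vectors form a subring of $(\mathbb{Z}/2\mathbb{Z})^{V(H)}$ under pointwise multiplication, and one must show it separates the vertices of $H$. Involution-freeness forces $|\Aut(H)|$ to be odd (Cauchy's theorem), so every orbit has odd size and orbits can be detected modulo~$2$; separating vertices \emph{within} an orbit is exactly where square-freeness enters, through the fact that any two vertices of a square-free graph have at most one common neighbour, which makes the local structure around each vertex tree-like enough to build distinguishing gadgets.

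With pinning available it remains to encode a known $\parp$-complete problem, and I would aim at $\paris$. This needs a structure theorem for connected square-free involution-free graphs with at least two vertices: using again that two vertices have at most one common neighbour, the second neighbourhood of any vertex decomposes into tree-like pieces, and from this one should be able to locate a vertex $v$ together with two vertices $a,b$ in a prescribed position relative to $v$ so that, after pinning $a$ and $b$, the resulting partially labelled gadget realises modulo~$2$ the relation needed to simulate independent sets. Composing the connectedness reduction, pinning, and this encoding gives $\parp$-hardness. I expect the last step to be the main obstacle: proving a clean enough structural dichotomy for square-free involution-free graphs, and designing a single hardness gadget that works uniformly across all of them rather than by the case analysis used for trees and cactus graphs. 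Getting pinning to work at this level of generality is the second difficulty, and it may have to be proved hand in hand with the structural analysis rather than as a black box.
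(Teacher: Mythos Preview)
The statement is a conjecture, and the paper only proves its square-free case (Theorem~\ref{thm:main}); you correctly restrict your sketch to that case.  The broad outline --- reduce to the involution-free reduction, handle the trivial direction, establish pinning, then encode $\paris$ via gadgets --- matches the paper.  But two points are misplaced.

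First, you locate the need for square-freeness in the pinning step (``separating vertices within an orbit is exactly where square-freeness enters'').  This is wrong: the paper's pinning theorem (Theorem~\ref{thm:partlabcol}) holds for \emph{every} involution-free~$H$, with no use of square-freeness.  The mechanism is not a common-neighbour argument but a tuple trick: rather than pinning vertices one at a time and worrying about orbits of single vertices, one pins an $r$-tuple simultaneously.  The achievable vectors are indexed by orbits of $r$-tuples under $\Aut(H)$, and the \Lovasz-style Lemma~\ref{lem:Lovasz} shows these orbits can be separated for any involution-free~$H$.  Since all orbits have odd size (Corollary~\ref{cor:odd-orbit}), picking out one orbit of $r$-tuples is enough to pin each coordinate to a specific vertex.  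The distinction between ``tuple of orbits'' and ``orbit of tuples'' (Figure~\ref{fig:difference}) is exactly what makes this work without any structural hypothesis on~$H$.

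Second, square-freeness is actually used in the hardness-gadget construction, and your sketch of that part (``second neighbourhood decomposes into tree-like pieces, locate $v,a,b$ in prescribed position'') is too vague to constitute an approach.  The paper's argument is a four-way case split on the involution-free component~$H^{**}$: two or more even-degree vertices (Lemma~\ref{lem:two-even}), exactly one even-degree vertex (Lemma~\ref{lem:even-deg}, via iterated vertex deletion), all odd degrees with an odd cycle (Lemma~\ref{lem:odd-cycle}), and all odd degrees bipartite (Lemma~\ref{lem:even-gadget}, where the gadget is found non-constructively via Corollary~\ref{cor:Lovasz}).  In each case the core gadget is a caterpillar (Definition~\ref{defn:caterpillar}), and Lemma~\ref{lem:J3-caterpillar} --- which shows caterpillars have the right homomorphism counts --- is precisely where the ``at most one common neighbour'' property of square-free graphs is invoked.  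Your proposal does not anticipate this structure, and in particular misses the non-constructive bipartite case, which is the most delicate.
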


Note that our claim in Figure~\ref{fig:example} is consistent with
Conjecture~\ref{conj:FJ}.
$H_1$ is involution-free, so it is its own involution-free reduction,
but the involution-free reduction of~$H_2$ is the single vertex marked~$*$ in the figure.

Faben and Jerrum \cite[Theorem 3.8]{FJ13} proved Conjecture~\ref{conj:FJ} for
the case in which~$H$ is a tree.
Subsequently, the present authors \cite[Theorem 1.6]{GGR14:Cactus}
proved the conjecture for
a well-studied class of tree-width-$2$ graphs, namely 
\emph{cactus graphs}, which are  
graphs in which each edge belongs to at most one cycle.

The main result of this paper is to prove the conjecture
for a much richer class of graphs.
In particular, we prove the conjecture for
every graph~$H$ whose involution-free reduction has no $4$-cycle
(whether induced or not).

Graphs without $4$-cycles are called ``square-free'' graphs.
These graphs arise frequently in combinatorics, for example in connection with the
strong perfect graph theorem~\cite{SF} and certain graph algorithms~\cite{SF2}.
Our main theorem is the following.

\newcommand{\statethmmain}{
    Let $H$ be a graph whose involution-free reduction~$H^*$ is
    square-free.  If $H^*$~has at most one vertex, then \parhcol{} is
    in \Ptime{}; otherwise, \parhcol{} is \parp{}-complete.}
\begin{theorem}
\label{thm:main}
\statethmmain{}
\end{theorem}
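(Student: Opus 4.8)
The plan is to reduce \parhcol{} to a tractable core problem in several standard steps and then to settle that core by a gadget reduction tailored to square-free targets; throughout, membership of \parhcol{} in \parp{} is clear, so only hardness is at issue, and I assume $H^*$ has at least two vertices. First I would pass to the involution-free reduction. Any involution $\rho$ of~$H$ acts on $\Homs{G}{H}$ by post-composition, and the fixed points of this action are exactly the homomorphisms whose image lies in the fixed-point graph~$H^\rho$; hence $|\Homs{G}{H}|\equiv|\Homs{G}{H^\rho}|\pmod 2$, and iterating gives $|\Homs{G}{H}|\equiv|\Homs{G}{H^*}|\pmod 2$~\cite{FJ13}. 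Thus \parhcol[H^*] reduces to \parhcol{}, and we may assume $H=H^*$ is involution-free and square-free with at least two vertices.

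Next I would prove a pinning lemma. Since $H$ is involution-free, $\Aut(H)$ has no element of order~$2$, hence by Cauchy's theorem no element of even order, so $|\Aut(H)|$ is odd and every orbit of $\Aut(H)$ on tuples of vertices has odd size. Using this one shows, as in~\cite{FJ13,GGR14:Cactus}, that the partially-$H$-labelled counting problem \partlabparhcol{} --- in which the input additionally specifies the images of some vertices of~$G$ and one counts, modulo~$2$, only the homomorphisms extending this specification --- is polynomial-time equivalent to \parhcol{}: one first pins a tuple of vertices of~$G$ to a whole orbit of $\Aut(H)$, and then, by M\"obius inversion over the partial assignments together with the odd-orbit-size property, to a single representative. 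Moreover, an involution-free graph with at least two vertices has a connected component~$H_0$ with at least two vertices (two single-vertex components would be isomorphic, giving an involution), and $H_0$ is connected, square-free and involution-free (an involution of one component extends to the rest by the identity). Pinning one vertex of a connected input to a vertex of~$H_0$ forces the whole image into~$H_0$, so the partially-labelled problem for~$H_0$ reduces to \partlabparhcol{}; hence it suffices to treat the case in which $H$ is additionally connected.

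For $H$ connected, square-free, involution-free with at least two vertices, I would reduce from \paris{}, the problem of counting independent sets modulo~$2$, which is \parp{}-complete. Such an~$H$ is not complete (the only complete graphs with at least two vertices and no $4$-cycle are $K_2$ and $K_3$, both of which have involutions), so it has two vertices at distance~$2$ and therefore contains an induced path on vertices $a,c,b$ with $a$ and $b$ non-adjacent. Around such a configuration I would construct, using pinned vertices and small gadgets, (i)~a \emph{vertex gadget} whose terminal is forced, with odd multiplicity, into a fixed two-element set $\{o,i\}\subseteq V(H)$, and (ii)~an \emph{edge gadget} whose pair of terminals realises the hard-core relation $\{(o,o),(o,i),(i,o)\}$ with odd multiplicity on each permitted pair and even multiplicity on $(i,i)$. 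Substituting the vertex gadget for each vertex, and the edge gadget for each edge, of an input graph~$G$ of \paris{}, and pinning as needed, then produces a graph~$G'$ with $|\Homs{G'}{H}|$ congruent modulo~$2$ to the number of independent sets of~$G$, which is the required reduction. Square-freeness is exactly what makes such gadgets work: any two vertices of a square-free graph have at most one common neighbour, so the homomorphism counts of the small ``common-neighbour'' gadgets used here are confined to $\{0,1\}$ and hence have a forced parity --- whereas in a target with many $4$-cycles the corresponding multiplicities are uncontrollable, which is why earlier dichotomies were restricted to tree-like targets.

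I expect this last step --- the construction of the vertex and edge gadgets, and the verification of their parity behaviour --- to be the main obstacle. It requires a structural case analysis of connected square-free graphs, organised for instance by girth (which in a square-free graph lies in $\{3\}\cup\{5,6,7,\dots\}\cup\{\infty\}$) and by the local structure around the induced path $a,c,b$, together with careful parity bookkeeping. This is where the ``much more general approach'' promised in the introduction must be deployed: instead of decomposing~$H$ along a tree-like structure, one has to exploit the bounded-common-neighbourhood property of square-free graphs uniformly across all the cases.
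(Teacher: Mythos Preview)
Your scaffolding is sound and matches the paper: the passage to the involution-free reduction, the pinning reduction to \partlabparhcol{}, and the restriction to a single connected component are all carried out exactly as you outline (the paper's Sections~3 and~6 and Lemma~\ref{lem:disconnected}). The observation that an involution-free graph with at least two vertices must have a multi-vertex component, and that each component is itself involution-free, is also used in the paper.

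The genuine gap is that you stop precisely where the work begins. You write that constructing the vertex and edge gadgets ``requires a structural case analysis \dots\ organised for instance by girth'' and call this ``the main obstacle'' --- and it is; but you have not supplied it, and your proposed organisation (girth, plus local structure around an induced~$P_3$) is not the one that actually succeeds. The paper's case split is on the \emph{number of vertices of even degree} in the connected component~$H^{**}$: if there are at least two, neighbourhood gadgets at two such vertices joined by a caterpillar gadget along a path between them give hardness directly (Lemma~\ref{lem:two-even}); if there is exactly one, one shows (Corollary~\ref{cor:one-even-asym}) that deleting a suitable sphere $B_r(v)$ yields an involution-free component with two even-degree vertices, and simulates that deletion by attaching paths (Lemmas~\ref{lem:Jpath} and~\ref{lem:even-deg}); if there are none, every vertex has odd degree, and one splits further on whether there is an odd cycle --- if so, a shortest-odd-cycle gadget works (Lemma~\ref{lem:odd-cycle}), and if not, $H^{**}$ is bipartite and the required gadget is found \emph{non-constructively} via a \Lovasz{}-style lemma (Corollary~\ref{cor:Lovasz} and Lemma~\ref{lem:even-gadget}). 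The unifying device in all cases is the caterpillar gadget (Definition~\ref{defn:caterpillar}), whose correctness (Lemma~\ref{lem:J3-caterpillar}) is exactly where square-freeness is used, via the ``at most one common neighbour'' property you identify. Your sketch contains neither the even-degree organising principle, nor the caterpillar construction, nor the non-constructive bipartite argument, and without these the plan is not yet a proof.
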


If $H$ is square-free, then so is every induced subgraph, including its involution-free
reduction~$H^*\!$. Thus, we have the following corollary.

\begin{corollary}
\label{cor:main} 
Let $H$ be a square-free graph.  If its involution-free
    reduction~$H^*$ has at most one vertex, then \parhcol{} is in
    \Ptime{}; otherwise, \parhcol{} is \parp{}-complete.
    \end{corollary}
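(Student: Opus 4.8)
Overall the plan is to work from $H^*$ and reduce, via pinning and gadgets, to a known $\parp$-complete problem. If $H^*$ has at most one vertex it is the empty graph, $K_1$, or a single looped vertex, and $|\Homs{G}{H^*}|$ is then computed directly; since the Faben--Jerrum identity $|\Homs{G}{H}|\equiv|\Homs{G}{H^*}|\pmod 2$ makes $\parhcol$ polynomial-time interreducible with the same problem for~$H^*$, this settles the tractable case. Otherwise the same identity lets us assume, after replacing $H$ by~$H^*$ and relabelling, that $H$ is itself involution-free and square-free with at least two vertices; as $\parhcol$ is clearly in~$\parp$, it remains to prove it $\parp$-hard. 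Being involution-free with at least two vertices, $H$ has an edge (two isolated vertices would give an involution), hence a component~$H_0$ containing an edge; $H_0$ is connected and square-free (an induced subgraph of~$H$) and involution-free (an involution of~$H_0$, extended by the identity, would be one of~$H$). All gadgets constructed below will live inside~$H_0$, and a single pinned vertex will force the relevant connected input graphs to map entirely into~$H_0$.

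The next ingredient is a pinning lemma: $\parhcol$, $\pinnedparhcol$ and $\partlabparhcol$ — the latter two allowing input vertices to be pinned, respectively, to a prescribed vertex of~$H$ and to a prescribed $\Aut(H)$-orbit — are polynomial-time equivalent. The engine is that an involution-free graph has automorphism group of \emph{odd} order (a group with no element of order~$2$ has odd order, by Cauchy's theorem), so every orbit of $\Aut(H)$ on $V(H)$ has odd size; hence the number of homomorphisms mapping a chosen vertex into a given orbit is congruent modulo~$2$ to the number mapping it to any single vertex of that orbit, making orbit-pinning and vertex-pinning modularly interchangeable. Orbit-pinning is then implemented by gadgets, following the pattern of the analogous lemmas in~\cite{FJ13,GGR14:Cactus}.

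The core of the proof is to reduce the $\parp$-complete problem $\paris$ (counting independent sets modulo~$2$; see~\cite{ghlx}) to $\pinnedparhcol$. The aim is to single out in~$H_0$ two vertices~$o$ and~$i$ (``out'' and ``in'') together with an edge-gadget~$J$ carrying two terminals, so that among the four pinnings of the terminals into $\{o,i\}$ the number of homomorphisms extending the pinning is odd in exactly three cases — so that $J$ realises, modulo~$2$, the edge relation of the independent-set target — and a companion vertex-gadget whose number of extensions is odd exactly when its terminal is mapped to $o$ or~$i$, so that in the assembled instance every homomorphism that maps some terminal to another vertex of~$H_0$ contributes evenly and cancels. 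These gadgets are built from path-gadgets by series and parallel composition, which realise respectively matrix products and entrywise (Hadamard) products, over~$\mathbb{F}_2$, of powers of the adjacency matrix of~$H_0$; the task is to show that this family of $\mathbb{F}_2$-matrices, restricted to suitable pairs of vertices, always contains one exhibiting the required $2\times 2$ pattern. Square-freeness does the essential work here: because any two distinct vertices of~$H_0$ have at most one common neighbour, many of the relevant walk counts are forced to be $0$ or~$1$, which keeps their parities under control, while connectivity and involution-freeness of~$H_0$ exclude the degenerate targets for which these parities would all vanish.

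I expect the main obstacle to be exactly this last step: showing, uniformly for \emph{every} connected involution-free square-free graph with at least two vertices, that the independent-set gadget can be constructed. Square-free graphs can have unbounded tree-width, so the tree-like decompositions used in the proofs for trees and cactus graphs are unavailable, and what seems to be needed instead is a genuine structural case analysis of~$H_0$ — by its girth, by whether it is bipartite, and by the local configuration around a carefully chosen vertex — together with delicate parity bookkeeping of the kind indicated by the configuration sets $\Sox,\Sos,\Six,\Sis$ and $\Oy,\Oz$, verifying the homomorphism-count congruences case by case. A secondary difficulty is to carry out the pinning lemma rigorously in the modular setting, where one must propagate congruences rather than exact values throughout.
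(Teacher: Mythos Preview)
The paper's proof of this corollary is a single observation: the involution-free reduction~$H^*$ is an induced subgraph of~$H$, so if $H$ is square-free then so is~$H^*$, and Theorem~\ref{thm:main} applies directly. You are instead sketching a proof of Theorem~\ref{thm:main} itself from scratch.

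Your outline is broadly correct and matches the paper's architecture: reduce to the involution-free~$H^*$ via the Faben--Jerrum identity, pass to a non-trivial connected component, implement pinning to individual vertices (the paper's Theorem~\ref{thm:partlabcol}), and reduce from~\paris{} via gadgets. But your proposal is explicitly not a proof: you identify the gadget construction as the main obstacle and then stop. That construction is the entire content of Sections~\ref{sec:gadgets}--\ref{sec:finding-gadgets}, and your suggestion that it proceeds via series and parallel compositions of path-gadgets realising matrix and Hadamard products over~$\mathbb{F}_2$ is not how the paper does it. The paper defines an abstract hardness gadget (Definition~\ref{defn:hardness-gadget}), builds a ``caterpillar gadget'' (Definition~\ref{defn:caterpillar}) whose correctness depends on square-freeness, and then finds a hardness gadget by a structural case analysis on the component~$H^{**}$: two even-degree vertices (Lemma~\ref{lem:two-even}), exactly one even-degree vertex (Lemma~\ref{lem:even-deg}, via simulating vertex deletion), all odd degrees with an odd cycle (Lemma~\ref{lem:odd-cycle}, via $\ell$-cycle gadgets), and the bipartite all-odd-degree case (Lemma~\ref{lem:even-gadget}), where the gadget's existence is established \emph{non-constructively} from a Lov\'asz-style isomorphism lemma (Corollary~\ref{cor:Lovasz}). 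Your series/parallel sketch does not capture these ideas, and the bipartite case in particular requires an ingredient you have not anticipated.
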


In Section~\ref{sec:whysquares} we will 
discuss the reasons that we require $H^*$ to be square-free
in the proof of Theorem~\ref{thm:main}. First, 
in Section~\ref{sec:intro:parity},
we will describe the
background to counting modulo~$2$.
In Section~\ref{sec:beyond}, 
we will explain why Conjecture~\ref{conj:FJ} is so much more difficult to prove
for graphs with unbounded tree-width.
Very briefly, in order to prove  
that \parhcol{} is \parp{}-hard
without having a bound on the tree-width of~$H$,
it is necessary to take a much more abstract approach.
Since it is not possible to decompose~$H$
using a tree-like decomposition as we did in~\cite[Theorem 1.6]{GGR14:Cactus},
we have instead come up with an abstract
characterisation of
graph-theoretic structures in~$H$ 
which   lead to \parp{}-hardness.
As we shall see, the proof that such structures always exist in square-free graphs involves interesting non-constructive
elements, leading to a more abstract, and less technical (graph-theoretic)
proof than~\cite{GGR14:Cactus}, 
while applying to a substantially richer set of graphs~$H$, including graphs
with unbounded tree width.

\subsection{Counting modulo~2}
\label{sec:intro:parity}

Although counting modulo~$2$ produces a one-bit answer, the complexity
of such problems has a rather different flavour from 
the complexity of 
decision
problems.  The complexity class \parp{} was first studied by
Papadimitriou and Zachos~\cite{PZ82:Counting} and by Goldschlager and
Parberry~\cite{GP86:Parallel}.
$\parp{}$ consists of all problems of the form
``compute $f(x) \bmod 2$'' where computing $f(x)$ is a problem in \nump.
Toda~\cite{Tod91:PP-PH} has
shown that there 
is a randomised polynomial-time reduction
from every
problem in the polynomial hierarchy to 
some problem in~\parp.
As
such, \parp{}~is a large complexity class and \parp{}-completeness
seems to represent a high degree of intractability.

The unique flavour of modular counting 
is exhibited by Valiant's famous 
restricted version of $3$-SAT~\cite{Val06:Accidental} for
which counting solutions is \nump{}-complete~\cite{XZZ07:3-regular},
counting solutions modulo~$7$ is in polynomial-time but counting
solutions modulo~$2$ is \parp{}-complete~\cite{Val06:Accidental}.  The
seemingly mysterious number~$7$ was subsequently explained by Cai and
Lu~\cite{CL11:Holographic}, who showed that the $k$-SAT version of
Valiant's problem is tractable modulo any prime factor of $2^k-1$.

Counting modulo~$2$ closely resembles ordinary, non-modular
counting, but is still very different.  Clearly, if a counting problem can
be solved in polynomial time, the corresponding decision and parity
problems are also tractable, but the converse does not necessarily
hold.  A characteristic feature of modular counting is cancellations,
which can make the modular versions of hard counting problems
tractable.  For example, consider not-all-equal SAT, the
problem of assigning values to Boolean variables such that each of a
given set of clauses contains both true and false literals.  The
number of solutions is always even, since solutions can be paired up
by  negating every variable in one solution to obtain a
second solution.  This makes counting modulo~$2$ trivial, while
determining the exact number of solutions is
\nump{}-complete~\cite{GGL14:Locally-optimal} and even deciding 
whether
a
solution exists is \NPtime{}-complete~\cite{Sch78:Satisfiability}.

We use cancellations extensively in this paper.  For example, if we
wish to compute the size of a set~$S$ modulo~$2$
then, for any even-cardinality
subset~$X\subseteq S$, we have $|S|\equiv |S\setminus X|\bmod2$.  This
means that we can ignore the elements of~$X$.    It is
also helpful to
 partition the set~$S$    into
disjoint subsets $S_1, \dots, S_\ell$ 
exploiting the fact that $|S|$ is congruent modulo~$2$ to  the number of odd-cardinality~$S_i$.
We use this idea frequently.
 
For work  on counting modulo~$k$ in the \emph{constraint satisfaction} setting
see~\cite{ghlx}.

\subsection{Going beyond bounded tree-width} 
\label{sec:beyond}
 
 \subsubsection{Trees}
 
All known hardness results for counting homomorphisms modulo~$2$
start with the following basic ``pinning'' approach.
Let $\pin$ be a function  from $V(G)$
to $2^{V(H)}$.
A homomorphism $f\in \Homs{G}{H}$ \emph{respects}
the pinning function~$\pin$
if, for every $v\in V(G)$, 
$f(v)$ is in the set~$\pin(v)$.   
Let $\HomPin(G,H,\pin)$ be the set of homomorphisms
from~$G$ to~$H$ that respect the pinning function~$\pin$
and let \pinnedparhcol\ be the problem of counting, modulo~$2$,
the number of homomorphisms in $\HomPin(G,H,\pin)$,
given an input graph~$G$ and a pinning function~$\pin$.

Faben and Jerrum~\cite[Corollary~4.18]{FJ13} 
give a polynomial-time Turing 
reduction from the problem $\pinnedparhcol$ to the problem $\parhcol$
for the special case in which the pinning function
pins only two vertices of~$G$,
and these are both pinned to entire orbits of the automorphism
group of~$H$. 
The reduction relies on a result of \Lovasz{}~\cite{Lov67:OpStruct}.
 
In order to use the reduction, it is necessary to show that
the special case of the problem $\pinnedparhcol$
is itself \parp{}-hard.  
Faben and Jerrum restrict their attention to the case in which $H$ is a tree,
and this is helpful. Every involution-free tree is asymmetric (so the orbit
of every vertex is trivial),
so the pinning function~$\pin$ is actually able to pin two vertices of~$G$ 
to any two \emph{particular} vertices of~$H$.

 The reduction that they used to prove hardness 
 of $\pinnedparhcol$
 is
from \paris{}, the problem of counting independent sets modulo~$2$,
which was shown to be \parp{}-complete by
Valiant~\cite{Val06:Accidental}.

We first give an informal description of
a general reduction from   \paris{}
to the problem $\pinnedparhcol$.   (The general description is
actually based on our current approach in this paper, but
we can also present past approaches 
in this context.)
The vertices and edges of an input~$G$
of \paris{} are replaced by gadgets to give a graph~$J$.  In~$J$, the
gadget corresponding to the vertex~$v$ of~$G$ has a vertex~$y^v$.
We also choose an appropriate vertex~$i$ in~$H$.  Any
homomorphism~$\sigma$ from $J$ to the target graph~$H$ defines a set
$I(\sigma) = \{v\in V(G)\mid \sigma(y^v)=i\}$ (mnemonic: ``$i$'' means
``in'' because $\sigma(y^v)$ is~$i$ exactly when $v$ is in $I(\sigma)$).  
The configuration of the gadgets ensures that  
a set $I \subseteq V(G)$ has an odd number of homomorphisms
$\sigma$ with $I(\sigma)=I$ if
and only if $I$ is an independent set of~$G$.
Next, the homomoprhisms $\sigma\in\Homs{J}{H}$ can be partitioned 
  according to the value of~$I(\sigma)$.  By the
partitioning argument mentioned at the end of
Section~\ref{sec:intro:parity}, the number of independent sets in~$G$
is equivalent to $|\Homs{J}{H}|$, modulo~$2$.

The gadgets are chosen according to the structure and properties
of~$H$.  Since Faben and Jerrum 
were working
with trees, they  were able to use gadgets with
very simple structure: their gadgets are essentially paths and they
exploit the fact that any non-trivial involution-free tree has at
least two even-degree vertices 
and, of course, these 
have a unique path between them (which turns out to be useful).

 \subsubsection{Cactus graphs}

The situation for cactus graphs is much more complicated.  Non-trivial
involution-free cactus graphs still contain even-degree vertices but
the presence of cycles means that paths, even shortest paths, are no
longer guaranteed to be unique.  
Our solution in~\cite{GGR14:Cactus}
was to use more complicated gadgets.  They are still (loosely) based on paths,
since they are defined in terms of numbers of walks between vertices of~$H$.
However,  rather than
requiring appropriate even-degree vertices (which might not exist), 
we used a second, and more complicated, gadget to ``select'' an even-cardinality subset of a vertex's
neighbours.  To find such gadgets in~$H$, 
we used tree-like decompositions.
Given a decomposition that breaks $H$ into independent
fragments, we inductively found gadgets (or, sometimes, partial gadgets)
in the fragments, carefully putting them together across the join
of the decomposition. All of this led to a very technical, very graph-theoretic solution, 
and also to a solution that does not generalise to graphs  without tree-like decompositions.

The proof is  complicated by the fact that there are involution-free  graphs
(even involution-free cactus graphs!)\@
that have non-trivial automorphisms, unlike the situation for trees. 
Thus, the fact that the pinning function
pins vertices to entire orbits
(rather than to particular vertices) 
causes complications. The solution in~\cite[Section 8]{GGR14:Cactus}
relies on special properties of cactus graphs, and it is not clear how it could be
generalised.
  
\subsubsection{Unbounded tree-width}

Since they are based around a tree-like decomposition, the
techniques of~\cite{GGR14:Cactus}
are not suitable for graphs with unbounded tree-width.
To prove  Conjecture~\ref{conj:FJ} for a richer class of graphs,
we adopt a much more abstract approach. 
Since we do not have tree-like decompositions,
we instead mostly  use structural properties of
the whole graph to find gadgets.  
The structural properties do not always require technical detail --
as we will see below, re-examining a result of
\Lovasz{}~\cite{Lov67:OpStruct}  even allows us to demonstrate non-constructively
the existence of some of the gadgets that we use.
 
 In order to support our more general approach, we first have to
modify
the pinning problem \pinnedparhcol.
 For any graph~$H$, a \emph{partially $H$-labelled graph} $J=(G,\tau)$
consists of an \emph{underlying graph}~$G$ and a \emph{pinning
function}~$\tau$, which in this paper is a partial function from $V(G)$ to~$V(H)$.
Thus, every vertex $v$ in the domain of~$\tau$ is pinned to
a \emph{particular} vertex of~$H$ and \emph{not} to a subset such as an orbit.
 A homomorphism from a partially labelled graph~$J=(G,\tau)$ to~$H$ is
a homomorphism $\sigma\colon G\to H$ such that, for all vertices $v\in
\dom(\tau)$, $\sigma(v) = \tau(v)$.
The intermediate problem that we study then is \partlabparhcol{},
the problem of computing $|\Homs{J}{H}| \bmod 2$,
given a partially $H$-labelled graph~$J$.
In Section~\ref{sec:pinning}, we generalise the application of \Lovasz's theorem to
show (Theorem~\ref{thm:partlabcol}) that
$\partlabparhcol\leq \parhcol$.

 Armed with a stronger pinning 
 technique, we then abstract away most of the complications 
 that arose
 for graphs with small tree-width
 by instead
using more general gadgets, defined in Section~\ref{sec:gadgets}.
Because they are not based on paths, they do not rely on uniqueness of
any path in~$H$.  Instead, the gadgets have three main parts.  Our new
reduction from \paris{} to \parhcol{} can be seen informally as
assigning colours to both the vertices and the edges of~$G$, where each
``colour'' is a vertex of~$H$.  One part of the gadget controls which
colours can be assigned to  each vertex, one controls which colours
can be assigned to each edge and a third part determines how many
homomorphisms there are from $G$ to~$H$, given the choice of colours
for the vertices and edges.  
In addition to all of this, we identify two special 
vertices of~$H$, one of which is the
vertex~$i$ mentioned above.

The much more general nature of our gadgets compared to those used
previously makes them much easier to find and, in some cases, allows
us to prove the existence of parts of them non-constructively.\footnote{
Recall that gadgets depend only on the fixed graph~$H$ and not on the input~$G$
so they can be hard-coded into the reduction --- there is no need to find one
constructively.}  
We no longer need to
find unique shortest paths in~$H$ or, indeed, any paths at all. In fact,
all the gadgets that we construct in this paper use  a
``caterpillar gadget'' (Definition~\ref{defn:caterpillar}) which
allows us to use \emph{any} specified path in the graph~$H$ instead of 
relying on a unique shortest path.
Rather than finding hardness gadgets in components in some
decomposition of~$H$, we mostly find gadgets ``in situ''.  

When a graph has
two even-degree vertices, we can directly use those vertices and a
caterpillar gadget to produce a hardness gadget (see  Lemma~\ref{lem:two-even}).  This already
provides a self-contained proof of Faben and Jerrum's dichotomy for
trees.  Next, for graphs with only one even-degree vertex, we show 
(Corollary~\ref{cor:one-even-asym}) that
deleting an appropriate set of vertices leaves a component with two
even-degree vertices and show (Lemma~\ref{lem:even-deg})
how to simulate that vertex deletion
with gadgets.  This leaves only graphs in which every vertex has odd
degree.  In such a graph, we are able to use any shortest odd-length cycle to
construct a gadget (Lemma~\ref{lem:odd-cycle}). If there are no odd cycles, the graph is bipartite.
In this interesting case (Lemma~\ref{lem:always-even-gadget})
we  use our version of \Lovasz's result to find a gadget
non-constructively.

\subsection{Squares}
\label{sec:whysquares}

It is natural to ask
why the involution-free reduction~$H^*$ in 
Theorem~\ref{thm:main} is required to be square-free.
 We do not believe that the
restriction to square-free graphs is fundamental, since our results on
pinning apply to all involution-free graphs
(Section~\ref{sec:pinning}) and neither our definition of hardness
gadgets (Definition~\ref{defn:hardness-gadget}) nor our proof that the
existence of a hardness gadget for~$H$ implies that \parhcol{}
is \parp{}-complete (Theorem~\ref{thm:hardness-gadget}) requires $H$
to be square-free.  However, all the actual hardness gadgets that we find
for graphs do rely on the absence of $4$-cycles, as discussed in
Section~\ref{sec:gadgets:squares}, and removing this restriction seems
technically challenging.  We note that dealing with $4$-cycles also
caused significant difficulties in cactus graphs~\cite{GGR14:Cactus}.

\subsection{Related work}

We have already mentioned earlier work  on counting graph homomorphisms modulo~$2$.
The problem of counting graph homomorphisms
(exactly, rather than modulo a fixed constant) 
was previously studied by Dyer and
Greenhill~\cite{DG00:Homomorphisms}. They showed the problem of counting 
homomorphisms to a fixed graph~$H$ is solvable in polynomial time if 
every connected component of
$H$~is
a complete graph with a self-loop on every vertex or a complete  bipartite graph
with no self-loops, and is \nump{}-complete, otherwise.
Their work builds on an earlier dichotomy by Hell and \Nesetril{}~\cite{HN90:Hcol}
for the complexity of the graph homomorphism decision problem (the problem of distinguishing between the case
where there are no homomorphisms and the case where there is at least one).

\subsection{Organisation}

We introduce notation in Section~\ref{sec:notation}.
Section~\ref{sec:pinning} deals with pinning and consists mostly of
adapting existing work to the precise framework we require.  It can be
skipped by the reader who is comfortable with pinning and happy to
believe it can be done in our more general setting.

The gadgets that we use are formally defined in Section~\ref{sec:gadgets},
where we also show that \parhcol{} is \parp{}-complete if $H$~is an
involution-free graph that has one of these gadgets.
Section~\ref{sec:gadgets:caterpillar} introduces a gadget that we use
extensively, but which requires $H$ to be square-free, as discussed in
Section~\ref{sec:gadgets:squares}.  In
Section~\ref{sec:finding-gadgets}, we show how to find hardness
gadgets for all square-free graphs and, in Section~\ref{sec:mainthm},
we tie everything together to prove the dichotomy theorem.

 \section{Notation}
\label{sec:notation}

We write $[n]$ for the set $\{1, \dots, n\}$.
For a set~$S$ and an element~$x$, we often write $S-x$ for $S\setminus
\{x\}$.

\paragraph{Graphs.}  In this paper, graphs are undirected and have no
parallel edges and no loops.  The one exception to this is that we
briefly allow loops in the proof of Lemma~\ref{lem:Lovasz} (this is
clearly stated in the proof).  Paths and cycles do not repeat
vertices; walks may repeat both vertices and edges.
The length of  a path or cycle is the number of edges that it contains.
The \emph{odd-girth} of a graph is the length of
its shortest odd-length cycle.  $\Gamma_G(v)$ is the set
of neighbours of a vertex~$v$ in~$G$.

We write $G\isoto H$ to indicate that graphs $G$ and~$H$ are isomorphic.
$\Aut(H)$ denotes the automorphism group of a graph~$H$.  An
\emph{involution} is an automorphism of order~$2$ (i.e., an
automorphism~$\rho$ that is not the identity such that $\rho\circ\rho$
is the identity).
$\Homs{G}{H}$ denotes the set of homomorphisms from a graph~$G$ to a
graph~$H$.

\paragraph{Partially labelled graphs.}
For any graph~$H$, a \emph{partially $H$-labelled graph} $J=(G,\tau)$
consists of an \emph{underlying graph}~$G$ and a \emph{pinning
function}~$\tau$, which is a partial function from $V(G)$ to~$V(H)$.
A vertex~$v$ in the domain of the pinning function is said to be
\emph{pinned} or \emph{pinned to $\tau(v)$}.
We will refer to these graphs as \emph{partially labelled
  graphs} where the graph~$H$ is clear from the context.
We sometimes write $G(J)$ and~$\tau(J)$ for the underlying graph and
pinning function of a partially labelled graph, respectively.
We write
partial functions as sets of pairs, for example, writing $\tau =
\{a\mapsto s,b\mapsto t\}$ for the partial function~$\tau$ with
$\dom(\tau) = \{a,b\}$ such that $\tau(a)=s$ and $\tau(b)=t$.

A homomorphism from a partially labelled graph~$J=(G,\tau)$ to~$H$ is
a homomorphism $\sigma\colon G\to H$ such that, for all vertices $v\in
\dom(\tau)$, $\sigma(v) = \tau(v)$.  We say that such a homomorphism
\emph{respects}~$\tau$.  

\paragraph{Distinguished vertices.}  
It is often convenient to regard a graph as having some number of
distinguished vertices $x_1, \dots, x_r$ and we denote such a graph by
$(G, x_1, \dots, x_r)$.  Note that the distinguished vertices need not
be distinct.
We sometimes abbreviate the sequence 
$x_1,\ldots,x_r$ as $\xbar$
and we use $G[\xbar]$ to denote the subgraph of~$G$
induced by the set of vertices $\{x_1,\ldots,x_r\}$.
A homomorphism from a graph $(G, x_1, \dots,
x_r)$ to $(H, y_1, \dots, y_r)$ is a homomorphism~$\sigma$ from $G$
to~$H$ with the property that $\sigma(x_i)=y_i$ for each $i\in[r]$.
This is the same thing as a homomorphism from the partially $H$-labelled
graph $(G, \{x_1\mapsto y_1, \dots, x_r\mapsto y_r\})$ to~$H$.
Given a partially
labelled graph~$J=(G,\tau)$ and vertices $x_1, \dots, x_r\notin
\dom(\tau)$, a homomorphism from $(J, x_1, \dots, x_r)$ to $(H,
y_1, \dots, y_r)$ is formally identical to a homomorphism from
$J'=(G, \tau \cup \{x_1\mapsto y_1, \dots, x_r\mapsto y_r\})$ to~$H$.

Similarly, we say that two graphs $(G, x_1, \dots, x_r)$ and $(H, y_1,
\dots, y_s)$ are isomorphic if $r=s$ and there is an isomorphism
$\rho\colon V(G) \to V(H)$ such that $\rho(x_i)=y_i$ for each
$i\in[r]$ (note that we may have $G=H$).  An automorphism of $(G, x_1,
\dots, x_r)$ is just an automorphism~$\rho$ of~$G$ with the property
that $\rho(x_i)=x_i$ for each $i\in[r]$.

\paragraph{Diagram conventions.} In diagrams of partially labelled graphs,
ordinary vertices are denoted by black dots, distinguished vertices by
small white circles and pinned vertices (i.e., the vertices in
$\dom(\tau)$) by large white circles.  A
label next to a vertex of any kind indicates the identity of that
vertex; a label inside a white circle indicates what that vertex is
pinned to.

\section{Partially labelled graphs and pinning}
\label{sec:pinning}

The results in this section do not require~$H$ to be square-free.

Because we use pinning in our gadgets, we mostly work with the problem of
determining the number of homomorphisms from a partially
$H$-labelled graph to~$H$, modulo~$2$:

\begin{description}
\setlength{\itemsep}{-0.9ex}
\item \emph{Name:} \partlabparhcol{}.
\item \emph{Parameter:} A graph~$H$.
\item \emph{Input:} A partially $H$-labelled graph $J$.
\item \emph{Output:} $|\Homs{J}{H}| \bmod 2$.
\end{description} 

Our goal in the remainder of this section is to prove the following
theorem.

\begin{theorem}
\label{thm:partlabcol}
    For any involution-free graph~$H$, $\partlabparhcol\leq \parhcol$.
\end{theorem}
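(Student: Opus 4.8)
\subsection*{Proof proposal for Theorem~\ref{thm:partlabcol}}

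The plan is to express $|\Homs{J}{H}|\bmod 2$ as an explicit $\mathbb{Z}/2$-linear combination of values $|\Homs{G'}{H}|\bmod 2$ for ordinary (unpinned) graphs~$G'$, each of which can be read off from an oracle for \parhcol. Write $J=(G,\tau)$, let $v_1,\dots,v_k$ enumerate $\dom(\tau)$, and put $a_j=\tau(v_j)$, so that $|\Homs{J}{H}|=|\Homs{(G,v_1,\dots,v_k)}{(H,a_1,\dots,a_k)}|$. The only feature of~$H$ that the argument exploits is that, since $H$ is involution-free, $\Aut(H)$ has no element of order~$2$, so by Cauchy's theorem $|\Aut(H)|$ is odd~\cite{Arm88:GroupSym}; this is what will let us ``divide through by $|\Aut(H)|$'' modulo~$2$ at the end.

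First I would average over the automorphism group. For $\rho\in\Aut(H)$ the map $\sigma\mapsto\rho\circ\sigma$ is a bijection from $\Homs{(G,\bar{v})}{(H,a_1,\dots,a_k)}$ to $\Homs{(G,\bar{v})}{(H,\rho(a_1),\dots,\rho(a_k))}$, so summing over the group and then grouping terms by the value $\bar{c}=(\rho(a_1),\dots,\rho(a_k))$ gives
\[
|\Aut(H)|\cdot\big|\Homs{J}{H}\big|
=\sum_{\bar{c}\in V(H)^k}\big|\Homs{(G,\bar{v})}{(H,\bar{c})}\big|\cdot\big|\InjHoms{(H,\bar{a})}{(H,\bar{c})}\big| ,
\]
where we use that an injective endomorphism of a finite graph is an automorphism, so $\InjHoms{(H,\bar{a})}{(H,\bar{c})}$ is exactly the set of automorphisms of~$H$ carrying $\bar{a}$ to~$\bar{c}$. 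The next step is to turn the injective-homomorphism counts into ordinary homomorphism counts by the classical M\"obius inversion of Lov\'asz~\cite{Lov67:OpStruct} (this is the role of Lemma~\ref{lem:Lovasz}, in whose proof loops are briefly allowed): for each partition~$\pi$ of~$V(H)$ into independent sets, writing $H/\pi$ for the quotient graph, $[\bar{a}]_\pi$ for the image of~$\bar{a}$, and $\mu$ for the M\"obius function of the partition lattice,
\[
\big|\InjHoms{(H,\bar{a})}{(H,\bar{c})}\big|
=\sum_{\pi}\mu(\hat 0,\pi)\,\big|\Homs{(H/\pi,[\bar{a}]_\pi)}{(H,\bar{c})}\big|
\]
for every~$\bar{c}$ (partitions with a non-independent part are omitted because, $H$ being loopless, they contribute~$0$).

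Substituting this identity, exchanging the two sums, and applying the standard gluing identity $\sum_{\bar{c}}|\Homs{(G,\bar{v})}{(H,\bar{c})}|\cdot|\Homs{(F,\bar{r})}{(H,\bar{c})}|=|\Homs{G\oplus_{\bar{v}\equiv\bar{r}}F}{H}|$ --- where $G\oplus_{\bar{v}\equiv\bar{r}}F$ is the graph built from disjoint copies of $G$ and $F$ by identifying $v_j$ with $r_j$ for each~$j$ --- with $F=H/\pi$ and $\bar{r}=[\bar{a}]_\pi$, I would obtain
\[
|\Aut(H)|\cdot\big|\Homs{J}{H}\big|=\sum_{\pi}\mu(\hat 0,\pi)\,\big|\Homs{G_\pi}{H}\big| ,
\qquad G_\pi:=G\oplus_{\bar{v}\equiv[\bar{a}]_\pi}(H/\pi) ,
\]
the sum ranging over partitions of~$V(H)$ into independent sets. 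Since $|\Aut(H)|$ is odd, reducing modulo~$2$ leaves $|\Homs{J}{H}|\equiv\sum_{\pi}\mu(\hat 0,\pi)|\Homs{G_\pi}{H}|\pmod 2$; that is, $|\Homs{J}{H}|\bmod 2$ is the parity of the number of partitions~$\pi$ for which $\mu(\hat 0,\pi)$ and $|\Homs{G_\pi}{H}|$ are both odd. The number of partitions of~$V(H)$ depends only on~$H$, and each $G_\pi$ is a simple loopless graph of size $O(|G|+|H|)$ computable in polynomial time, so this is a polynomial-time Turing reduction, establishing $\partlabparhcol\leq\parhcol$.

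I expect the real care to lie in two essentially bookkeeping points rather than in anything deep. The first is stating the Lov\'asz-type inversion precisely for structures carrying distinguished vertices and tracking what happens to loops under the quotients $H/\pi$ (which is exactly why loops must be admitted momentarily in Lemma~\ref{lem:Lovasz}). The second is checking that the gluing $G\oplus_{\bar{v}\equiv\bar{r}}F$ behaves correctly when some of the $a_j$ coincide: then the corresponding $v_j$ get identified in $G_\pi$, which is harmless since every homomorphism respecting~$\tau$ already sends those vertices to the same vertex of~$H$, so the displayed identities are unaffected.
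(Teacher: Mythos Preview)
Your argument is correct and takes a genuinely different route from the paper's. The paper works indirectly: it enumerates the orbits $\ybar_1,\dots,\ybar_\lambda$ of $r$-tuples, uses the \Lovasz-style Lemma~\ref{lem:Lovasz} together with closure under $\oplus,\otimes$ to show \emph{non-constructively} that every vector in $\{0,1\}^\lambda$ is $H$-implementable (Corollary~\ref{cor:implementable}), and then implements the indicator of the orbit of~$\bar a$ to isolate the desired count. Your proof bypasses this machinery entirely: you write $|\Aut(H)|\cdot|\Homs{J}{H}|$ directly as a sum over~$\bar c$ weighted by $|\InjHoms{(H,\bar a)}{(H,\bar c)}|$, expand the injective counts by M\"obius inversion over the partition lattice of $V(H)$, and glue to obtain an \emph{explicit} finite list of unlabelled instances~$G_\pi$. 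What you gain is constructiveness and brevity; what the paper's vector-implementation framework buys is reusability (the same machinery yields Corollary~\ref{cor:Lovasz}, used later for non-constructive gadget existence) and a cleaner separation between ``distinguishing orbits'' and ``pinning''.

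Two small remarks. First, your parenthetical ``this is the role of Lemma~\ref{lem:Lovasz}'' is slightly off: you do not use that lemma's statement at all, only the classical inversion identity that also appears inside its proof. Second, your claim that each $G_\pi$ is simple and loopless is not automatic: if $a_i=a_j$ for $i\neq j$ then $v_i$ and $v_j$ are merged in every $G_\pi$, and an edge $(v_i,v_j)\in E(G)$ would create a loop. This is harmless---such an input already has $|\Homs{J}{H}|=0$ and can be filtered out in a preprocessing step---but it deserves one sentence rather than being asserted.
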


The reader who is prepared to take Theorem~\ref{thm:partlabcol} on
trust may safely skip the rest of this section.  The theorem itself is
used in later sections but the details of its proof are not.

To prove the theorem, we need to develop some machinery.  This closely
follows the presentation of similar material by Faben and
Jerrum~\cite{FJ13} and our earlier paper~\cite{GGR14:Cactus} which, in
turn, draw on the work of \Lovasz~\cite{Lov67:OpStruct} and Hell and
\Nesetril~\cite{HN04:HomBook}.  This duplication is unfortunate but,
at the end of the section, we explain how the results we have
presented are subtly different from those in the literature
so existing results could not be reused directly.

After stating some elementary group theory results that we need, we
prove in Section~\ref{sec:partlab:Lovasz} a version of a result
originally due to \Lovasz{}. This (Lemma~\ref{lem:Lovasz}) states
that, if graphs with
distinguished 
vertices
$(H,\ybar)$ and~$(H'\!,\ybar')$ are 
non-isomorphic,
there is a graph $(G,\xbar)$ that has an odd number of homomorphisms
to one of $(H,\ybar)$ and~$(H'\!,\ybar')$ and an even number of
homomorphisms to the other.  Taking $H'=H$, this allows us to
distinguish two tuples of vertices in~$H$ from one another, as long as
they are not in the same orbit of $\Aut(H)$.

This is not quite enough for pinning, as it doesn't give us control
over which of the two graphs receives an odd number of homomorphisms
from $(G,\xbar)$.  In Section~\ref{sec:partlab:impvec}, we solve this
problem algebraically, adapting a technique of Faben and
Jerrum~\cite{FJ13}.  This allows us to prove
Theorem~\ref{thm:partlabcol} in Section~\ref{sec:partlab:pinning} and
thereby implement the pinning we need for our reductions.

\subsection{Group-theoretic background}
\label{sec:partlab:groups}

We will require two results from group theory.  For the first, see, e.g., 
\cite[Theorem~13.1]{Arm88:GroupSym}.

\begin{theorem}[Cauchy's group theorem]
\label{thm:Cauchy}
    If $\calG$ is a finite group and a prime~$p$ divides~$|\calG|$,
    then $\calG$~contains an element of order~$p$.
\end{theorem}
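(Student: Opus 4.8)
The plan is to give the standard short group-action proof due to McKay, which avoids any Sylow-theory machinery. First I would set $\calG$ to be a finite group with $p \mid |\calG|$ and consider the set
\[
    X = \{(g_1, g_2, \dots, g_p) \in \calG^p : g_1 g_2 \cdots g_p = e\},
\]
where $e$ is the identity of $\calG$. The key observation is that a $p$-tuple in $X$ is determined by its first $p-1$ coordinates, since the last coordinate must be $g_p = (g_1 \cdots g_{p-1})^{-1}$; and conversely any choice of $g_1, \dots, g_{p-1}$ yields a tuple in $X$. Hence $|X| = |\calG|^{p-1}$, which is divisible by $p$ because $p$ divides $|\calG|$.

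Next I would let the cyclic group $\mathbb{Z}/p\mathbb{Z}$ act on $X$ by cyclic rotation of coordinates: the generator sends $(g_1, g_2, \dots, g_p)$ to $(g_2, \dots, g_p, g_1)$. The point requiring a small check is that this is well-defined, i.e.\ that the rotated tuple is still in $X$: if $g_1 g_2 \cdots g_p = e$ then $g_2 \cdots g_p g_1 = g_1^{-1}(g_1 g_2 \cdots g_p) g_1 = g_1^{-1} e\, g_1 = e$. Since $p$ is prime, every orbit of this action has size either $1$ or $p$. An orbit has size $1$ exactly when the tuple is fixed by rotation, i.e.\ when all coordinates are equal, say $g_1 = \dots = g_p = h$ with $h^p = e$.

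Finally, I would count modulo $p$. Writing $F$ for the number of size-$1$ orbits (equivalently, the number of $h \in \calG$ with $h^p = e$), we get $|X| \equiv F \pmod p$ because the remaining orbits each contribute a multiple of $p$. Since $|X| = |\calG|^{p-1} \equiv 0 \pmod p$, we conclude $F \equiv 0 \pmod p$. But $F \geq 1$ because the tuple $(e, \dots, e)$ is always a fixed point, so $F \geq p \geq 2$. Therefore there exists $h \neq e$ with $h^p = e$; such an element has order dividing $p$, and since $p$ is prime and $h \neq e$, its order is exactly $p$. The only step that needs genuine care rather than routine verification is the well-definedness of the rotation action (the conjugation identity above), and the rest is bookkeeping; no serious obstacle arises.
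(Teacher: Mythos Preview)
Your proof is correct; this is the standard McKay counting argument and every step checks out. However, the paper does not actually prove Cauchy's theorem at all: it simply states the result and cites Armstrong's textbook \cite[Theorem~13.1]{Arm88:GroupSym} as a reference, treating it as background. So there is no ``paper's own proof'' to compare against, and your self-contained argument goes beyond what the authors provide.
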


For a permutation group~$\calG$ acting on a set~$X$, the \emph{orbit}
of an element $x\in X$ is the set $\Orb_{\calG}(x) = \{\pi(x)\mid
\pi\in\calG\}$.  For a graph~$H$, we will abuse notation mildly by writing
$\Orb_H(\cdot)$ instead of $\Orb_{\Aut{H}}(\cdot)$.

The following is a corollary of the orbit--stabiliser
theorem \cite[Corollary~17.3]{Arm88:GroupSym}.

\begin{theorem}
\label{thm:OrbStab}
    Let $\calG$ be a finite permutation group acting on a set~$X$.
    For every $x\in X$, $\left|\Orb_{\calG}(x)\right|$ divides~$|\calG|$.
\end{theorem}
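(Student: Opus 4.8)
The plan is to derive this directly from the orbit--stabiliser theorem as cited (\cite[Corollary~17.3]{Arm88:GroupSym}), which states that for a finite permutation group $\calG$ acting on a set $X$ and any $x \in X$, one has $|\calG| = |\Orb_{\calG}(x)| \cdot |\mathrm{Stab}_{\calG}(x)|$, where $\mathrm{Stab}_{\calG}(x) = \{\pi \in \calG : \pi(x) = x\}$ is the stabiliser subgroup of $x$. Given this identity, the quantity $|\mathrm{Stab}_{\calG}(x)|$ is a positive integer, and so $|\Orb_{\calG}(x)|$ divides $|\calG|$ with quotient exactly $|\mathrm{Stab}_{\calG}(x)|$. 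That is essentially the entire argument.

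First I would recall the definition of the stabiliser and note that it is a subgroup of $\calG$: it contains the identity (which fixes $x$), it is closed under composition (if $\pi(x)=x$ and $\sigma(x)=x$ then $(\pi\circ\sigma)(x)=x$), and it is closed under inverses (if $\pi(x)=x$ then $\pi^{-1}(x)=x$). Since $\calG$ is finite, so is $\mathrm{Stab}_{\calG}(x)$. Then I would invoke the orbit--stabiliser theorem in the product form above; the cited corollary may be stated in the ``size of orbit equals index of stabiliser'' form, $|\Orb_{\calG}(x)| = [\calG : \mathrm{Stab}_{\calG}(x)]$, which is equivalent by Lagrange's theorem and gives the divisibility immediately. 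Concluding, since $|\calG| = |\Orb_{\calG}(x)|\cdot|\mathrm{Stab}_{\calG}(x)|$ and the right-hand factor is a positive integer, $|\Orb_{\calG}(x)|$ divides $|\calG|$, as required. \qed

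There is no real obstacle here: the statement is an immediate corollary of a standard theorem quoted in the same paragraph, and the only thing to check is that the stabiliser is genuinely a subgroup (so that its order is a well-defined positive integer dividing $|\calG|$, or equivalently so that the index is an integer). I expect the paper itself simply states the result as a one-line consequence of \cite[Corollary~17.3]{Arm88:GroupSym} without further comment.
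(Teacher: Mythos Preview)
Your proposal is correct and matches the paper's approach exactly: the paper does not give a proof at all, merely citing the result as a corollary of the orbit--stabiliser theorem \cite[Corollary~17.3]{Arm88:GroupSym}, precisely as you anticipated in your final paragraph.
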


These two theorems have the following corollary about the size of
orbits under the automorphism group of involution-free graphs.

\begin{corollary}
\label{cor:odd-orbit}
    Let $H$~be an involution-free graph.  Every orbit of a tuple
    $\ybar\in V(H)^r$ under the action of $\Aut(H)$ has odd
    cardinality.
\end{corollary}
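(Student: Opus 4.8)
The plan is to combine the two group-theoretic facts just recalled in this subsection. First I would note that, since $H$ is a finite graph, $\Aut(H)$ is a finite group, so both Cauchy's theorem (Theorem~\ref{thm:Cauchy}) and the orbit--stabiliser corollary (Theorem~\ref{thm:OrbStab}) are applicable to it. The group $\Aut(H)$ acts on the set $X = V(H)^r$ coordinatewise, and $\Orb_H(\ybar) = \Orb_{\Aut(H)}(\ybar)$ is an orbit of this action, so Theorem~\ref{thm:OrbStab} tells us that $|\Orb_H(\ybar)|$ divides $|\Aut(H)|$.

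The key observation is that, by definition, an involution is precisely an automorphism of order~$2$; hence "$H$ is involution-free" is exactly the statement that $\Aut(H)$ has no element of order~$2$. Applying the contrapositive of Cauchy's theorem with the prime $p = 2$, this forces $2 \nmid |\Aut(H)|$, i.e.\ $|\Aut(H)|$ is odd. Since $|\Orb_H(\ybar)|$ divides this odd number, it is itself odd, which is the claim. (Equivalently, one can argue by contradiction: if some orbit had even cardinality then $2 \mid |\Orb_H(\ybar)| \mid |\Aut(H)|$, so Cauchy's theorem would produce an element of order~$2$ in $\Aut(H)$ --- an involution --- contradicting the hypothesis.)

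I do not expect any genuine obstacle here; this is a short deduction from the two quoted theorems. The only points that merit a word of care are the finiteness of $\Aut(H)$ (needed so that Cauchy's theorem and orbit--stabiliser apply), the fact that $\Aut(H)$ acts on $r$-tuples coordinatewise so that $\Orb_H(\ybar)$ really is an orbit of a permutation group, and spelling out that "involution-free" is literally the hypothesis "$\Aut(H)$ contains no element of order $2$" required to invoke the contrapositive of Theorem~\ref{thm:Cauchy}.
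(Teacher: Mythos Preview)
Your proposal is correct and follows essentially the same argument as the paper: use the contrapositive of Cauchy's theorem (Theorem~\ref{thm:Cauchy}) with $p=2$ to conclude that $|\Aut(H)|$ is odd, then apply the orbit--stabiliser corollary (Theorem~\ref{thm:OrbStab}) to the natural action on $V(H)^r$ to deduce that every orbit size divides $|\Aut(H)|$ and is therefore odd. The extra remarks you make about finiteness and the coordinatewise action are reasonable points of care but not something the paper spells out.
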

\begin{proof}
    By Theorem~\ref{thm:Cauchy}, $|\Aut(H)|$
    is odd, since the group
    contains no element of order~$2$.  Consider the natural action of
    $\Aut(H)$ on $V(H)^r\!$.  By Theorem~\ref{thm:OrbStab}, the size
    of the orbit of~$\ybar$ in~$H$ divides $|\Aut(H)|$ so is also odd.
\end{proof}

\subsection{A \Lovasz{}-style lemma}
\label{sec:partlab:Lovasz}

\Lovasz{} proved that two graphs $H$ and~$H'$ are isomorphic if and
only if $|\Homs{G}{H}| = |\Homs{G}{H'}|$ for every graph~$G$ (in
fact, he proved the analogous result for general relational structures
but we do not need this here).  We show that this result remains true
even if we replace equality of the number of homomorphisms with
equivalence modulo~$2$.  Faben and Jerrum also showed this 
\cite[Lemma~3.13]{FJ13}, though in a less general setting than the one
  that we need.
Our proof is based on the presentation of \cite[Section~2.3]{HN04:HomBook}.

For the proof we need some definitions, which are used only in this
section.  We say that two $r$-tuples $\xbar$ and~$\ybar$ \emph{have
  the same equality type} if, for all $i,j\in [r]$, $x_i=x_j$ if and
only if $y_i=y_j$.  Let $\InjHoms{(G,\xbar)}{(H,\ybar)}$ be the set of
injective homomorphisms from $(G, \xbar)$ to~$(H,\ybar)$.

Before proving the main lemma, we prove a simple fact about injective
homomorphisms and equality types of distinguished variables.

\begin{lemma}
\label{lem:eq-type}
    Let $(G,\xbar)$ and~$(H,\ybar)$ be graphs, each with
    $r$~distinguished vertices.  If $\xbar$ and~$\ybar$ do not have
    the same equality type, then $|\InjHoms{(G,\xbar)}{(H,\ybar)}| =
    0$.
\end{lemma}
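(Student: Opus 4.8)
The plan is a direct case analysis based on the two ways in which $\xbar$ and $\ybar$ can fail to have the same equality type. Since they do not have the same equality type, there are indices $i,j\in[r]$ such that exactly one of the statements $x_i=x_j$ and $y_i=y_j$ holds. Recall that a homomorphism $\sigma$ from $(G,\xbar)$ to $(H,\ybar)$ is a homomorphism from $G$ to $H$ with $\sigma(x_k)=y_k$ for every $k\in[r]$, and that $\InjHoms{(G,\xbar)}{(H,\ybar)}$ consists of those $\sigma$ that are injective on $V(G)$.

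First I would handle the case $x_i=x_j$ but $y_i\neq y_j$. Any homomorphism $\sigma$ from $(G,\xbar)$ to $(H,\ybar)$ would have to satisfy $\sigma(x_i)=y_i$ and $\sigma(x_j)=y_j$; but $x_i=x_j$ forces $\sigma(x_i)=\sigma(x_j)$, hence $y_i=y_j$, a contradiction. So in this case $\Homs{(G,\xbar)}{(H,\ybar)}$ is empty, and a fortiori so is $\InjHoms{(G,\xbar)}{(H,\ybar)}$. Next I would handle the case $x_i\neq x_j$ but $y_i=y_j$. Here a homomorphism $\sigma$ from $(G,\xbar)$ to $(H,\ybar)$ may well exist, but any such $\sigma$ satisfies $\sigma(x_i)=y_i=y_j=\sigma(x_j)$ with $x_i\neq x_j$, so $\sigma$ is not injective; hence $\InjHoms{(G,\xbar)}{(H,\ybar)}$ is again empty. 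In both cases $|\InjHoms{(G,\xbar)}{(H,\ybar)}|=0$, which is what we want.

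There is essentially no obstacle: the statement follows immediately from unwinding the definitions. The only point to be a little careful about is to keep the two cases distinct, since in the first case one proves the stronger fact that there is no homomorphism at all (injective or not), whereas in the second case non-injective homomorphisms may still exist and only injectivity fails.
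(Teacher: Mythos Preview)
Your proof is correct and follows essentially the same approach as the paper: a two-case analysis on whether $x_i=x_j$ with $y_i\neq y_j$ (no homomorphism exists at all) or $x_i\neq x_j$ with $y_i=y_j$ (any homomorphism fails to be injective). You even highlight the same distinction the paper makes between these cases.
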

\begin{proof}
    If there are $i,j\in[r]$ such that $x_i=x_j$ but $y_i\neq y_j$,
    then there are no homomorphisms (injective or otherwise) from
    $(G,\xbar)$ to $(H,\ybar)$, since $x_i$~cannot be mapped
    simultaneously to both $y_i$ and~$y_j$.  Otherwise, there must be
    $i,j\in[r]$ such that $x_i\neq x_j$ but $y_i=y_j$.  Then no
    homomorphism~$\eta$ can be injective because we must have
    $\eta(x_i) = \eta(x_j) = y_i$.
\end{proof}

\begin{lemma}
\label{lem:Lovasz}
    Let $(H, \ybar)$ and $(H'\!, \ybar')$ be involution-free graphs,
    each with $r$~distinguished vertices.  
Then
    $(H, \ybar) \isoto (H'\!,
    \ybar')$ if and only if, for all (not necessarily connected)
    graphs $(G,\xbar)$ with $r$~distinguished vertices,
    \begin{equation}
    \label{eq:Lovasz}
        |\Homs{(G,\xbar)}{(H,\ybar)}|
            \equiv |\Homs{(G,\xbar)}{(H'\!,\ybar')}| \pmod2\,.
    \end{equation}
\end{lemma}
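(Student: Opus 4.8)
The plan is to adapt \Lovasz's classical argument to the modular setting, following the presentation in~\cite[Section~2.3]{HN04:HomBook}. The ``only if'' direction is immediate: an isomorphism $\rho$ witnessing $(H,\ybar)\isoto(H'\!,\ybar')$ induces a bijection $\sigma\mapsto\rho\circ\sigma$ from $\Homs{(G,\xbar)}{(H,\ybar)}$ to $\Homs{(G,\xbar)}{(H'\!,\ybar')}$, so these sets have equal cardinality and~\eqref{eq:Lovasz} holds for every $(G,\xbar)$.

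For the converse I would proceed in three steps. First, reduce homomorphism counting to injective homomorphism counting: every homomorphism from $(G,\xbar)$ factors uniquely as a surjection onto a quotient followed by an injective homomorphism, so
\[
    |\Homs{(G,\xbar)}{(H,\ybar)}| \;=\; \sum_{(F,\bar w)} a_{(G,\xbar),(F,\bar w)}\,|\InjHoms{(F,\bar w)}{(H,\ybar)}|\,,
\]
a finite sum over isomorphism classes of graphs with $r$ distinguished vertices, where $a_{(G,\xbar),(F,\bar w)}$ counts the equivalence relations $\theta$ on $V(G)$ for which $(G/\theta,\xbar/\theta)\isoto(F,\bar w)$. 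Here equivalence relations identifying the two endpoints of an edge produce a quotient with a loop and hence contribute $0$ (this is the one place loops are needed, as flagged in Section~\ref{sec:notation}), and relations for which $\xbar/\theta$ and $\ybar$ have different equality types also contribute $0$ by Lemma~\ref{lem:eq-type}.

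The second step is to invert this relation modulo~$2$. Ordering isomorphism classes of $r$-distinguished graphs by their number of vertices, we have $a_{(G,\xbar),(F,\bar w)}=0$ unless $|V(F)|\le|V(G)|$, and the only $\theta$ with $|V(G/\theta)|=|V(G)|$ is the trivial one, so $a_{(G,\xbar),(G,\xbar)}=1$ while $a_{(G,\xbar),(F,\bar w)}=0$ for every other $(F,\bar w)$ of the same order. Thus the matrix $A=\big(a_{(G,\xbar),(F,\bar w)}\big)$ is lower-triangular with unit diagonal, hence invertible over $\mathbb{F}_2$ (when restricted to any finite set of classes closed under taking quotients). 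Since by hypothesis the left-hand side above agrees modulo~$2$ for $(H,\ybar)$ and $(H'\!,\ybar')$ for every $(G,\xbar)$, inverting $A$ over $\mathbb{F}_2$ yields $|\InjHoms{(F,\bar w)}{(H,\ybar)}|\equiv|\InjHoms{(F,\bar w)}{(H'\!,\ybar')}|\pmod{2}$ for every $(F,\bar w)$.

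The final step uses involution-freeness. An injective homomorphism from a finite graph to itself is a vertex bijection, hence an edge bijection, hence edge-reflecting, so it is an automorphism; therefore $|\InjHoms{(H,\ybar)}{(H,\ybar)}|=|\Aut(H,\ybar)|$, the order of the stabiliser of $\ybar$ in $\Aut(H)$. By Theorem~\ref{thm:Cauchy}, $|\Aut(H)|$ is odd (as in the proof of Corollary~\ref{cor:odd-orbit}), so $|\Aut(H,\ybar)|$ is odd, and likewise $|\Aut(H'\!,\ybar')|$ is odd. Taking $(F,\bar w)=(H,\ybar)$ in the congruence from the previous step forces $|\InjHoms{(H,\ybar)}{(H'\!,\ybar')}|$ to be odd, in particular nonzero, so there is an injective homomorphism $(H,\ybar)\to(H'\!,\ybar')$; symmetrically, using the $H'$ version, there is one in the reverse direction. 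Injective homomorphisms both ways force $|V(H)|=|V(H')|$, and then either of them is a vertex bijection that preserves and (by the same counting argument) reflects edges while matching the distinguished vertices, hence an isomorphism, giving $(H,\ybar)\isoto(H'\!,\ybar')$. The essential points are the unitriangularity, which makes the transformation invertible modulo~$2$, and the oddness of $|\Aut(H,\ybar)|$ --- the sole use of the involution-free hypothesis, and the reason the lemma would fail without it; the most delicate bookkeeping, rather than the main difficulty, lies in setting up the quotient identity of the first step in the presence of repeated distinguished vertices and loops in quotients.
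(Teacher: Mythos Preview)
Your proposal is correct and follows essentially the same approach as the paper: both reduce homomorphism counts to injective homomorphism counts via quotients, invert this relation (you phrase it as inverting a unitriangular matrix over $\mathbb{F}_2$, the paper as an induction on $|V(G)|$, but these are the same back-substitution), and then use Cauchy's theorem and involution-freeness to get odd automorphism counts and hence injective homomorphisms in both directions. The only cosmetic difference is that the paper first verifies that $\ybar$ and $\ybar'$ share the same equality type as a separate step (needed for its explicit induction base case), whereas your matrix formulation absorbs this into the triangularity argument.
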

\begin{proof}
    If $(H,\ybar)$ and $(H'\!,\ybar')$ are isomorphic, it follows
    trivially that \eqref{eq:Lovasz}~holds for all graphs $(G,\xbar)$.
    For the other direction, suppose that \eqref{eq:Lovasz}~holds for
    all $(G,\xbar)$.

    First, we claim that this implies that $\ybar$ and~$\ybar'$ have
    the same equality type.  If they have different equality types
    then, without loss of generality, we may assume that there are
    distinct indices $i$ and~$j$ such that $y_i=y_j$ but $y'_i\neq
    y'_j$.  Let $G$~be
    the graph on vertices $\{y_1, \dots, y_r\}$ with no edges: we see
    that $|\Homs{(G, \ybar)}{(H,\ybar)}| = 1 \neq
    |\Homs{(G,\ybar)}{(H'\!,\ybar')}|=0$, contradicting the assumption
    that \eqref{eq:Lovasz} holds for all~$G$.

    Second, we show by induction on the number of vertices in~$G$
    that, if \eqref{eq:Lovasz}~holds for all $(G,\xbar)$ then, for all
    $(G,\xbar)$,
    \begin{equation}
    \label{eq:injective}
        |\InjHoms{(G,\xbar)}{(H,\ybar)}|
            \equiv |\InjHoms{(G,\xbar)}{(H'\!,\ybar')}| \pmod2\,,
    \end{equation}
    Specifically, under the assumption that
    \eqref{eq:Lovasz}~holds for all~$(G,\xbar)$, we show that
    \eqref{eq:injective} holds for all $(G,\xbar)$ with $|V(G)|\leq
    n_0$ for a suitable value~$n_0$ and that, if
    \eqref{eq:injective} holds for all $(G,\xbar)$ with $|V(G)|<n$, it
    also holds for any $(G,\xbar)$ with $|V(G)|=n$.

    Let $n_0 = |\{y_1, \dots, y_r\}| = |\{y'_1, \dots, y'_r\}|$ be the
    number of distinct elements in $\ybar$.  For the base case of the
    induction, consider any graph $(G, \xbar)$ with $|V(G)|\leq
    n_0$.  If $\xbar$~does not have the same equality type as $\ybar$
    and~$\ybar'$ (which is guaranteed if $|V(G)| < n_0$) then, by
    Lemma~\ref{lem:eq-type},
    \begin{equation*}
        |\InjHoms{(G,\xbar)}{(H,\ybar)}|
            = |\InjHoms{(G,\xbar)}{(H'\!,\ybar')}| = 0\,.
    \end{equation*}
    If $\xbar$~has the same equality type as $\ybar$ and~$\ybar'\!$
    then, in particular, every vertex of~$G$ is distinguished.  Any
    homomorphism from $(G,\xbar)$ to $(H,\ybar)$ or~$(H'\!,\ybar')$ is
    injective so we have
    \begin{align*}
        |\InjHoms{(G,\xbar)}{(H,\ybar)}|
            &= |\Homs{(G,\xbar)}{(H,\ybar)}|           \\
            &= |\Homs{(G,\xbar)}{(H'\!,\ybar')}|       \\
            &= |\InjHoms{(G,\xbar)}{(H'\!,\ybar')}|\,,
    \end{align*}
    where the second equality is by the assumption that
    \eqref{eq:Lovasz}~holds for~$(G,\xbar)$.

    For the inductive step, let $n>n_0$ and assume that
    \eqref{eq:injective}~holds for all $(G,\xbar)$ with $|V(G)|<n$.
    Now, consider some $(G,\xbar)$ with $|V(G)|=n$.

    Given any homomorphism~$\sigma$ from $(G,\xbar)$ to $(H,\ybar)$, we
    can define an equivalence relation~$\theta$ on $V(G)$
    by $(u,v)\in \theta$ if and only if $\sigma(u)=\sigma(v)$.  (Note
    that, if $\sigma$~is injective, then $\theta$~is just the equality
    relation on~$V(G)$.)
    Write $\eqclass{u}$~for the $\theta$-equivalence class
    of a vertex $u\in V(G)$.  Let $G/\theta$ be the graph whose vertex
    set is $\{\eqclass{u}\mid u\in V(G)\}$ and whose edge set is
    $\{(\eqclass{u},\eqclass{v})\mid (u,v)\in E(G)\}$.  For graphs
    with distinguished vertices, we write $(G,x_1, \dots, x_r)/\theta =
    (G/\theta, \eqclass{x_1}, \dots, \eqclass{x_r})$. The
    homomorphism~$\sigma$ from $(G,\xbar)$ to $(H,\ybar)$ corresponds
    to an injective homomorphism from $(G,\xbar)/\theta$ to
    $(H,\ybar)$.

    Note that, if there are adjacent vertices $u$ and~$v$ in~$G$ such
    that $(u,v)\in\theta$ for some equivalence relation~$\theta$, the
    graph $G/\theta$ has a self-loop on the vertex~$\eqclass{u}$.
    This is not a problem.  Because $H$~is loop-free, there are no
    homomorphisms (injective or otherwise) from such a graph
    $G/\theta$ to~$H$.  For the same reason, there are no
    homomorphisms from $G$ to~$H$ that map adjacent vertices $u$
    and~$v$ to the same place.  Therefore, this particular~$\theta$
    does not correspond to any homomorphism from $G$ to~$H$ and
    contributes zero to the sums below, as required.

    We have
    \begin{align*}
        |\Homs{(G,\xbar)}{(H,\ybar)}|
            &= |\InjHoms{(G,\xbar)}{(H,\ybar)}|
                  + \sum_\theta |\InjHoms{(G,\xbar)/\theta}{(H,\ybar)}|\\
        |\Homs{(G,\xbar)}{(H'\!,\ybar')}|
            &= |\InjHoms{(G,\xbar)}{(H'\!,\ybar')}|
                  + \sum_\theta |\InjHoms{(G,\xbar)/\theta}{(H'\!,\ybar')}|\,,
    \end{align*}
    where the sums are over all equivalence
    relations~$\theta$, except for the equality relation.

    The left-hand sides of these equations are equivalent modulo~$2$
    by assumption.  The sums over~$\theta$ on the right are equivalent
    modulo~$2$ by the inductive hypothesis since $\theta$~is not
    the equality relation, so $G/\theta$~has fewer vertices than~$G$.
    Therefore, \eqref{eq:injective}~holds for the graph under
    consideration.

    Finally, it remains to prove that \eqref{eq:injective}~holding for
    all $(G, \xbar)$ implies that $(H,\ybar) \isoto (H'\!,\ybar')$.
    To see this, take $(G,\xbar) = (H,\ybar)$.  An injective
    homomorphism from a graph to itself is an automorphism and, 
since
 $(H,\ybar)$ is   involution-free,
$\Aut(H,\ybar)$
has no element of order~$2$, so
$|\Aut(H,\ybar)|$ is odd by Cauchy's
    group theorem (Theorem~\ref{thm:Cauchy}).
    By~\eqref{eq:injective}, there are an odd number
    of injective homomorphisms from $(H,\ybar)$ to $(H'\!,\ybar')$,
    which means that there is at least one such homomorphism.
    Similarly, taking $(G,\xbar) = (H'\!,\ybar')$ shows that there is
    an injective homomorphism from $(H'\!,\ybar')$ to $(H,\ybar)$ and,
    therefore, the two graphs are isomorphic.
\end{proof}

For our nonconstructive proof that some gadgets exist, we use the
following corollary of the proof of Lemma~\ref{lem:Lovasz}, which
restricts   to a certain class of connected graphs.  

\begin{corollary}\label{cor:Lovasz}
    Let $(H, \ybar)$ and $(H'\!, \ybar')$ be connected, involution-free 
     graphs, each with $r$~distinguished vertices, such that $H[\ybar]$
    and~$H'[\ybar']$ are also connected.  Then
    $(H, \ybar) \isoto (H'\!, \ybar')$ if and only if
    \eqref{eq:Lovasz}~holds for all connected graphs $(G,\xbar)$ with
    $r$~distinguished vertices such that $G[\xbar]$ is connected.
\end{corollary}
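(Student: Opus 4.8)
The plan is to re-run the proof of Lemma~\ref{lem:Lovasz} almost verbatim, checking at each stage that the auxiliary graphs it uses can be taken to be connected with a connected induced subgraph on their distinguished vertices --- with one exception, which is where the only real work lies. The forward direction is immediate: if $(H,\ybar)\isoto(H'\!,\ybar')$ then \eqref{eq:Lovasz} holds for \emph{all} graphs $(G,\xbar)$, in particular for those in the restricted class. So assume \eqref{eq:Lovasz} holds for every connected $(G,\xbar)$ with $G[\xbar]$ connected, and deduce $(H,\ybar)\isoto(H'\!,\ybar')$.

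First I would establish that $\ybar$ and $\ybar'$ have the same equality type. The proof of Lemma~\ref{lem:Lovasz} does this with an edgeless graph, which is not connected, so instead I would take $(G,\xbar)=(H[\ybar],\ybar)$: this graph is connected and $G[\xbar]=G$ is connected, by hypothesis, so it lies in the restricted class. Every vertex of $G$ is distinguished, so the only candidate homomorphism from $(G,\xbar)$ to $(H,\ybar)$ is the inclusion of $H[\ybar]$ into $H$ (which is a homomorphism since $H[\ybar]$ is an induced subgraph), giving $|\Homs{(G,\xbar)}{(H,\ybar)}|=1$. If the equality types differed, then, without loss of generality (using the symmetry of the hypotheses, which also gives $H'[\ybar']$ connected), some vertex $y_i=y_j$ of $G$ would have to be mapped simultaneously to distinct vertices $y'_i\neq y'_j$, forcing $|\Homs{(G,\xbar)}{(H'\!,\ybar')}|=0$ and contradicting \eqref{eq:Lovasz}.

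Next I would prove the injective version \eqref{eq:injective} for all connected $(G,\xbar)$ with $G[\xbar]$ connected, by induction on $|V(G)|$, following Lemma~\ref{lem:Lovasz}. Let $n_0$ be the number of distinct entries of $\ybar$. The base case ($|V(G)|\le n_0$) is unchanged: either $\xbar$ has a different equality type from $\ybar$ and both sides vanish by Lemma~\ref{lem:eq-type}, or every vertex of $G$ is distinguished, every homomorphism out of $(G,\xbar)$ is injective, and \eqref{eq:Lovasz} applies since $G[\xbar]=G$ is connected. In the inductive step, the decomposition of $|\Homs{(G,\xbar)}{(H,\ybar)}|$ as $|\InjHoms{(G,\xbar)}{(H,\ybar)}|$ plus the sum of $|\InjHoms{(G,\xbar)/\theta}{(H,\ybar)}|$ over non-trivial equivalence relations $\theta$ (and its analogue for $(H'\!,\ybar')$) carries over; the left-hand sides match modulo~$2$ by \eqref{eq:Lovasz}, which is applicable because $(G,\xbar)$ is in the class, and each quotient $(G,\xbar)/\theta$ has strictly fewer vertices and --- the one point to verify --- is still connected with a connected induced subgraph on its distinguished vertices, since contracting an equivalence relation cannot disconnect a graph and $G[\xbar]/\theta$ is a spanning subgraph of $(G/\theta)[\eqclass{x_1},\dots,\eqclass{x_r}]$. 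Hence the inductive hypothesis (or, for those $\theta$ where $G/\theta$ acquires a self-loop, the fact that both terms are $0$ as in Lemma~\ref{lem:Lovasz}) makes the two sums agree modulo~$2$.

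Finally, I would apply \eqref{eq:injective} with $(G,\xbar)=(H,\ybar)$ and then with $(G,\xbar)=(H'\!,\ybar')$, both of which lie in the restricted class by hypothesis. Since $H$ and $H'$ are involution-free, $|\Aut(H,\ybar)|$ and $|\Aut(H'\!,\ybar')|$ are odd by Cauchy's theorem (Theorem~\ref{thm:Cauchy}), so \eqref{eq:injective} forces an odd --- hence positive --- number of injective homomorphisms in both directions between the two graphs, whence they are isomorphic, exactly as at the end of the proof of Lemma~\ref{lem:Lovasz}. I expect the only genuine obstacle to be the equality-type step, namely producing a distinguishing witness that already lies inside the connected class; the choice $(G,\xbar)=(H[\ybar],\ybar)$ resolves it, and everything else is bookkeeping to confirm that the induction and the final step never leave the restricted class.
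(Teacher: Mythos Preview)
Your proposal is correct and follows essentially the same approach as the paper: both use $G=H[\ybar]$ to handle the equality-type step within the restricted class, both verify that quotients $(G,\xbar)/\theta$ remain ``appropriate'' (connected with connected distinguished subgraph) so the induction goes through, and both finish by substituting $(H,\ybar)$ and $(H'\!,\ybar')$ for $(G,\xbar)$ in~\eqref{eq:injective}. Your write-up is in fact slightly more careful than the paper's in two places --- you note explicitly that $G[\xbar]=G$ in the base case, and you justify connectedness of $(G/\theta)[\eqclass{x_1},\dots,\eqclass{x_r}]$ via the spanning-subgraph observation rather than asserting it is literally a quotient of $G[\xbar]$ --- but the argument is the same.
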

\begin{proof}
    For brevity, we refer to $(G,\xbar)$ as \emph{appropriate} if it
    is connected, it has $r$~distinguished vertices and $G[\xbar]$ is
    connected.  
    
    As 
in the proof of Lemma~\ref{lem:Lovasz},
the ``only if'' direction is trivial, 
 so we
 suppose that \eqref{eq:Lovasz} holds for all appropriate
    $(G,\xbar)$.
Also,    
    $\ybar$ and~$\ybar'$ must have the same equality
    type.  If they do not, we may assume there are distinct $i$
    and~$j$ with $y_i=y_j$ but $y'_i\neq y'_j$, and take
    $G = H[\ybar]$.  $(G,\ybar)$ is appropriate but we have
    $|\Homs{(G,\ybar)}{(H,\ybar)}| = 1 \neq
    |\Homs{(G,\ybar)}{(H'\!,\ybar')}| = 0$,
    which contradicts the assumption that \eqref{eq:Lovasz}~holds for
    all appropriate~$(G,\xbar)$.

    The proof that \eqref{eq:Lovasz}~holding for every appropriate~$G$
    implies that \eqref{eq:injective}~holds for every appropriate~$G$
    proceeds by induction on $|V(G)|$, as 
in the proof of the lemma.
The base cases are
    unchanged.  To see that the inductive step remains valid, let
    $(G,\xbar)$ be appropriate and let $\theta$ be any equivalence
    relation on~$V(G)$.  We claim that $(G,\xbar)/\theta$ is also
    appropriate.  
By construction, $(G,\xbar)/\theta$ has $r$ distinguished vertices.
    It is connected because it is the result of
    identifying vertices in a connected graph;
    $(G/\theta)[\eqclass{x_1}, \dots, \eqclass{x_r}]$ is connected for
    the same reason.

    This establishes that \eqref{eq:injective}~holds for all
    appropriate~$(G,\xbar)$.  Since $(H,\ybar)$ and $(H'\!,\ybar')$
    are both appropriate, we can complete the proof in the same way as
in the proof of Lemma~\ref{lem:Lovasz},
substituting each of these graphs in turn for~$(G,\xbar)$
    in~\eqref{eq:injective}.
\end{proof}

\subsection{Implementing vectors}
\label{sec:partlab:impvec}

The presentation in this section follows very closely that of Faben
and Jerrum~\cite{FJ13}, extended to $r$-tuples of distinguished
vertices.

\begin{definition}
\label{defn:lambdas}
Let $H$ be  an involution-free graph.
We refer to a list  $\ybar_1, \dots, \ybar_\lambda$ of 
elements of~$V(H)^r$ 
as an \emph{enumeration of~$V(H)^r$  up to isomorphism}
if,  for every $\ybar\in V(H)^r$, there is exactly one $i\in[\lambda]$ such that
$(H,\ybar)\isoto (H,\ybar_i)$.  \end{definition}

Note that the number~$\lambda$ of tuples in the enumeration depends
on~$H$.

\begin{definition}
    Let $(G,\xbar)$~be a graph with $r$~distinguished vertices.  We
    define the vector $\vecv_H(G, \xbar)\in\{0,1\}^\lambda$
    where, for each $i\in[\lambda]$, the $i$th component of
    $\vecv_H(G, \xbar)$ is given by
    \begin{equation*}
        \big(\vecv_H(G, \xbar)\big)_i
            \equiv |\Homs{(G,\xbar)}{(H,\ybar_i)}| \pmod2\,.
    \end{equation*}
    We say that $(G,\xbar)$ \emph{implements} this vector.
\end{definition}

Define $\oplus$ and~$\otimes$ to be, respectively, component-wise
addition and multiplication, modulo~$2$, of vectors in
$\{0,1\}^\lambda\!$.

\begin{lemma}
\label{lem:otimesvec}
    Let $\xbar = x_1 \dots x_r$ and let $(G_1,\xbar)$ and
    $(G_2,\xbar)$ be graphs such that $V(G_1)\cap V(G_2) = \{x_1,
    \dots, x_r\}$.  Then,
    \begin{equation*}
        \vecv_H(G_1\cup G_2, \xbar)
            = \vecv_H(G_1, \xbar) \otimes \vecv_H(G_2, \xbar)\,.
    \end{equation*}
\end{lemma}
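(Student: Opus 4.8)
The plan is to prove the identity componentwise: fix $i\in[\lambda]$ and show that $\big(\vecv_H(G_1\cup G_2,\xbar)\big)_i \equiv \big(\vecv_H(G_1,\xbar)\big)_i \cdot \big(\vecv_H(G_2,\xbar)\big)_i \pmod 2$, i.e.\ that $|\Homs{(G_1\cup G_2,\xbar)}{(H,\ybar_i)}| \equiv |\Homs{(G_1,\xbar)}{(H,\ybar_i)}| \cdot |\Homs{(G_2,\xbar)}{(H,\ybar_i)}| \pmod 2$. In fact I would prove the stronger statement that these two quantities are \emph{equal} as integers, which gives the congruence a fortiori; this is cleaner since $\otimes$ is only the mod-$2$ shadow of ordinary integer multiplication. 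The key observation is that a homomorphism from $(G_1\cup G_2,\xbar)$ to $(H,\ybar_i)$ is determined by, and determines, a pair consisting of its restriction to $G_1$ and its restriction to $G_2$.

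First I would fix the target $(H,\ybar_i)$ and write $\sigma$ for an arbitrary homomorphism in $\Homs{(G_1\cup G_2,\xbar)}{(H,\ybar_i)}$. Since $V(G_1\cup G_2)=V(G_1)\cup V(G_2)$, the restrictions $\sigma_1=\sigma|_{V(G_1)}$ and $\sigma_2=\sigma|_{V(G_2)}$ are well defined; each is a homomorphism from $G_j$ to $H$ because every edge of $G_j$ is an edge of $G_1\cup G_2$, and each respects the distinguished vertices since $\sigma(x_k)=(\ybar_i)_k$ for all $k\in[r]$ and each $x_k\in V(G_1)\cap V(G_2)$. So the map $\sigma\mapsto(\sigma_1,\sigma_2)$ sends $\Homs{(G_1\cup G_2,\xbar)}{(H,\ybar_i)}$ into $\Homs{(G_1,\xbar)}{(H,\ybar_i)}\times\Homs{(G_2,\xbar)}{(H,\ybar_i)}$.

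Next I would show this map is a bijection. Injectivity is immediate: if $\sigma$ and $\sigma'$ have the same restrictions to $V(G_1)$ and $V(G_2)$, then they agree on $V(G_1)\cup V(G_2)=V(G_1\cup G_2)$, so $\sigma=\sigma'$. For surjectivity, given $\sigma_1\in\Homs{(G_1,\xbar)}{(H,\ybar_i)}$ and $\sigma_2\in\Homs{(G_2,\xbar)}{(H,\ybar_i)}$, the crucial point is that they agree on the overlap $V(G_1)\cap V(G_2)=\{x_1,\dots,x_r\}$: indeed $\sigma_1(x_k)=(\ybar_i)_k=\sigma_2(x_k)$ for each $k$. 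Hence the function $\sigma\colon V(G_1\cup G_2)\to V(H)$ defined by $\sigma(v)=\sigma_1(v)$ for $v\in V(G_1)$ and $\sigma(v)=\sigma_2(v)$ for $v\in V(G_2)$ is well defined. It is a homomorphism: every edge of $G_1\cup G_2$ lies in $E(G_1)$ or $E(G_2)$ (as $G_1\cup G_2$ is the union, and by the hypothesis on the vertex sets no new edges appear between $G_1$ and $G_2$), so $\sigma$ maps it to an edge of $H$ because $\sigma_1$ or $\sigma_2$ does; and $\sigma$ respects $\xbar$ since $\sigma(x_k)=\sigma_1(x_k)=(\ybar_i)_k$. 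Its restrictions are $\sigma_1$ and $\sigma_2$ by construction, so the map is surjective, hence bijective. Therefore the cardinalities satisfy $|\Homs{(G_1\cup G_2,\xbar)}{(H,\ybar_i)}| = |\Homs{(G_1,\xbar)}{(H,\ybar_i)}|\cdot|\Homs{(G_2,\xbar)}{(H,\ybar_i)}|$, and reducing modulo $2$ gives the $i$th component of $\vecv_H(G_1,\xbar)\otimes\vecv_H(G_2,\xbar)$. Since $i$ was arbitrary, the lemma follows.

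This argument is essentially routine; the only subtlety worth flagging is the implicit definition of $G_1\cup G_2$. I would make sure the paper's convention is that the union of two graphs sharing exactly the vertices $x_1,\dots,x_r$ has edge set exactly $E(G_1)\cup E(G_2)$ (no edges are added across the gluing), since this is what makes both the ``$\sigma$ restricts to homomorphisms'' and the ``glued $\sigma$ is a homomorphism'' directions go through. Given that convention, there is no real obstacle here: the whole content is the standard fact that a homomorphism out of a pushout (here an amalgam over a discrete set of vertices, with the additional constraint that those vertices have prescribed images) is the same as a compatible pair of homomorphisms out of the pieces.
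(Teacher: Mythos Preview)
Your proof is correct and takes essentially the same approach as the paper: both argue that a homomorphism from $(G_1\cup G_2,\xbar)$ to $(H,\ybar)$ is exactly a pair of homomorphisms from $(G_1,\xbar)$ and $(G_2,\xbar)$ that agree on the overlap, and the overlap agreement is forced by the distinguished vertices. The paper's proof is just the one-sentence version of your detailed bijection argument.
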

\begin{proof}
    A function $\sigma\colon V(G_1)\cup V(G_2)\to V(H)$ is a
    homomorphism from $(G_1\cup G_2, \xbar)$ to~$(H,\ybar)$ if and
    only if, for each $i\in\{1,2\}$, the restriction of~$\sigma$ to
    $V(G_i)$ is a homomorphism from $(G_i,\xbar)$ to~$(H,\ybar)$.
\end{proof}

In contrast, given $(G_1,\xbar_1)$ and $(G_2, \xbar_2)$, it is not
obvious that there is a graph $(G,\xbar)$ such that $\vecv_H(G,\xbar)
= \vecv_H(G_1,\xbar_1) \oplus \vecv_H(G_2,\xbar_2)$.  Following Faben
and Jerrum~\cite{FJ13}, we side-step this issue by introducing a
formal sum of graphs.  Given graphs with distinguished vertices $(G_1,
\xbar_1), \dots, (G_t, \xbar_t)$, we define
\begin{equation*}
    \vecv_H\big((G_1,\xbar_1) + \dots + (G_t,\xbar_t)\big)
        = \vecv_H(G_1,\xbar_1) \oplus \dots \oplus \vecv_H(G_t,\xbar_t)
\end{equation*}
and we say that a vector $\vecv\in\{0,1\}^\lambda$ is
\emph{$H$-implementable} if it can be expressed as such a sum.

We require the following, which is essentially
\cite[Lemma~4.16]{FJ13}.

\begin{lemma}
    Let $S\subseteq \{0,1\}^\lambda$ be closed under $\oplus$
    and~$\otimes$.  If $1^\lambda\in S$ and, for every distinct
    $i,j\in[\lambda]$, there is a tuple $s=s_1\dots s_\lambda\in S$
    with $s_i\neq s_j$, then $S = \{0,1\}^\lambda\!$.
\end{lemma}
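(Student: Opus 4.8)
The plan is to show that $S$ contains, for every coordinate $k\in[\lambda]$, the unit vector $u^{(k)}\in\{0,1\}^\lambda$ that has a~$1$ in coordinate~$k$ and a~$0$ in every other coordinate. Since $1^\lambda\in S$ and $S$ is closed under~$\oplus$, we have $0^\lambda = 1^\lambda\oplus 1^\lambda\in S$; and once all the $u^{(k)}$ are known to lie in~$S$, every $v\in\{0,1\}^\lambda$ is the $\oplus$-sum of those $u^{(k)}$ with $v_k=1$ (reading the empty sum as~$0^\lambda$), hence $v\in S$ and therefore $S=\{0,1\}^\lambda$. So the whole content is in producing the~$u^{(k)}$.

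First I would record two immediate consequences of the hypotheses: $0^\lambda\in S$, as just noted, and, for any $s\in S$, its complement $\overline{s}=1^\lambda\oplus s$ also lies in~$S$ (this uses $1^\lambda\in S$ together with closure under~$\oplus$).

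Then, for a fixed $k\in[\lambda]$, I would build $u^{(k)}$ as follows. If $\lambda=1$ there is nothing to do, since $u^{(1)}=1^\lambda\in S$, so assume $\lambda\geq 2$. For each $j\in[\lambda]-k$, the hypothesis gives a tuple $t\in S$ with $t_k\neq t_j$; if $t_k=0$ we replace $t$ by~$\overline{t}$, which is still in~$S$ and still has unequal $k$th and $j$th coordinates, so we may assume $t_k=1$ and therefore $t_j=0$. Write $s^{(j)}$ for this normalised tuple. Now put $w=\bigotimes_{j\in[\lambda]-k}s^{(j)}$, which lies in~$S$ because $S$ is closed under~$\otimes$. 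Every factor has a~$1$ in coordinate~$k$, so $w_k=1$; and for each $j\neq k$ the factor $s^{(j)}$ has a~$0$ in coordinate~$j$, so $w_j=0$. Hence $w=u^{(k)}\in S$, which finishes the argument.

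I do not expect a genuine obstacle: the proof is elementary bookkeeping over the two-element field. The one point that needs a little care is the normalisation step — it is essential that $1^\lambda$ itself lies in~$S$ (not merely that $S$ is closed under the two operations), so that each witness tuple can be flipped to have a~$1$ in coordinate~$k$ before the $\otimes$-product is formed; only with that normalisation does the product annihilate every coordinate other than~$k$. One should also not forget the degenerate case $\lambda=1$ and the empty-sum convention giving $0^\lambda$.
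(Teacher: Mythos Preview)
Your proof is correct. The paper does not actually supply a proof of this lemma: it states the result and attributes it to Faben and Jerrum~\cite[Lemma~4.16]{FJ13}, so there is nothing in the paper to compare against beyond the citation. Your argument --- normalising each separating tuple so that its $k$th coordinate is~$1$ (using that $1^\lambda\in S$ and closure under~$\oplus$), then taking the $\otimes$-product over $j\neq k$ to isolate the unit vector~$u^{(k)}$, and finally reconstructing every vector as an $\oplus$-sum of unit vectors --- is the standard way to establish this and goes through without difficulty.
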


\begin{corollary}
\label{cor:implementable}
    Let $H$~be an involution-free graph.  Every
    $\vecv\in\{0,1\}^\lambda$ is $H$-implementable.
\end{corollary}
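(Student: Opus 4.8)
The plan is to apply the preceding lemma (essentially \cite[Lemma~4.16]{FJ13}) to the set $S\subseteq\{0,1\}^\lambda$ of $H$-implementable vectors. It then suffices to verify three things: that $S$ is closed under $\oplus$ and $\otimes$; that $1^\lambda\in S$; and that for every distinct $i,j\in[\lambda]$ there is some $s\in S$ with $s_i\neq s_j$.

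Closure under $\oplus$ is immediate, since concatenating two formal sums of graphs with distinguished vertices gives a formal sum of the same kind. For closure under $\otimes$, suppose $\vecv$ is implemented by $(G_1,\xbar_1)+\dots+(G_s,\xbar_s)$ and $\vecv'$ by $(G_1',\xbar_1')+\dots+(G_t',\xbar_t')$. Because component-wise multiplication distributes over component-wise addition modulo~$2$, $\vecv\otimes\vecv'$ equals the $\oplus$-sum over all pairs $(a,b)$ of $\vecv_H(G_a,\xbar_a)\otimes\vecv_H(G_b',\xbar_b')$, so it is enough to implement each of these products. If $\xbar_a$ and $\xbar_b'$ have the same equality type, then after renaming vertices we may assume $G_a$ and $G_b'$ share exactly their distinguished vertices (and agree on them), and Lemma~\ref{lem:otimesvec} shows that $(G_a\cup G_b',\xbar_a)$ implements $\vecv_H(G_a,\xbar_a)\otimes\vecv_H(G_b',\xbar_b')$. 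If the equality types differ, no tuple $\ybar_i$ can match both, so this product is $0^\lambda$, which is implementable as $(G_a,\xbar_a)+(G_a,\xbar_a)$ (any vector $\oplus$-ed with itself vanishes). Hence $\vecv\otimes\vecv'\in S$.

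To see $1^\lambda\in S$, for each equality type $\mathcal{E}$ on $\{1,\dots,r\}$ fix an $r$-tuple $\xbar^{\mathcal{E}}$ realising it and let $G^{\mathcal{E}}$ be the edgeless graph whose vertex set consists of the distinct entries of $\xbar^{\mathcal{E}}$. A homomorphism from $(G^{\mathcal{E}},\xbar^{\mathcal{E}})$ to $(H,\ybar_i)$ exists, and is then unique, exactly when $\ybar_i$ has equality type $\mathcal{E}$; so $(G^{\mathcal{E}},\xbar^{\mathcal{E}})$ implements the $0/1$ indicator of the set of $i\in[\lambda]$ for which $\ybar_i$ has equality type $\mathcal{E}$. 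As $\mathcal{E}$ ranges over all equality types these index sets partition $[\lambda]$, so the formal sum of the $(G^{\mathcal{E}},\xbar^{\mathcal{E}})$ implements $1^\lambda$.

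Finally, for distinct $i,j\in[\lambda]$ the definition of an enumeration up to isomorphism gives $(H,\ybar_i)\not\isoto(H,\ybar_j)$, so Lemma~\ref{lem:Lovasz} (taking $H'=H$) produces a graph $(G,\xbar)$ with $|\Homs{(G,\xbar)}{(H,\ybar_i)}|\not\equiv|\Homs{(G,\xbar)}{(H,\ybar_j)}|\pmod2$; equivalently, the implementable vector $\vecv_H(G,\xbar)$ separates coordinates $i$ and $j$. The preceding lemma now yields $S=\{0,1\}^\lambda$, which is the claim. The only step requiring real care is closure under $\otimes$ — in particular, combining Lemma~\ref{lem:otimesvec} with the bookkeeping forced by distinguished tuples of different equality types; the remaining verifications are routine.
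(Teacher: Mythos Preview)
Your overall strategy matches the paper's: verify the hypotheses of the cited Faben--Jerrum lemma for the set $S$ of $H$-implementable vectors. However, two of your verifications contain errors, both stemming from the same misconception about equality types.

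Your claim that $(G^{\mathcal{E}},\xbar^{\mathcal{E}})$ admits a homomorphism to $(H,\ybar_i)$ exactly when $\ybar_i$ has equality type~$\mathcal{E}$ is false. An edgeless graph with distinguished tuple of type~$\mathcal{E}$ admits a (unique) homomorphism to $(H,\ybar_i)$ whenever $\mathcal{E}$ \emph{refines} the equality type of~$\ybar_i$, not only when they coincide. Consequently your formal sum over all~$\mathcal{E}$ contributes, at coordinate~$i$, the number (mod~$2$) of partitions of~$[r]$ refining the partition induced by~$\ybar_i$; if $\ybar_i$ has a block of size~$2$ this number carries a factor $B(2)=2$ and is even, so the sum is not~$1^\lambda$. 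The paper avoids this entirely: a single edgeless graph on $r$ \emph{distinct} vertices $x_1,\dots,x_r$ has exactly one homomorphism to every $(H,\ybar_i)$, and hence implements~$1^\lambda$ directly.

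The same misconception undermines your treatment of~$\otimes$. When $\xbar_a$ and~$\xbar_b'$ have different equality types, the product $\vecv_H(G_a,\xbar_a)\otimes\vecv_H(G_b',\xbar_b')$ need not be~$0^\lambda$: both factors can be nonzero at a position~$i$ whenever the equality type of~$\ybar_i$ is a common coarsening of the two (for instance, when all entries of~$\ybar_i$ are equal). A correct fix is to observe that identifying vertices of~$G_b'$ so as to coarsen the type of~$\xbar_b'$ to that of~$\xbar_a$ leaves the product unchanged at every coordinate --- positions where the type of~$\xbar_a$ fails to refine that of~$\ybar_i$ already vanish in the first factor --- after which Lemma~\ref{lem:otimesvec} applies.
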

\begin{proof}
    Let $S$ be the set of $H$-implementable vectors.  $S$~is clearly
    closed under~$\oplus$, and is closed under~$\otimes$ by
    Lemma~\ref{lem:otimesvec}.  Let $G$~be the graph on vertices
    $\{x_1, \dots, x_r\}$, with no edges.  $1^\lambda$~is implemented
    by $(G,x_1, \dots, x_r)$, which has exactly one homomorphism to
    every $(H,\ybar_i)$.  Finally, for every distinct pair $i,j\in
    [\lambda]$, $(H,\ybar_i)$ and $(H,\ybar_j)$ are not isomorphic, by
    definition of the enumeration of 
$r$-tuples (up to isomorphism).  
Therefore, by
    Lemma~\ref{lem:Lovasz}, there is a graph $(G,\xbar)$ such that
    \begin{equation*}
        |\Homs{(G,\xbar)}{(H,\ybar_i)}|
            \not\equiv |\Homs{(G,\xbar)}{(H,\ybar_j)}| \pmod 2\,.
    \end{equation*}
    $(G,\xbar)$ implements a vector~$\vecv$ whose $i$th and $j$th
    components are different.
\end{proof}

\subsection{Pinning}
\label{sec:partlab:pinning}

We now have almost everything we need to prove
Theorem~\ref{thm:partlabcol}.  Recall the definition of an enumeration
$\ybar_1, \dots, \ybar_\lambda$ of $V(H)^r$ up to isomorphism
(Definition~\ref{defn:lambdas}).

\begin{lemma}
\label{lem:sumvec}
    Let $H$~be an involution-free graph and let $\ybar_1, \dots,
    \ybar_\lambda$ be an enumeration of $V(H)^r$ up to isomorphism.
    For any graph~$(G, \xbar)$ with $r$~distinguished vertices,
    \begin{equation*}
        |\Homs{G}{H}|
            \equiv \sum_{i\in[\lambda]} (\vecv_H(G,\xbar))_i \pmod2\,.
    \end{equation*}
\end{lemma}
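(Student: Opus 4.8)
The plan is to partition $\Homs{G}{H}$ according to the images of the distinguished vertices and then collect the resulting classes by isomorphism type. A homomorphism $\sigma\colon G\to H$ restricts to a tuple $\sigma(\xbar)=(\sigma(x_1),\dots,\sigma(x_r))\in V(H)^r$, and conversely $\sigma$ is a homomorphism from $(G,\xbar)$ to $(H,\sigma(\xbar))$. So I would first write
\begin{equation*}
    |\Homs{G}{H}| = \sum_{\ybar\in V(H)^r}|\Homs{(G,\xbar)}{(H,\ybar)}|\,,
\end{equation*}
which is an exact identity (not just modulo~$2$), since the events $\sigma(\xbar)=\ybar$ partition $\Homs{G}{H}$.

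Next I would group the terms of this sum by the equivalence classes of the relation $\ybar\sim\ybar'$ iff $(H,\ybar)\isoto(H,\ybar')$. By definition of the enumeration $\ybar_1,\dots,\ybar_\lambda$ up to isomorphism (Definition~\ref{defn:lambdas}), these classes are exactly $O_1,\dots,O_\lambda$, where $O_i=\{\ybar\in V(H)^r : (H,\ybar)\isoto(H,\ybar_i)\}$. The key observations are: (i) for $\ybar,\ybar'\in O_i$ an isomorphism of $(H,\ybar)$ with $(H,\ybar')$ composes with homomorphisms to give a bijection $\Homs{(G,\xbar)}{(H,\ybar)}\to\Homs{(G,\xbar)}{(H,\ybar')}$, so all terms in the block $O_i$ are equal to $|\Homs{(G,\xbar)}{(H,\ybar_i)}|$; and (ii) $O_i$ is precisely the orbit $\Orb_{\Aut(H)}(\ybar_i)$ under the natural action of $\Aut(H)$ on $V(H)^r$, because an isomorphism $(H,\ybar_i)\to(H,\ybar)$ is the same thing as an automorphism of $H$ taking $\ybar_i$ to $\ybar$. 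Hence $\sum_{\ybar\in O_i}|\Homs{(G,\xbar)}{(H,\ybar)}| = |\Orb_H(\ybar_i)|\cdot|\Homs{(G,\xbar)}{(H,\ybar_i)}|$.

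Finally I would invoke Corollary~\ref{cor:odd-orbit}: since $H$ is involution-free, $|\Orb_H(\ybar_i)|$ is odd, so this product is congruent modulo~$2$ to $|\Homs{(G,\xbar)}{(H,\ybar_i)}|$, which is by definition $(\vecv_H(G,\xbar))_i$. Summing over $i\in[\lambda]$ gives the claimed congruence. I do not expect any serious obstacle here; the only point needing a little care is the identification of the isomorphism-class $O_i$ with the automorphism orbit of $\ybar_i$ (unwinding the definition of isomorphism of graphs with distinguished vertices), after which Corollary~\ref{cor:odd-orbit} does the real work.
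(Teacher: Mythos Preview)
Your proposal is correct and follows essentially the same argument as the paper's proof: partition $\Homs{G}{H}$ by the image tuple~$\ybar$, group by orbits under $\Aut(H)$, use that isomorphic tuples give equal homomorphism counts, and apply Corollary~\ref{cor:odd-orbit} to drop the orbit-size factors modulo~$2$. The only cosmetic difference is direction: the paper starts from $\sum_i (\vecv_H(G,\xbar))_i$ and works toward $|\Homs{G}{H}|$, whereas you start from $|\Homs{G}{H}|$ and work toward the sum.
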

\begin{proof}
    We have (for details see below),
    \begin{align*}
        \sum_{i\in[\lambda]} (\vecv_H(G,\xbar))_i\;
            &\equiv \sum_{i\in[\lambda]} |\Homs{(G,\xbar)}{(H,\ybar_i)}| \pmod2\\
            &\equiv \sum_{i\in[\lambda]}
                        |\Orb_H(\ybar_i)|\,
                            |\Homs{(G,\xbar)}{(H,\ybar_i)}| \pmod2 \\
            &= \sum_{i\in[\lambda]}\ 
                        \sum_{\ybar\in\Orb_H(\ybar_i)}
                            |\Homs{(G,\xbar)}{(H,\ybar)}| \\
            &= |\Homs{G}{H}|\,.
    \end{align*}
    The second equivalence modulo~$2$ is because all orbits have odd
    cardinality by Corollary~\ref{cor:odd-orbit} and multiplying the terms
    of the sum by odd numbers doesn't change the total, modulo~$2$.
    The first equality is because, for any $\ybar\in\Orb_H(\ybar_i)$,
    $|\Homs{(G,\xbar)}{(H,\ybar)}| = |\Homs{(G,\xbar)}{(H,\ybar_i)}|$.
    This is because composing a homomorphism from $(G,\xbar)$
    to~$(H,\ybar)$ with an isomorphism from $(H,\ybar)$
    to~$(H,\ybar_i)$ gives a homomorphism from $(G,\xbar)$
    to~$(H,\ybar_i)$.
    The final equality is because every homomorphism
    from $G$ to~$H$ must map~$\xbar$ to some tuple~$\ybar$
and 
(exactly) all such tuples are included exactly
    once in the double sum.
\end{proof}

We can now prove Theorem~\ref{thm:partlabcol}: for any involution-free
graph~$H$, \partlabparhcol{} is polynomial-time Turing-reducible
to \parhcol{}.

\begin{proof}[Proof of Theorem~\ref{thm:partlabcol}]
    Let $J=(G,\tau)$ be an instance of \partlabparhcol{}.  Let
    $\xbar=x_1, \dots, x_r$ be an enumeration of $\dom(\tau)$ and let
    $\ybar = y_1, \dots, y_r = \tau(x_i), \dots, \tau(x_r)$.  Moving
    from the world of partially $H$-labelled graphs to the equivalent
    view of graphs with distinguished vertices, we wish to compute
    $|\Homs{(G,\xbar)}{(H,\ybar)}|$, modulo~$2$.

    By definition of the 
enumeration (up to isomorphism) 
 $\ybar_1, \dots, \ybar_\lambda$,
     there is some~$p$ such that $(H,\ybar)\isoto (H,\ybar_p)$.  Let
    $\vecv$ be the vector that has a $1$~in position~$p$ and has
    $0$~in every other position.  By
    Corollary~\ref{cor:implementable}, $\vecv$~is implemented by some
    sequence $(\Theta_1, \xbar_1), \dots, (\Theta_t, \xbar_t)$ of
    graphs with $r$-tuples of distinguished vertices.

    For each $i\in[t]$, let $(G_i,\xbar)$ be the graph that results from
    taking the union of disjoint copies of $G$ and~$\Theta_i$ and
    identifying the $j$th element of~$\xbar$ with the $j$th element of
    $\xbar_i$ for each $j\in [t]$.  We have
    \begin{align*}
        \vecv_H(G, \xbar)\otimes \vecv
            &= \vecv_H(G,\xbar)\otimes
                   \vecv_H\big((\Theta_1,\xbar_1) + \dots
                                   + (\Theta_t,\xbar_t)\big) \\
            &= \bigoplus_{i\in[t]} \big(
                   \vecv_H(G,\xbar)\otimes \vecv_H(\Theta_i, \xbar_i)
               \big)\, \\
            &= \bigoplus_{i\in[t]} \vecv_H(G_i,\xbar)\,.
    \end{align*}
    Now, sum the components of the vectors on the two sides of the
    equation.  On the right, by Lemma~\ref{lem:sumvec}, we have a
    value congruent modulo~$2$ to
    $\sum_{i\in[t]} |\Homs{G_i}{H}|$.  This can be computed by making
    $t$~calls to an oracle for \parhcol{}, and $t$~is bounded above by
    a constant, since $H$~is fixed. On the left, we have,
    $|\Homs{(G,\xbar)}{(H,\ybar)}|$, modulo~$2$, which is what we wish
    to compute.
\end{proof}

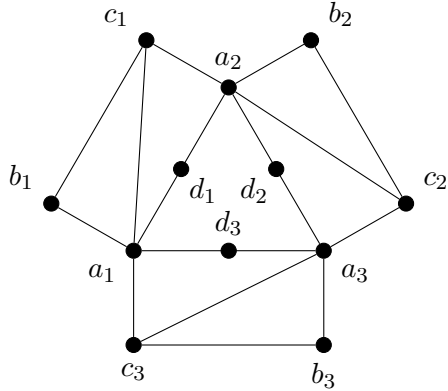
\begin{figure}
    \begin{center}
    \begin{tikzpicture}[scale=1.25]
        \tikzstyle{vertex}=[fill=black, draw=black, circle, inner
        sep=2pt]

        \node[vertex] (a1) at (0,0)              [label=210:$a_1$] {};
        \node[vertex] (a2) at (60:2)             [label= 90:$a_2$] {};
        \node[vertex] (a3) at (2,0)              [label=330:$a_3$] {};

        \node[vertex] (b1) at ($(a1) + (150:1)$) [label=150:$b_1$] {};
        \node[vertex] (b2) at ($(a2) + ( 30:1)$) [label= 30:$b_2$] {};
        \node[vertex] (b3) at ($(a3) + (270:1)$) [label=270:$b_3$] {};

        \node[vertex] (c1) at ($(a2) + (150:1)$) [label=150:$c_1$] {};
        \node[vertex] (c2) at ($(a3) + ( 30:1)$) [label= 30:$c_2$] {};
        \node[vertex] (c3) at ($(a1) + (270:1)$) [label=270:$c_3$] {};

        \node[vertex] (d1) at ($(a1) + ( 60:1)$) [label={[label distance=-4]-30:$d_1$}] {};
        \node[vertex] (d2) at ($(a2) + (300:1)$) [label={[label distance=-4]210:$d_2$}] {};
        \node[vertex] (d3) at ($(a3) + (180:1)$) [label={[label distance=-1]  90:$d_3$}] {};

        \draw (a1) -- (b1) -- (c1) -- (a1) -- (a2) -- (c1);
        \draw (a2) -- (b2) -- (c2) -- (a2) -- (a3) -- (c2);
        \draw (a3) -- (b3) -- (c3) -- (a3) -- (a1) -- (c3);
    \end{tikzpicture}
    \end{center}

    \caption{An involution-free graph~$H$ illustrating the difference between pinning
      vertices to orbits of vertices and pinning a tuple of vertices
      to an orbit of a tuple.}
    \label{fig:difference}
\end{figure}

The result we have proved appears similar to
\cite[Theorem~3.2]{GGR14:Cactus} but there is an important difference.
In~\cite{GGR14:Cactus}, we wished to pin $r$~vertices of~$G$, each to the orbit of a
vertex of~$H$.
In this paper, we focus on the problem  \partlabparhcol,
where we pin vertices of~$G$ to individual vertices of~$H$.
In order to achieve this, we essentially  pin an $r$-tuple of vertices of~$G$ to
the orbit of an $r$-tuple of vertices in~$H$.  
To see the difference,
consider the graph~$H$
in Figure~\ref{fig:difference}.
The orbits of single vertices are $\{a_1, a_2, a_3\}, \dots, \{d_1,d_2,d_3\}$.
There are six homomorphisms from the
single edge~$(x,y)$ to~$H$ that map~$x$ to the orbit of~$a_1$ and
$y$~to the orbit of~$d_1$ but only three that map the pair $(x,y)$ to
the orbit of the pair $(a_1,d_1)$, which is $\{(a_1,d_1), (a_2,d_2),
(a_3,d_3)\}$.

\section{Hardness gadgets}
\label{sec:gadgets}

In this section, we define gadgets that we will use to prove
\parp{}-completeness of \parhcol{} problems, by reduction from the
parity independent set problem $\paris$, i.e., the problem of
computing the number of independent sets in an input graph,
modulo~$2$.  $\paris$~was shown to be \parp{}-complete by
Valiant~\cite{Val06:Accidental}.

The gadgets we use are considerably more general than the ones we
defined for cactus graphs in~\cite{GGR14:Cactus}.  This allows us to
quickly prove hardness for large classes of square-free graphs and
even to find gadgets non-constructively.

In fact, our definition of hardness gadgets and the proof
that \parhcol{} is \parp{}-complete if $H$~is involution-free and has
a hardness gadget 
(Section~\ref{sec:gadgets:parp-complete})
does not
require the graphs to be square-free.  However, whenever we find a
gadget for a particular graph, it involves the ``caterpillar gadgets''
we introduce in Section~\ref{sec:gadgets:caterpillar}.  These gadgets
do depend on $H$~being square-free, as we show in
Section~\ref{sec:gadgets:squares}.

\subsection{$\parp$-completeness}
\label{sec:gadgets:parp-complete}

We now define the gadgets we use to prove hardness and show that they
serve this purpose.
Recall that a partially $H$-labelled graph $J$ consists of an underlying
graph $G(J)$ and a pinning function $\tau(J)$.
In the discussion that follows, we will choose a set $\Oy\subseteq
V(H)$ and a vertex $i\in\Oy$.  Given a graph~$G$ whose independent
sets we wish to count modulo~$2$, we will construct a partially
$H$-labelled graph~$J$ and consider homomorphisms from $J$ to~$H$.
$G(J)$~will contain a copy of $V(G)$ and we will be interested in
homomorphisms that map every vertex in this copy to~$\Oy$.  Vertices
mapped to~$i$ will be in the independent set under consideration;
vertices mapped to $\Oy-i$ will not be in the independent set.

\begin{definition}
\label{defn:hardness-gadget}
A \emph{hardness gadget} $(i,s,(J_1,y),(J_2,z),(J_3,y,z))$ for a
graph~$H$ consists
of vertices $i$ and~$s$ of~$H$ together with three connected, partially $H$-labelled
graphs with distinguished vertices 
$(J_1,y)$, $(J_2,z)$ and $(J_3,y,z)$ 
that satisfy certain
properties
as explained below.
Let
\begin{align*}
\Oy &= \{ a \in V(H) \mid |\Homs{(J_1, y)}{(H,a)}|\text{ is odd}\},\\
\Oz &= \{ b \in V(H) \mid |\Homs{(J_2, z)}{(H,b)}|\text{ is odd}\}, \mbox{ and}\\
\Sigma_{a,b} &=\Homs{(J_3,y,z)}{(H,a,b)}\,.
\end{align*}
The properties 
that
we require are 
the following.
\begin{enumerate}
\item $|\Oy|$ is even and $i\in \Oy$.
\item $|\Oz|$ is even and $s \in \Oz$.
\item 
For each $o\in \Oy-i$ and each $x\in \Oz-s$,
$|\Sox|$ is even.
\item
$|\Sis|$ is odd and, for each $o\in \Oy-i$ and each $x\in \Oz-s$,
$|\Sos|$ and $|\Six|$   are odd.
\end{enumerate}
\end{definition}

Before proving that hardness gadgets give \parp{}-completeness, we
introduce some notation.
Given partially $H$-labelled graphs $J_1=(G_1,\tau_1)$ and~$J_2=(G_2,\tau_2)$, with
$\dom(\tau_1)\cap \dom(\tau_2) = \emptyset$, we write $J_1\cup
J_2$ for the partially labelled graph~$J'=(G'\!,\tau')$, where $G' = G_1\cup G_2$
and $\tau' = \tau_1\cup \tau_2$.  That is,
$\dom(\tau') = \dom(\tau_1) \cup \dom(\tau_2)$ and
\begin{equation*}
    \tau'(v) = \begin{cases}
               \ \tau_1(v) &\text{ if } v\in\dom(\tau_1) \\
               \ \tau_2(v) &\text{ if } v\in\dom(\tau_2).
               \end{cases}
\end{equation*}

We will use the following notation to build partially labelled graphs
containing many copies of some subgraph.  For any ``tag''~$T$ (which
we will treat just as an arbitrary string) and any partially labelled
graph~$J$, denote by $J^T$ a copy of~$J$ with every vertex $v\in
V(G(J))$ renamed~$v^T\!$.

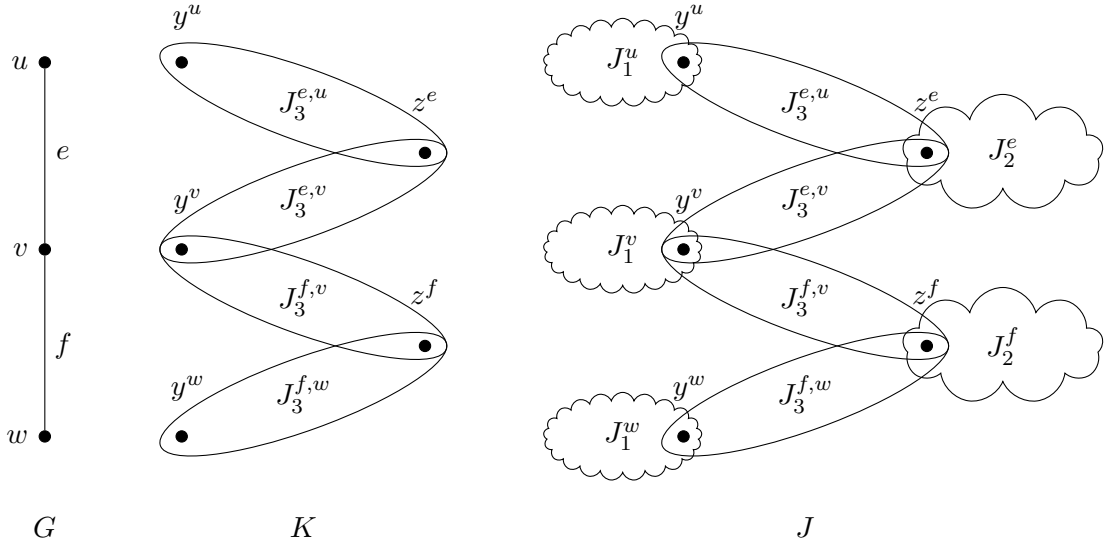
\begin{figure}
\begin{center}
\begin{tikzpicture}[scale=.4,node distance = 1.5cm]
\tikzstyle{vertex}=[fill=black, draw=black, circle, inner sep=1.5pt]
\tikzstyle{blob1}=[cloud, cloud puffs=21, draw, minimum width=.1cm,
                   minimum height=.1cm, aspect=2.1]
\tikzstyle{blob2}=[cloud, cloud puffs=9, draw, minimum width=1cm,
                   minimum height=.5cm, aspect=2.7]

    \begin{scope}[shift={(-25,0)}]
        \node[vertex] (u) at (0,   1.4) [label=180:{$u$}] {};
        \node             at (0.6,-1.6) {$e$};
        \node[vertex] (v) at (0,  -4.8) [label=180:{$v$}] {};
        \node             at (0.6,-8  ) {$f$};
        \node[vertex] (w) at (0,  -11 ) [label=180:{$w$}] {};
 
        \draw (u)--(v)--(w);
 
        \node at (0,-14) {$G$};
    \end{scope}

    \begin{scope}[shift={(-16.5,0)}]
        \draw (0, 0  ) circle [x radius=5cm, y radius=1.2cm, rotate=-20];
        \draw (0,-3.2) circle [x radius=5cm, y radius=1.2cm, rotate=20];
        \draw (0,-6.4) circle [x radius=5cm, y radius=1.2cm, rotate=-20];
        \draw (0,-9.6) circle [x radius=5cm, y radius=1.2cm, rotate=20];

        \node at (0, 0  ) {$J_3^{e,u}$};
        \node at (0,-3.2) {$J_3^{e,v}$};
        \node at (0,-6.4) {$J_3^{f,v}$};
        \node at (0,-9.6) {$J_3^{f,w}$};

        \node[vertex] at (-4  , 1.4) {};
        \node         at (-3.8, 2.9) {$y^u$};
        \node[vertex] at (-4  ,-4.8) {};
        \node         at (-3.8,-3.2) {$y^v$};
        \node[vertex] at (-4  ,-11 ) {};
        \node         at (-3.8,-9.4) {$y^w$};

        \node[vertex] at (4,-1.6) {};
        \node         at (4, 0.1) {$z^e$};
        \node[vertex] at (4,-8  ) {};
        \node         at (4,-6.2) {$z^f$};
        
                \node         at (0,-14 ) {$K$};
    \end{scope}

    \draw (0, 0  ) circle [x radius=5cm, y radius=1.2cm, rotate=-20];
    \draw (0,-3.2) circle [x radius=5cm, y radius=1.2cm, rotate=20];
    \draw (0,-6.4) circle [x radius=5cm, y radius=1.2cm, rotate=-20];
    \draw (0,-9.6) circle [x radius=5cm, y radius=1.2cm, rotate=20];

    \node at (0, 0  ) {$J_3^{e,u}$};
    \node at (0,-3.2) {$J_3^{e,v}$};
    \node at (0,-6.4) {$J_3^{f,v}$};
    \node at (0,-9.6) {$J_3^{f,w}$};

    \node[vertex] at (-4  , 1.4) {};
    \node         at (-3.8, 2.9) {$y^u$};
    \node[vertex] at (-4  ,-4.8) {};
    \node         at (-3.8,-3.2) {$y^v$};
    \node[vertex] at (-4  ,-11 ) {};
    \node         at (-3.8,-9.4) {$y^w$};
 
    \node[blob1] at (-6, 1.4) {$\quad\quad\quad$};
    \node        at (-6, 1.4) {$J_1^u$};
    \node[blob1] at (-6,-4.8) {$\quad\quad\quad$};
    \node        at (-6,-4.8) {$J_1^v$};
    \node[blob1] at (-6,-11 ) {$\quad\quad\quad$};
    \node        at (-6,-11 ) {$J_1^w$};
 
    \node[vertex] at (4,-1.6) {};
    \node         at (4, 0.1) {$z^e$};
    \node[vertex] at (4,-8  ) {};
    \node         at (4,-6.2) {$z^f$};
 
    \node[blob2] at (6.5,-1.6) {$\quad\quad\quad\quad$};
    \node        at (6.5,-1.6) {$J_2^e$};
    \node[blob2] at (6.5,-8  ) {$\quad\quad\quad\quad$};
    \node        at (6.5,-8  ) {$J_2^f$};
 
    \node at (0,-14) {$J$};
\end{tikzpicture}
\end{center}
\caption{The construction of the partially labelled graphs $K$ and~$J$
  from an example graph~$G$, as in the proof of
  Theorem~\ref{thm:hardness-gadget}.}
\label{fig:hardness-gadget}
\end{figure}

\begin{theorem}
\label{thm:hardness-gadget}
    If an involution-free graph $H$ has a hardness gadget then
    \parhcol{} is $\parp$-complete.
\end{theorem}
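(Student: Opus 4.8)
The plan is to reduce $\paris{}$ to $\partlabparhcol{}$, which together with Theorem~\ref{thm:partlabcol} yields the $\parp{}$-completeness of $\parhcol{}$ (membership in $\parp{}$ is immediate, since the number of homomorphisms is computable in $\nump{}$). So, given an input graph~$G$ to $\paris{}$, I would construct a partially $H$-labelled graph~$J$ (depicted in Figure~\ref{fig:hardness-gadget}) whose homomorphism count modulo~$2$ equals the number of independent sets of~$G$ modulo~$2$. The construction: for each vertex $u\in V(G)$ take a fresh copy $J_1^u$ of $(J_1,y)$ with its distinguished vertex renamed $y^u$; for each edge $e=\{u,v\}\in E(G)$ take a fresh copy $J_2^e$ of $(J_2,z)$ with distinguished vertex $z^e$, together with two copies $J_3^{e,u}$ and $J_3^{e,v}$ of $(J_3,y,z)$, identifying the $y$-vertex of $J_3^{e,u}$ with $y^u$ and its $z$-vertex with $z^e$ (and similarly for $v$). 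The pinning functions of the copies are taken on disjoint vertex sets (using the tag renaming $J^T$), so $J$ is a well-defined partially $H$-labelled graph of size polynomial in~$|G|$, computable in polynomial time.

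The key step is the counting argument. For a homomorphism $\sigma$ from $J$ to $H$, restrict attention to the values $\sigma(y^u)$ for $u\in V(G)$ and $\sigma(z^e)$ for $e\in E(G)$. I would first observe that the copies of $J_1,J_2,J_3$ interact only through the shared distinguished vertices, so by the multiplicativity of homomorphism counts over unions sharing only distinguished vertices (the principle underlying Lemma~\ref{lem:otimesvec}), once the tuple of values $(\sigma(y^u))_u$ and $(\sigma(z^e))_e$ is fixed, the number of extensions to all of $J$ factors as a product: $\prod_{u} |\Homs{(J_1,y)}{(H,\sigma(y^u))}| \cdot \prod_{e}|\Homs{(J_2,z)}{(H,\sigma(z^e))}| \cdot \prod_{e=\{u,v\}} |\Sigma_{\sigma(y^u),\sigma(z^e)}|\cdot|\Sigma_{\sigma(y^v),\sigma(z^e)}|$. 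Modulo~$2$, a value-assignment contributes~$1$ iff every factor is odd. Properties~(1) and~(2) force (for an odd contribution) each $\sigma(y^u)\in\Oy$ and each $\sigma(z^e)\in\Oz$; write $I=\{u:\sigma(y^u)=i\}$ and look at the edge factors. Property~(4) says $|\Sigma_{a,b}|$ is odd whenever at least one of $a=i$, $b=s$ holds (with $a\in\Oy$, $b\in\Oz$), while property~(3) says $|\Sigma_{o,x}|$ is even when $o\in\Oy-i$ and $x\in\Oz-s$. So for an edge $e=\{u,v\}$, the product $|\Sigma_{\sigma(y^u),\sigma(z^e)}|\cdot|\Sigma_{\sigma(y^v),\sigma(z^e)}|$ is odd iff it is \emph{not} the case that both $\sigma(y^u)\ne i$ and $\sigma(y^v)\ne i$ while $\sigma(z^e)\ne s$ --- i.e.\ for every edge with $u,v\notin I$ we are forced to pick $\sigma(z^e)=s$ (any of the $|\Oz|-1$ other choices kills one of the two $J_3$ factors; note $|\Sigma_{o,s}|$ and $|\Sigma_{i,x}|$ odd handle the mixed cases). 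Summing over all value-assignments, the odd-contributing ones are exactly: choose an independent set $I$ (so that no edge has both endpoints outside $I$ needing a forbidden configuration --- more precisely, when $I$ is independent every edge is incident to~$I$ and all $J_3$-factors and $J_2$-factors can be made odd, contributing an odd number of homomorphisms overall; when $I$ is \emph{not} independent, some edge $e\subseteq V(G)\setminus I$ forces, for each choice of $\sigma(z^e)\in\Oz$, an even factor, so summing over the even-sized set $\Oz$ the contribution of all $\sigma$ with this $I$ is even). Hence $|\Homs{J}{H}|\equiv \#\{\text{independent sets of }G\}\pmod 2$.

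The main obstacle will be making the parity bookkeeping in the last summation fully rigorous: I need to show that for a \emph{non-independent} $I$ the total number of homomorphisms $\sigma$ with $\{u:\sigma(y^u)=i\}=I$ is even, and for an \emph{independent} $I$ it is odd. The clean way is to fix a bad edge $e_0\subseteq V(G)\setminus I$ and partition those homomorphisms according to $\sigma\restriction(J\setminus J_2^{e_0})$; for each such restriction the number of ways to extend over $J_2^{e_0}$ is $\sum_{x\in\Oz}|\Sigma_{\sigma(y^{u_0}),x}|\cdot|\Sigma_{\sigma(y^{v_0}),x}|\cdot(\text{factor from }J_2^{e_0})$, and since both endpoints lie outside $I$, property~(3) makes every summand with $x\ne s$ even and property~(4)/(2) controls the $x=s$ term, but then I still need the whole sum even --- this is where I must be slightly careful and may need to group by a vertex rather than an edge, or invoke that $|\Oz|$ is even directly. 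For the independent case, incidence of every edge to~$I$ together with properties~(2),(4) and multiplicativity gives an odd count by a direct product computation. I would write this out as a sequence of short lemmas isolating (a) the factorization, (b) the per-edge parity trichotomy, and (c) the final summation, to keep the argument transparent.
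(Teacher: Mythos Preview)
Your construction of~$J$ and the factorisation over the copies of $J_1$, $J_2$, $J_3$ are exactly those of the paper, and the overall strategy (reduce $\paris$ to $\partlabparhcol$ and invoke Theorem~\ref{thm:partlabcol}) is correct. However, your per-edge parity analysis contains an error that swaps the roles of ``in~$I$'' and ``out of~$I$'', so your final summation is counting the wrong thing.

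Concretely: for $a,a'\in\Oy$ and $b\in\Oz$, property~(3) says $|\Sigma_{a,b}|$ is even iff $a\ne i$ and $b\ne s$, and property~(4) says it is odd otherwise. Hence the product $|\Sigma_{a,b}|\cdot|\Sigma_{a',b}|$ is odd iff \emph{neither} factor is even, i.e.\ iff $b=s$ \emph{or} ($a=i$ \emph{and} $a'=i$). You assert instead that it is odd iff $b=s$ or $a=i$ or $a'=i$; this is false---take $a=i$, $a'\ne i$, $b\ne s$, where $|\Sigma_{a',b}|$ is even so the product is even. With the correct condition, the per-edge sum $n(a,a')=\sum_{b\in\Oz}|\Sigma_{a,b}|\,|\Sigma_{a',b}|$ has all $|\Oz|$ terms odd (and is therefore even, since $|\Oz|$ is even) precisely when $a=a'=i$, and has exactly one odd term ($b=s$) otherwise. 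Thus $\prod_{(u,v)\in E(G)} n(\sigma(y^u),\sigma(y^v))$ is odd iff no edge has \emph{both} endpoints mapped to~$i$, i.e.\ iff $I$ is independent. Your text has this reversed: you locate the ``bad'' edge as one with both endpoints \emph{outside}~$I$ and write that ``when $I$ is independent every edge is incident to~$I$'', which is the vertex-cover condition on~$I$, not independence. Once the parity statement is corrected, the remainder is immediate exactly as in the paper: for each independent set~$X$ there are $(|\Oy|-1)^{|V(G)\setminus X|}$ assignments of the $y^u$ with $I=X$, and since $|\Oy|$ is even this count is odd, yielding $|\Homs{J}{H}|\equiv|\calI(G)|\pmod 2$.
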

\begin{proof}
    Let $(i,s,(J_1,y),(J_2,z),(J_3,y,z))$ be the hardness gadget
    for~$H$ and recall the sets $\Oy$ and $\Oz$ from
    Definition~\ref{defn:hardness-gadget}.  We show how to reduce
    $\paris$ to \partlabparhcol{}; the result then follows from
    Theorem~\ref{thm:partlabcol} and \parp{}-completeness of
    $\paris$~\cite{Val06:Accidental}.  Given an input graph~$G$ to
    $\paris$, we construct an appropriate partially $H$-labelled
    graph~$J$ and show that $|\calI(G)| \equiv |\Homs{J}{H}| \bmod2$,
    where $\calI(G)$ is the set of independent sets in~$G$.

    We construct~$J$ in two stages (see
    Figure~\ref{fig:hardness-gadget}).  
    Take the union of disjoint
    copies~$J_3^{e,v}$ of~$J_3$ for every edge $e\in G$ and each
    endpoint~$v$ of~$e$. For each edge $e=(u,v)\in G$, identify the
    vertices $z^{e,u}$ and~$z^{e,v}$  
and call this~$z^e$.
    For each vertex $v\in G$,
    identify all the vertices~$y^{e,v}$ such that $e$~has $v$~as an
    endpoint,
and call this~$y^v$.
Call 
the resulting
graph~$K$.

    To make~$J$, take~$K$ and add a disjoint copy~$J_1^v$ of~$J_1$ for
    every vertex $v\in G$ and a disjoint copy~$J_2^e$ of~$J_2$ for
    every edge $e\in G$.  For each vertex $v\in G$, identify 
the vertex
 $y^v$ in~$K$ with the vertex $y^v$ in $J_1^v$.
 For each edge
    $e=(u,v)$ in~$G$, identify 
the vertex $z^e$ in~$K$ with the vertex $z^e$ in $J_2^e$.

    We now proceed to show that $|\Homs{J}{H}|\equiv |\calI(G)| \bmod2$.

    For a homomorphism $\sigma\in\Homs{K}{H}$, let $\eqclass{\sigma}$
    be the set of extensions of $\sigma$ to homomorphisms from $J$
    to~$H$, i.e.,
    \begin{equation*}
        \eqclass{\sigma} =
            \{\sigma'\in\Homs{J}{H}\mid
                  \sigma(v) = \sigma'(v) \text{ for all } v\in V(G(K))\}\,.
    \end{equation*}

    Every homomorphism from $J$ to~$H$ is the extension of a unique
    homomorphism from $K$ to~$H$, so we have
    \begin{equation}
    \label{eq:sum-eq-classes}
        |\Homs{J}{H}| \ \ \ = \!\!\!\!\!\!
            \sum_{\sigma\in\Homs{K}{H}} \!\!\!\!\!\! |\eqclass{\sigma}|\,.
    \end{equation}

    From the structure of~$J$, we have
    \begin{equation*}
        |\eqclass{\sigma}|
            = \left(\prod_{v\in V(G)} \big|\Homs{(J_1,y)}{(H,\sigma(y^v)}\big|\right)
              \left(\prod_{e\in E(G)} \big|\Homs{(J_2,z)}{(H,\sigma(z^e)}\big|\right)\,.
    \end{equation*}

    By Definition~\ref{defn:hardness-gadget},
    $|\Homs{(J_1,y)}{(H,a)}|$ is odd if and only if $a\in\Oy$ and
    $|\Homs{(J_2,z)}{(H,b)}|$ is odd if and only if $b\in\Oz$.
    Therefore, $|\eqclass{\sigma}|$~is odd if and only if
    $\sigma$~maps every vertex~$y^v$ into~$\Oy$ and every $z^e$
    into~$\Oz$: call such a homomorphism ``legitimate'' (with respect
    to $J_1$ and~$J_2$).  We can rewrite~\eqref{eq:sum-eq-classes} as
    \begin{equation}
    \label{eq:legitimate}
        |\Homs{J}{H}| \equiv |\{\sigma\in\Homs{K}{H}\mid
                               \sigma\text{ is legitimate}\} \pmod2\,,
    \end{equation}
    and, from this point, we restrict our attention to legitimate
    homomorphisms.

    Given a legitimate homomorphism $\sigma\in\Homs{K}{H}$, let
    $\sigma|_Y$~be the restriction of~$\sigma$ to the domain
    $\{y^v\mid v\in V(G)\}$.  Write $\sigma \sim_Y \sigma'$ if
    $\sigma|_Y = \sigma'|_Y$ and write $\eqclass{\sigma}_Y$ for the
    $\sim_Y$-equivalence class of~$\sigma$.  The
    classes~$\eqclass{\sigma}_Y$ partition the legitimate
    homomorphisms from $K$ to~$H$.  We have
    \begin{equation*}
        \big|\eqclass{\sigma}_Y\big| \ 
            = \!\!\!\!\!\!\prod_{(u,v)\in E(G)}\!\!\!\!\!\! n(\sigma(u), \sigma(v))\,,
    \end{equation*}
    where
    \begin{equation*}
        n(a,a') = \sum_{b\in \Oz}
                      \big|\Homs{(J_3,y,z)}{(H,a,b)}\big|\,
                          \big|\Homs{(J_3,y,z)}{(H,a'\!,b)}\big|\,.
    \end{equation*}
    By Definition~\ref{defn:hardness-gadget}, $|\Oz|$~is even, so the
    sum defining $n(a,a')$ has an even number of terms.
    $|\Homs{(J_3,y,z)}{(H,a,b)}|= |\Sigma_{a,b}|$ 
    is even if $a\in\Oy-i$ and
    $b\in\Oz-s$ and odd, otherwise.  If $a=a'=i$, every term is odd
    and $n(a,a')$ is even; otherwise, exactly one term ($b=s$) is odd,
    so $n(a,a')$ is odd.  Therefore, $|\eqclass{\sigma}_Y|$ is odd if
    and only 
  if  
    $\sigma$~does not map a pair of adjacent vertices to~$i$:
    that is, if the set $I(\sigma) = \{v\in V(G) \mid \sigma(y^v) =
    i\}$ is an independent set in~$G$.

    Choose representatives $\sigma_1, \dots, \sigma_k$, one from each
    $\sim_Y$-equivalence class.  We have
    \begin{align*}
           |\Homs{J}{H}|
            &\equiv |\{\sigma\in\Homs{K}{H}\mid
                           \sigma \text{ is legitimate}\}|\pmod2 \\
            &= \ \sum_{j=1}^k \big|\eqclass{\sigma_j}_Y\big|\\
            &\rule[-1.5ex]{0pt}{4ex}
               \equiv \ \big|\{j\in[k] \mid
                        I(\sigma_j)\text{ is independent}\}\big| \pmod2 \\
            &= \!\!\sum_{X\in \calI(G)}\!\!\!
                   \big|\{\sigma_j\mid j\in[k] \text{ and } I(\sigma_j)=X\}\big| \\
            &\equiv \ |\calI(G)| \pmod2\,,
    \end{align*}
    where the final equivalence is because the number of $\sigma_j$
    such that $I(\sigma)=X$ is exactly $|\Oy-i|^{|V(G)\setminus
      X|}$, which is odd because $|\Oy|$~is even.
\end{proof}

\subsection{Caterpillar gadgets}
\label{sec:gadgets:caterpillar}

All our hardness gadgets use the following ``caterpillar
gadgets'' as~$J_3$.  We will also use two further kinds of gadget,
``neighbourhood gadgets'' and ``$\ell$-cycle gadgets'', but we defer
their definitions to the sections where they are used.  As we will see
in the following section, caterpillar gadgets rely on $H$~being
square-free.

\begin{definition}
\label{defn:caterpillar}
    (See Figure~\ref{fig:caterpillar}).
    For a path $P=v_0\dots v_k$ in~$H$ of length at least~$1$,
    define the \emph{caterpillar gadget} $J_P=(G,\tau)$ as follows.
    $V(G)=\{u_1, \dots, u_{k-1}, w_1,\dots, w_{k-1},y,z\}$
    and $G$~is the path $yu_1\dots u_{k-1}z$ together with
    edges $(u_j,w_j)$ for $1\leq j\leq k-1$.
    $\tau=\{w_1\mapsto v_1, \dots, w_{k-1}\mapsto v_{k-1}\}$.
\end{definition}

\begin{figure}
\begin{center}
\begin{tikzpicture}[scale=1.5,node distance = 1.5cm]
\tikzstyle{dot}   =[fill=black, draw=black, circle, inner sep=0.15mm]
\tikzstyle{vertex}=[fill=black, draw=black, circle, inner sep=2pt]
\tikzstyle{dist}  =[fill=white, draw=black, circle, inner sep=2pt]
\tikzstyle{pinned}=[draw=black, minimum size=11mm, circle]

    \node[dist]   (y)  at (0  ,1) [label=90:$y$] {};
    \node[vertex] (u1) at (1  ,1) [label=90:$u_1$] {};
    \node[vertex] (u2) at (2  ,1) [label=90:$u_2$] {};
    \node[vertex] (u3) at (3.5,1) [label=90:$u_{k-2}$] {};
    \node[vertex] (u4) at (4.5,1) [label=90:$u_{k-1}$] {};
    \node[dist]   (z)  at (5.5,1) [label=90:$z$] {};

    \node[pinned] (w1) at (1  ,0) [label=-90:$w_1$] {$v_1$};
    \node[pinned] (w2) at (2  ,0) [label=-90:$w_2$] {$v_2$};
    \node[pinned] (w3) at (3.5,0) [label=-90:$w_{k-2}$] {$v_{k-2}$};
    \node[pinned] (w4) at (4.5,0) [label=-90:$w_{k-1}$] {$v_{k-1}$};

    \draw (y) -- (u2);
    \draw (z) -- (u3);

    \foreach \x in {1,2,3,4}
        \draw (u\x) -- (w\x);

    \node[dot] (b) at ($(u2)!0.5!(u3)$) {};
    \node[dot] (a) at ($(b)!1mm!(u2)$) {};
    \node[dot] (c) at ($(b)!1mm!(u3)$) {};
    \draw (u2) -- ($(u2)!0.75!(a)$);
    \draw (u3) -- ($(u3)!0.75!(c)$);

\end{tikzpicture}
\end{center}
\caption{The caterpillar gadget corresponding to a path $v_0\dots
  v_k$.  The vertices $w_1, \dots, w_{k-1}$ in the gadget are pinned
  to vertices $v_1, \dots, v_{k-1}$ in~$H$, respectively.}
\label{fig:caterpillar}
\end{figure}
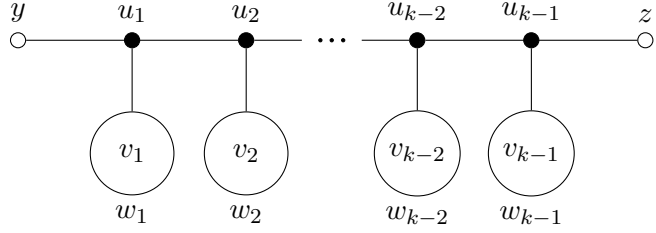

Note that, if $P$ is a single edge, $G(J_P)$~is also the single edge
$(y,z)$ and $\tau(J_P) = \emptyset$.

In the following, we will repeatedly make use of the following fact
about square-free graphs: if two distinct vertices have a common
neighbour, they must have a unique common neighbour, since a pair of
vertices with two common neighbours would form a $4$-cycle.

\begin{lemma}
\label{lem:Jcat-tech}
    Let $H$ be a square-free graph, let $k>0$ and let $P=v_0\dots
    v_k$ be a path in $H$.
    \begin{enumerate}
    \item For any $a\in\Gamma_H(v_0)-v_1$ and $\sigma \in
        \Homs{(J_P,y)}{(H,a)}$, $\sigma(u_j) = v_{j-1}$ for all
        $j\in[k-1]$.
    \item For any $b\in\Gamma_H(v_k)-v_{k-1}$ and $\sigma \in
        \Homs{(J_P,z)}{(H,b)}$, $\sigma(u_j) = v_{j+1}$ for all
        $j\in[k-1]$.
    \end{enumerate}
\end{lemma}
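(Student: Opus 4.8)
The plan is to prove both parts by induction on the length of the path, using the square-free hypothesis at each step to force the image of the next internal vertex. I will treat part~(1) in detail; part~(2) is symmetric (reverse the path).

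First I would set up the induction. Recall that $J_P$ has underlying graph the path $y u_1 \cdots u_{k-1} z$ together with pendant edges $(u_j, w_j)$, with each $w_j$ pinned to $v_j$. Fix $a \in \Gamma_H(v_0) - v_1$ and a homomorphism $\sigma \in \Homs{(J_P,y)}{(H,a)}$, so $\sigma(y) = a$ and $\sigma(w_j) = v_j$ for all $j \in [k-1]$. I want to show $\sigma(u_j) = v_{j-1}$ for every $j \in [k-1]$. \emph{Base case} ($j=1$): the vertex $u_1$ is adjacent to $y$ and to $w_1$ in $J_P$, so $\sigma(u_1)$ is a common neighbour in $H$ of $\sigma(y) = a$ and $\sigma(w_1) = v_1$. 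But $v_0$ is also a common neighbour of $a$ and $v_1$ (since $a \in \Gamma_H(v_0)$ and $v_0 v_1$ is an edge of the path $P$), and $a \neq v_1$, so $a$ and $v_1$ are distinct vertices with a common neighbour; by the square-free property quoted just before the lemma, that common neighbour is unique, hence $\sigma(u_1) = v_0 = v_{1-1}$.

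Next I would do the \emph{inductive step}: suppose $1 \le j \le k-2$ and $\sigma(u_j) = v_{j-1}$. The vertex $u_{j+1}$ is adjacent in $J_P$ to $u_j$ and to $w_{j+1}$, so $\sigma(u_{j+1})$ is a common neighbour of $\sigma(u_j) = v_{j-1}$ and $\sigma(w_{j+1}) = v_{j+1}$. Now $v_j$ is a common neighbour of $v_{j-1}$ and $v_{j+1}$ (consecutive edges of $P$), and $v_{j-1} \neq v_{j+1}$ because $P$ is a path and therefore does not repeat vertices. Again by square-freeness, $v_{j-1}$ and $v_{j+1}$ have a unique common neighbour, forcing $\sigma(u_{j+1}) = v_j = v_{(j+1)-1}$. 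This completes the induction, proving part~(1). For part~(2), apply part~(1) to the reversed path $P' = v_k v_{k-1} \cdots v_0$: a homomorphism from $(J_P, z)$ to $(H,b)$ is exactly a homomorphism from $(J_{P'}, y)$ to $(H,b)$ under the relabelling $u_j \leftrightarrow u_{k-j}$, and $b \in \Gamma_H(v_k) - v_{k-1}$ is the condition $b \in \Gamma_H(v_0') - v_1'$ for $P'$; part~(1) then yields $\sigma(u_j) = v_{j+1}$ for all $j \in [k-1]$.

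The only subtlety — and the thing to be careful about rather than a genuine obstacle — is making sure the two vertices whose common neighbour we are pinning down are actually distinct, so that the square-free uniqueness argument applies: in the base case this is the hypothesis $a \neq v_1$, and in the inductive step it is the fact that a path does not repeat vertices, so $v_{j-1} \neq v_{j+1}$. One should also note the degenerate case $k=1$, where $J_P$ is just the single edge $(y,z)$ with empty pinning and there are no $u_j$'s, so the statement holds vacuously; and the case $k=2$, where only the base case of the induction is needed. No inductive step is ever applied when $j=k-1$, so we never need a ``$v_k$ is a common neighbour'' claim in part~(1), which is good because $v_k$ need not be adjacent to anything relevant there.
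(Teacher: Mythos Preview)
Your proof is correct and follows essentially the same approach as the paper's: induction on~$j$ using square-freeness to pin down the unique common neighbour at each step, with part~(2) obtained from part~(1) by reversing the path. Your version is, if anything, slightly more careful in spelling out the distinctness checks and the degenerate cases $k=1,2$.
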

\begin{proof}
    The result is trivial for $k=1$ so we assume $k>1$.
    We prove the first part, by induction on~$j$.  The second part
    follows by symmetry (call the vertices on the path $v_k\dots
    v_0$ instead of $v_0\dots v_k$).

    First, take $j=1$. From the structure of~$J_P$, $\sigma(u_1)$ must
    be a neighbour of $\sigma(y)=a$ and of~$v_1$, which are distinct
    vertices.  $v_0$~is a common neighbour of $a$ and~$v_1$, so it must
    be their unique common neighbour, so $\sigma(u_1) = v_0$.
    Now, suppose that $\sigma(u_{j-1}) = v_{j-2}$.  As in the base case,
    $\sigma(u_j)$ must be some neighbour of $v_{j-2}$ and~$v_j$, which
    are distinct. $v_{j-1}$~is such a vertex, so it is the unique such
    vertex.
\end{proof}

\begin{lemma}\label{lem:J3-caterpillar}
    Let $H$ be a square-free graph. Let $k>0$ and let $P=v_0\dots
    v_k$ be a path in~$H$ with $\deg_H(v_j)$ odd for all
    $j\in\{1, \dots, k-1\}$.  Let $\Oy\subseteq\Gamma_H(v_0)$ and
    $\Oz\subseteq\Gamma_H(v_k)$, with $i=v_1\in \Oy$ and
    $s=v_{k-1}\in \Oz$.  For each $o\in\Oy-i$ and each $x\in\Oz-s$:
    \begin{enumerate}
    \item $|\Homs{(J_P,y,z)}{(H,o,x)}| = 0$,\label{cond:ox-cat}
    \item $|\Homs{(J_P,y,z)}{(H,o,s)}| = 1$,\label{cond:os-cat}
    \item $|\Homs{(J_P,y,z)}{(H,i,x)}| = 1$ and\label{cond:ix-cat}
    \item $|\Homs{(J_P,y,z)}{(H,i,s)}|$ is odd.\label{cond:is-cat}
    \end{enumerate}
\end{lemma}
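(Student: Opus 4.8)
The plan is to read off parts~(1)--(3) from Lemma~\ref{lem:Jcat-tech} and to prove part~(4) by induction on~$k$. Note first that a homomorphism $\sigma\in\Homs{(J_P,y,z)}{(H,a,b)}$ is completely determined by its values on $u_1,\dots,u_{k-1}$ (since $\sigma(y)=a$, $\sigma(z)=b$ and $\sigma(w_j)=v_j$ are forced), and that $\sigma(u_1)\cdots\sigma(u_{k-1})$ must be a walk from a neighbour of~$a$ to a neighbour of~$b$ with $\sigma(u_j)\in\Gamma_H(v_j)$ for each~$j$. I would dispose of the degenerate case $k=1$ first: then $J_P$ is the single edge $(y,z)$ with $\tau=\emptyset$, $i=v_1$, $s=v_0$, and (1)--(4) just ask whether certain pairs span an edge of~$H$. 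Parts (2),~(3),~(4) are immediate since $v_0\sim v_1$, every $o\in\Oy-i\subseteq\Gamma_H(v_0)$ is adjacent to~$v_0$, and every $x\in\Oz-s\subseteq\Gamma_H(v_1)$ is adjacent to~$v_1$. For part~(1) I need $o\not\sim x$ whenever $o\in\Gamma_H(v_0)-v_1$ and $x\in\Gamma_H(v_1)-v_0$; loop-freeness gives $o\neq v_0$, $x\neq v_1$, $o\neq x$, so an edge $o\sim x$ would make $o,v_0,v_1,x$ a $4$-cycle, contradicting square-freeness.

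For $k\geq2$, parts (1)--(3) follow by inspection from Lemma~\ref{lem:Jcat-tech}. In part~(1), a homomorphism to $(H,o,x)$ with $o\in\Gamma_H(v_0)-v_1$ and $x\in\Gamma_H(v_k)-v_{k-1}$ would be forced by Lemma~\ref{lem:Jcat-tech}(1) to have $\sigma(u_1)=v_0$ and by Lemma~\ref{lem:Jcat-tech}(2) to have $\sigma(u_1)=v_2$; since $v_0\neq v_2$ there is none. For part~(2), Lemma~\ref{lem:Jcat-tech}(1) forces $\sigma(u_j)=v_{j-1}$ for all~$j$, so there is at most one homomorphism to $(H,o,v_{k-1})$; a routine check (the edges of~$H$ used are $o\sim v_0$ and the path edges $v_{j-1}v_j$) shows this forced assignment really is a homomorphism, so the count is exactly~$1$. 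Part~(3) is symmetric, Lemma~\ref{lem:Jcat-tech}(2) forcing $\sigma(u_j)=v_{j+1}$.

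The heart of the matter is part~(4): that $N:=|\Homs{(J_P,y,z)}{(H,v_1,v_{k-1})}|$ is odd. I would induct on~$k$, the case $k=1$ giving $N=1$. For the step ($k\geq2$), set $P''=v_0\cdots v_{k-1}$ and classify each homomorphism by the value $c=\sigma(u_{k-1})$. Deleting $w_{k-1}$ and~$z$ and treating $u_{k-1}$ as the new distinguished vertex exhibits a bijection between homomorphisms $\sigma$ with $\sigma(u_{k-1})=c$ and homomorphisms in $\Homs{(J_{P''},y,z)}{(H,v_1,c)}$, provided $c\in\Gamma_H(v_{k-1})$ (the only extra constraint the deleted leg imposes). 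Hence $N=\sum_{c\in\Gamma_H(v_{k-1})}|\Homs{(J_{P''},y,z)}{(H,v_1,c)}|$. The term $c=v_{k-2}$ is odd by the induction hypothesis applied to~$P''$, whose interior degrees $\deg_H(v_1),\dots,\deg_H(v_{k-2})$ are odd by assumption; each of the other $\deg_H(v_{k-1})-1$ terms equals~$1$, because for $c\in\Gamma_H(v_{k-1})-v_{k-2}$ Lemma~\ref{lem:Jcat-tech}(2) applied to~$P''$ forces the homomorphism and a check shows it is valid. Since $\deg_H(v_{k-1})-1$ is even, $N\equiv1\pmod2$. (When $k=2$ the graph $J_{P''}$ degenerates to a single edge and the sum is just $\deg_H(v_1)$, which is odd --- consistent with the general step.)

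I expect the main obstacle to be the bookkeeping in the inductive step of~(4): verifying carefully that deleting the final leg introduces no constraint beyond $c\in\Gamma_H(v_{k-1})$, so that the sum over~$c$ genuinely decomposes~$N$, and checking that the induction hypothesis for~$P''$ only needs the degree parities that are actually available. Everything else reduces to confirming that explicitly described assignments are homomorphisms, which is routine but is exactly where square-freeness (through Lemma~\ref{lem:Jcat-tech}) and the odd-degree hypotheses enter.
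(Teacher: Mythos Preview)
Your proof is correct. Parts~(1)--(3) and the $k=1$ case match the paper's argument essentially verbatim. For part~(4), however, you take a genuinely different route: the paper gives a direct (non-inductive) partition of $\Homs{(J_P,y,z)}{(H,i,s)}$ into the single ``shifted-forward'' homomorphism~$\sigma^+$ (with $\sigma^+(u_j)=v_{j+1}$) together with sets $S_1,\dots,S_{k-1}$, where $S_m$ collects those $\sigma$ for which $m$ is the least index with $\sigma(u_m)\neq v_{m+1}$; Lemma~\ref{lem:Jcat-tech} then pins down $\sigma$ on $u_{m+1},\dots,u_{k-1}$, leaving only $\sigma(u_m)\in\Gamma_H(v_m)-v_{m+1}$ free, so $|S_m|=\deg_H(v_m)-1$ is even and the total is odd. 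Your argument instead inducts on~$k$, splitting on the \emph{last} coordinate $\sigma(u_{k-1})$ and recognising the residual gadget as $J_{P''}$ for the truncated path. The two decompositions are in some sense dual (first point of deviation from $\sigma^+$ versus last coordinate), and both lean on Lemma~\ref{lem:Jcat-tech} at the key moment. The paper's version avoids induction and makes the role of each odd-degree hypothesis $\deg_H(v_m)$ visible in a single formula; your version is arguably more modular in that it reuses part~(3) for the shorter path, at the cost of the extra bookkeeping you flag (and handle) for the bijection and the $k=2$ degeneration.
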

\begin{proof}
    If $k=1$, $i=v_1$, $s=v_0$, $G(J_P)$ is the single edge $(y,z)$
    and $\tau(J_P)=\emptyset$.  For any $o\in \Oy-i$ and $x\in\Oy-s$,
    we have $(o,s), (i,s), (i,x)\in E(H)$ so $(o,x)\notin E(H)$
    because $H$~is square-free.  Parts 1--4 are immediate.
    For the remainder of the proof, we may assume that $k\geq 2$. 
    Note that when $k=2$, $i=s=v_1$ and this is the unique common
    neighbour of $v_0$ and~$v_2$ in~$H$.

    For part~1, suppose, towards a contradiction, that $\sigma \in
    \Homs{(J_P,y,z)}{(H,o,x)}$.  In particular, $\sigma \in
    \Homs{(J_P,y)}{(H,o)}$ so, by Lemma~\ref{lem:Jcat-tech}(1),
    $\sigma(u_1)=v_0$.  We also have $\sigma \in
    \Homs{(J_P,z)}{(H,x)}$ so, by Lemma~\ref{lem:Jcat-tech}(2),
    $\sigma(u_1)=v_2$.  But $P$~is a simple path so $v_0\neq v_2$.

    For part~2, let $\sigma\in \Homs{(J_P,y,z)}{(H,o,s)}$.  Since
    $\sigma\in \Homs{(J_P,y)}{(H,o)}$, $\sigma(u_j) = v_{j-1}$ for all
    $i\in[k-1]$ by Lemma~\ref{lem:Jcat-tech}(1).  But now, $\sigma$
    is completely determined, so it is the unique element of
    $\Homs{(J_P,y,z)}{(H,o,s)}$.  Part~3 follows similarly from
    Lemma~\ref{lem:Jcat-tech}(2).

    For part~4, first note that there is a
    homomorphism $\sigma^+\in \Homs{(J_P,y,z)}{(H,i,s)}$ with
    $\sigma^+(u_j) = v_{j+1}$ for all $j\in[k-1]$.  Now, for $m\in
    [k-1]$, let
    \begin{equation*}
        S_m = \{\sigma \in \Homs{(J_P,y,z)}{(H,i,s)} \mid m \text{ is
          minimal such that }\sigma(u_m)\neq v_{m+1}\}\,.
    \end{equation*}
    The sets $\{\sigma^+\}$ and $S_1, \dots, S_{k-1}$ partition
    $\Homs{(J_P,y,z)}{(H,i,s)}$.

    We claim that, for any $\sigma\in S_m$, $\sigma(u_j) = v_{j-1}$
    for all $j>m$.  This is trivial for~$S_{k-1}$ so let $\sigma\in
    S_m$ with $m<k-1$.  $\sigma(u_{m+1})$ must be a neighbour of
    both $\sigma(w_{m+1})=v_{m+1}$ and $\sigma(u_m)\in\Gamma_H(v_m)$.
    By definition of~$S_m$, these are distinct vertices so $v_m$~is
    their unique common neighbour and so $\sigma(u_{m+1}) = v_m$.
    Now, if $\sigma(u_j)=v_{j-1}$ for
    some $j\in \{m+1, \dots, k-2\}$, then $\sigma(u_{j+1})$ must be
    a neighbour of both $\sigma(w_{j+1}) = v_{j+1}$ and $v_{j-1}$: $v_j$~is
    the unique such vertex, so $\sigma(u_{j+1})=v_j$.  This
    establishes the claim.

    But, now, for any $\sigma\in S_m$, we have $\sigma(u_j) = v_{j+1}$
    for $j<m$ and $\sigma(u_j) = v_{j-1}$ for $j>m$.  $\sigma(y)=i$,
    $\sigma(z)=s$ and $\sigma(w_j)=v_j$ for each $j\in[k-1]$.
    Finally, $\sigma(u_m)$ may take any value in
    $\Gamma_H(v_m)-v_{m+1}$.  It follows that, for all~$m$, $|S_m| =
    \deg_H(v_m)-1$, which is even.  $|\Homs{(J_P,y,z)}{(H,i,s)}| = 1 +
    \sum_m|S_m|$, which is odd, as required.
\end{proof}

\subsection{Caterpillar gadgets and 4-cycles}
\label{sec:gadgets:squares}

Before proceeding to find hardness gadgets for square-free graphs in
the next section, we pause to show why $4$-cycles cause problems for
caterpillar gadgets and, in particular, why
Lemma~\ref{lem:J3-caterpillar} does not apply to graphs containing
$4$-cycles.

Consider first the one-edge caterpillar gadget~$J_1$ associated with
the path~$v_0v_1$ in the graph~$H_1$ in
Figure~\ref{fig:caterpillar-squares}.  This corresponds to $k=1$ in
Lemma~\ref{lem:J3-caterpillar} and we have $i=v_1$ and $s=v_0$.
Taking $\Oy = \Gamma_{\!H_1}(v_0) = \{v'_0,v_1\}$ and $\Oz =
\Gamma_{\!H_1}(v_1) = \{v_0,v'_1\}$ satisfies the conditions of the
lemma.  However, taking $o = v'_0 \in \Oy-i$ and $x = v'_1 \in \Oz -
s$, we have $|\Homs{(J_1,y,z)}{(H,o,x)}| = 1$ so
part~\ref{cond:ox-cat} of the lemma does not hold.  However, the other
three parts hold, as
\begin{align*}
    |\Homs{(J_1,y,z)}{(H,o,s)}|
        &= |\Homs{(J_1,y,z)}{(H,i,x)}| \\
        &= |\Homs{(J_1,y,z)}{(H,i,s)}|
        = 1\,.
\end{align*}

\begin{figure}[t]
\begin{center}
\begin{tikzpicture}[scale=1.2,node distance = 1.5cm]
\tikzstyle{dot}   =[fill=black, draw=black, circle, inner sep=0.15mm]
\tikzstyle{vertex}=[fill=black, draw=black, circle, inner sep=2pt]
\tikzstyle{dist}  =[fill=white, draw=black, circle, inner sep=2pt]
\tikzstyle{pinned}=[fill=white, draw=black, minimum size=9mm, circle,inner sep=0pt]

   \begin{scope}[shift={(-3,0)}]
       \node at (0,0.5) {$H_1$:};

       \node[vertex] (o) at (2,1) [label=315:$o$] {};
       \node[vertex] (s) at (2,0) [label= 45:$s$] [label=-90:$v_0$] {};
       \node[vertex] (i) at (3,0) [label=135:$i$] [label=-90:$v_1$] {};
       \node[vertex] (x) at (3,1) [label=225:$x$] {};

       \node at ($(o) + (210:0.4)$) {$v'_0$};
       \node at ($(x) + (330:0.4)$) {$v'_1$};

       \node[vertex] (a) at ($(x) + (45:1)$) {};
       \node[vertex] (b) at ($(o) + (150:1)$) {};
       \node[vertex] (c) at ($(b) + (-1,0)$) {};

       \draw (o) -- (s) -- (i) -- (x) -- (a);
       \draw (x) -- (o) -- (b) -- (c);
   \end{scope}

   \begin{scope}[shift={(-3,-2.5)}]
       \node at (0,1) {$J_1$:};

       \node[dist] (yy) at (2,1) [label=90:$y$] {};
       \node[dist] (zz) at (3,1) [label=90:$z$] {};
       \draw (yy) -- (zz);
   \end{scope}

   \begin{scope}[shift={(3,0)}]
       \node at (-0.75,0.5) {$H_k$:};

       \node[vertex] (v0)   at (0.0,0) [label=-90:$v_0$]                    {};
       \node[vertex] (v1)   at (1.0,0) [label=-90:$v_1$]    [label= 45:$i$] {};
       \node[vertex] (vk1)  at (2.5,0) [label=-90:$v_{k-1}$] [label=135:$s$] {};
       \node[vertex] (vk)   at (3.5,0) [label=-90:$v_k$]                    {};

       \node[vertex] (vd0)  at (0.0,1) [label=90:$v'_0$] [label=315:$o$] {};
       \node[vertex] (vd1)  at (1.0,1) [label=90:$v'_1$]                 {};
       \node[vertex] (vdk1) at (2.5,1) [label=90:$v'_{k-1}$]              {};
       \node[vertex] (vdk)  at (3.5,1)                   [label=225:$x$] {};
       \node at ($(vdk) + (330:0.4)$) {$v'_k$};

       \node[vertex] (d)   at ($(vdk) + (45:1)$) {};

       \draw (v0)  -- (vd0);
       \draw (v1)  -- (vd1);
       \draw (vk1) -- (vdk1);
       \draw (vk)  -- (vdk);

       \draw (v0)   -- (v1);
       \draw (vd0)  -- (vd1);
       \draw (vk1)  -- (vk);
       \draw (vdk1) -- (vdk) -- (d);

       \node[dot] (dot2) at ($(v1)!0.5!(vk1)$) {};
       \node[dot] (dot1) at ($(dot2)!1mm!(v1)$) {};
       \node[dot] (dot3) at ($(dot2)!1mm!(vk1)$) {};
       \draw (v1) -- ($(v1)!0.75!(dot1)$);
       \draw (vk1) -- ($(vk1)!0.75!(dot3)$);

       \node[dot] (dot5) at ($(vd1)!0.5!(vdk1)$) {};
       \node[dot] (dot4) at ($(dot5)!1mm!(vd1)$) {};
       \node[dot] (dot6) at ($(dot5)!1mm!(vdk1)$) {};
       \draw (vd1) -- ($(vd1)!0.75!(dot4)$);
       \draw (vdk1) -- ($(vdk1)!0.75!(dot6)$);
   \end{scope}

   \begin{scope}[shift={(3,-2.5)}]
       \node at (-0.75,1) {$J_P$:};
       \node[dist]    (y)   at (0.0,1) [label=90:$y$]      {};
       \node[vertex]  (u1)  at (1.0,1) [label=90:$u_1$]    {};
       \node[vertex]  (uk1) at (2.5,1) [label=90:$u_{k-1}$] {};
       \node[dist]    (z)   at (3.5,1) [label=90:$z$]      {};

       \node[pinned]  (w1)  at (1.0,0) {$v_1$};
       \node[pinned]  (wk1) at (2.5,0) {$v_{k-1}$};

       \draw (y) -- (u1) -- (w1);
       \draw (z) -- (uk1) -- (wk1);

       \node[dot] (dot8) at ($(u1)!0.5!(uk1)$) {};
       \node[dot] (dot7) at ($(dot8)!1mm!(u1)$) {};
       \node[dot] (dot9) at ($(dot8)!1mm!(uk1)$) {};
       \draw (u1) -- ($(u1)!0.75!(dot7)$);
       \draw (uk1) -- ($(uk1)!0.75!(dot9)$);
   \end{scope}

\end{tikzpicture}
\end{center}
\caption{Examples of graphs containing $4$-cycles for which
  caterpillar gadgets (Definition~\ref{defn:caterpillar} and
  Lemma~\ref{lem:J3-caterpillar}) fail.  The graphs $H_1$ and~$H_k$
  ($k\geq2$) are shown, along with the caterpillar gadgets $J_1$
  and~$J_P$, corresponding to the paths $v_0v_1$ and $v_0\dots v_k$,
  respectively.  The labels $o$, $s$, $i$ and~$x$ are referenced in
  the text.}
\label{fig:caterpillar-squares}
\end{figure}
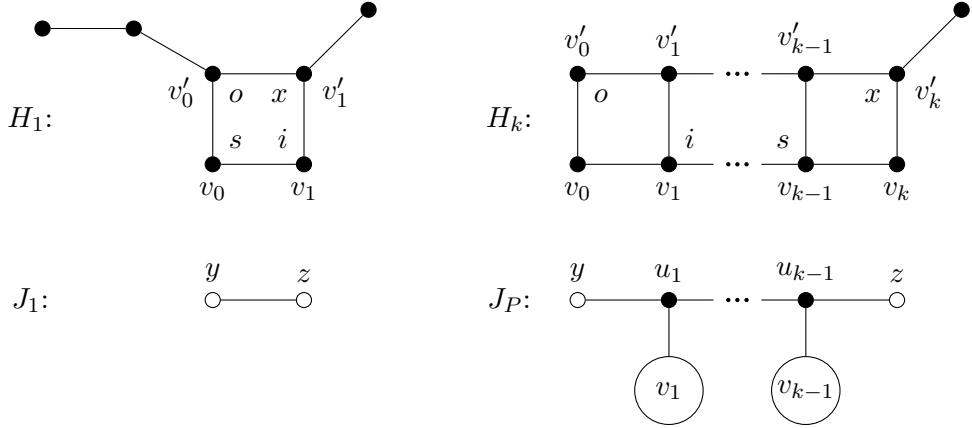

Now, consider longer paths such as the path $P=v_0\dots v_k$ in~$H_k$
in Figure~\ref{fig:caterpillar-squares}, for some $k\geq 2$.  The
associated caterpillar gadget~$J_P$ is also shown in the figure.  For
each $j\in \{1, \dots, k-1\}$, $\deg_{H_k}(v_i)$ is odd.  We have $i =
v_1$ and $s = v_{k-1}$ (with $i=s$ in the case $k=2$).    
Again, take $\Oy
= \Gamma_{\!H_k}(v_0) = \{v'_0,v_1\}$, take $\Oz = \Gamma_{\!H_k}(v_k)
= \{v_{k-1},v'_k\}$ and take $o = v'_0 \in \Oy-i$ and $x = 
v'_k
\in \Oz - s$.

Once again part~\ref{cond:ox-cat} of the lemma fails.  We have
$|\Homs{(J_P,y,z)}{(H_k,o,x)}| = 1$, since there is a homomorphism
that maps $u_j$ to~$v'_j$ for each $j\in\{1, \dots, k-1\}$.  This is
the only possible homomorphism from $(J_P,y,z)$ to~$(H_k,o,x)$ since
there is only one $k$-path from $o$ to~$x$ that the $k$-path in 
$J_P$
can be mapped to.  For a hardness gadget, it would suffice for
$|\Homs{(J_P,y,z)}{(H_k,o,x)}|$ to be even (not necessarily zero) but
it is odd for every~$k$.

For $H_k$, the other parts of the lemma fail, too.  We have
\begin{equation*}
    |\Homs{(J_P,y,z)}{(H,o,s)}|
        = |\Homs{(J_P,y,z)}{(H,i,x)}|
        = k.
\end{equation*}
When the target is~$(H,o,s)$,
the $k$-path in $J_P$ can be mapped to any of the $k$ $k$-paths in~$H_k$
from $o$ to~$s$ 
(following along $v'_0, v'_1,\cdots$ and then dropping down along an edge
$v'_j,v_j$ and then following $v_j,v_{j+1} \cdots v_{k-1}$). The  
case with target $(H,i,x)$ is similar.
So in both cases, the number of homomorphisms is~$k$.
When $k$~is odd,
this is not a real problem.  The purpose of
Lemma~\ref{lem:J3-caterpillar} is to show that caterpillar gadgets can
be used as~$J_3$ in a hardness gadget, and the definition of hardness
gadgets only requires that $|\Sos|$ and~$|\Six|$ (i.e.,
$|\Homs{(J_P,y,z)}{(H,o,s)}|$ and $|\Homs{(J_P,y,z)}{(H,i,x)}|$,
respectively) be odd and not necessarily~$1$.  However, this
relaxation doesn't help when $k$~is even.

Finally, for part~\ref{cond:is-cat}, consider a homomorphism from
$(J_P,y,z)$    
to~$(H,i,s)$.  
The image of the path $yu_1\dots u_{k-1}z$
in~$H$ must be a $k$-walk $v_1x_1\dots x_{k-1}v_{k-1}$ with the
property that $x_j$~is adjacent to~$v_j$ for each $j\in\{1, \dots,
k-1\}$.  This means that $x_j\in\{v_{j-1}, v'_j, v_{j+1}\}$.  There are
two kinds of $k$-walk satisfying these criteria.  The first kind uses
only the vertices $\{v_0, \dots, v_k\}$.  Such a walk must be either
$v_1v_0v_1 v_2\dots v_{k-1}$ or $v_1\dots v_\alpha v_{\alpha+1}
v_\alpha \dots v_{k-1}$ for some $\alpha\in\{1, \dots, k-1\}$.  The
second kind uses some of the vertices $\{v'_1, \dots, v'_{k-1}\}$.  Such
a walk must be of the form $v_1\dots v_\alpha v'_\alpha \dots v'_\beta
v_\beta \dots v_{k-1}$ for some $1\leq \alpha\leq \beta\leq k-1$.
There are $k$ walks of the first kind and $\tfrac12k(k-1)$ of the
second.  Thus,
\begin{equation*}
    |\Homs{(J_1,y,z)}{(H,i,s)}| = k + \tfrac12k(k-1) = \tfrac12k(k+1)\,,
\end{equation*}
which is odd if and only if $k$~is congruent to $1$ or~$2$, mod~$4$
but is required to be odd for all~$k$.

\begin{figure}[t]
\begin{center}
\begin{tikzpicture}[scale=1.2,node distance = 1.5cm]
\tikzstyle{dot}   =[fill=black, draw=black, circle, inner sep=0.15mm]
\tikzstyle{vertex}=[fill=black, draw=black, circle, inner sep=2pt]
\tikzstyle{dist}  =[fill=white, draw=black, circle, inner sep=2pt]
\tikzstyle{pinned}=[fill=white, draw=black, minimum size=9mm, circle,inner sep=0pt]

   \begin{scope}[shift={(-3,0)}]
       \node at (-0.75,0.5) {$H_k$:};

       \node[vertex] (v0)   at (0.0,0) [label=-90:{$\strut v_0$}]
                                                   [label=45:$o$] {};
       \node[vertex] (v1)   at (1.0,0) [label=-90:{$\strut v_1$}]
                                                   [label=45:$s$] {};
       \node[vertex] (v2)   at (2.0,0) [label=-90:{$\strut v_2$}] {};
       \node[vertex] (vk)   at (3.5,0) [label=-90:{$\strut v_k$}] {};

       \node[vertex] (vd0)  at (0.0,1) [label=90:{$\strut v'_0$}] {};
       \node[vertex] (vd1)  at (1.0,1) [label=90:{$\strut v'_1$}]
                                                  [label=-45:$i$] {};
       \node[vertex] (vd2)  at (2.0,1) [label=90:{$\strut v'_2$}]
                                                  [label=-45:$x$] {};
       \node[vertex] (vdk)  at (3.5,1) {};
       \node at ($(vdk) + (345:0.4)$) {$\strut v'_k$};

       \node[vertex] (d)   at ($(vdk) + (45:1)$) {};

       \draw (v0)  -- (vd0);
       \draw (v1)  -- (vd1);
       \draw (v2)  -- (vd2);
       \draw (vk)  -- (vdk);

       \draw (v0)  -- (v2);
       \draw (vd0) -- (vd2);
       \draw (vdk) -- (d);

       \node[dot] (dot2) at ($(v2)!0.5!(vk)$) {};
       \node[dot] (dot1) at ($(dot2)!1mm!(v2)$) {};
       \node[dot] (dot3) at ($(dot2)!1mm!(vk)$) {};
       \draw (v2) -- ($(v2)!0.75!(dot1)$);
       \draw (vk) -- ($(vk)!0.75!(dot3)$);

       \node[dot] (dot5) at ($(vd2)!0.5!(vdk)$) {};
       \node[dot] (dot4) at ($(dot5)!1mm!(vd2)$) {};
       \node[dot] (dot6) at ($(dot5)!1mm!(vdk)$) {};
       \draw (vd2) -- ($(vd2)!0.75!(dot4)$);
       \draw (vdk) -- ($(vdk)!0.75!(dot6)$);
   \end{scope}

   \begin{scope}[shift={(3,1.75)}]
       \node at (0,0) {$J_1$:};

       \node[pinned] (p1) at (1,0) {$\strut v'_0$};
       \node[dist]   (y)  at (2,0) [label=-90:$y$] {};
       \node[pinned] (p2) at (3,0) {$\strut v_1$};

       \draw (p1) -- (y) -- (p2);
   \end{scope}

   \begin{scope}[shift={(3,0.5)}]
       \node at (0,0) {$J_2$:};

       \node[pinned] (p3) at (1,0) {$\strut v'_1$};
       \node[dist]   (z)  at (2,0) [label=-90:$z$] {};
       \node[pinned] (p4) at (3,0) {$\strut v_2$};

       \draw (p3) -- (z) -- (p4);
   \end{scope}

   \begin{scope}[shift={(3,-0.75)}]
       \node at (0,0) {$J_3$:};

       \node[dist] (yy) at (1.5,0) [label=-90:$y$] {};
       \node[dist] (zz) at (2.5,0) [label=-90:$z$] {};

       \draw (yy) -- (zz);
   \end{scope}

\end{tikzpicture}
\end{center}
\caption{A hardness gadget for the graph~$H_k$ (see also
  Figure~\ref{fig:caterpillar-squares}).}
\label{fig:hardness-squares}
\end{figure}
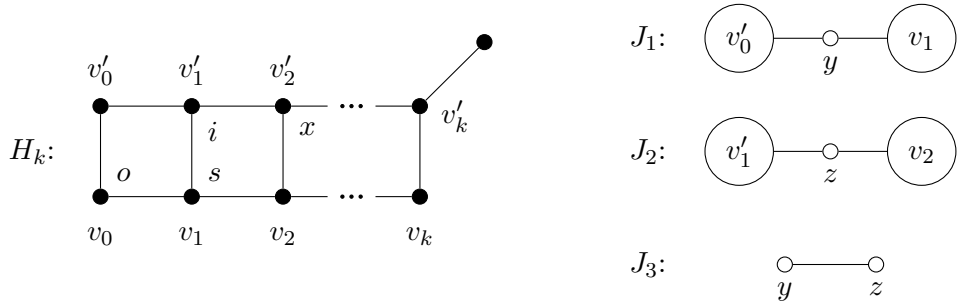

We note that \parhcol[H_1] is \parp{}-complete, as is \parhcol[H_k],
for every $k\geq 2$.  $H_1$~is an involution-free cactus graph with more
than one vertex so it is hard by the main theorem
of~\cite{GGR14:Cactus}.  We claim that $\calX = (i, s, (J_1,y),
(J_2,z), (J_3,y,z))$, as shown in Figure~\ref{fig:hardness-squares},
is a hardness gadget for~$H_k$.  We have $\Oy = \{v_0, v'_1\} =
\{o,i\}$ and $\Oz = \{v_1, v'_2\} = \{s,x\}$: both are even and
$i\in\Oy$ and $s\in\Oz$.  There is no edge~$ox$ in~$H_k$ so
$|\Sox|=0$, which is even.  There are edges $os$, $ix$ and~$is$
in~$H_k$, so $|\Sos| = |\Six| = |\Sis| = 1$, which is odd.  This
establishes that $\calX$ is a hardness gadget so, since $H_k$~is
involution-free, \parhcol[H_k] is \parp{}-complete by
Theorem~\ref{thm:hardness-gadget}.  Ironically, the part~$J_3$
of~$\calX$ is the one-edge caterpillar gadget associated with the
path~$v_1v'_1$ in~$H_k$.  The failure of
Lemma~\ref{lem:J3-caterpillar} in the presence of $4$-cycles only
means that caterpillar gadgets are not guaranteed to work, not that
they never work.

\section{Finding hardness gadgets}
\label{sec:finding-gadgets}

In this section, we show how to find hardness gadgets for all
connected, involution-free, square-free graphs.  The simplest case is when the
graph contains at least two vertices of even degree.  Faben and Jerrum
used the fact that all involution-free trees have at least two
even-degree vertices~\cite{FJ13}, though we use different gadgets
because we are dealing with graphs containing cycles as well as trees.
For graphs with
only one even-degree vertex, we show that an appropriate vertex
deletion produces a component with more than one even-degree vertex
and show how to simulate such a vertex deletion using gadgets.

This leaves graphs where every vertex has odd degree.  In
Section~\ref{sec:odd-cycles}, we show how to use odd-length cycles to
find a hardness gadget.  The remaining case, bipartite graphs in which
every vertex has odd degree, is covered in
Section~\ref{sec:gadget:bipartite}, where we use
Lemma~\ref{lem:Lovasz}, our version of \Lovasz{}'s result, to
non-constructively demonstrate that a hardness gadget always exists.

We will use the following fact.

\begin{lemma}
\label{lem:invo-free-cycle}
    An involution-free graph with at least two vertices but at most
    one even-degree vertex contains a cycle.
\end{lemma}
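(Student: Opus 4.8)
The plan is to argue by contradiction. Suppose $G$ contains no cycle (so $G$ is a forest) but is nevertheless involution-free, has at least two vertices, and has at most one vertex of even degree. Since an isolated vertex has degree~$0$, if every connected component of $G$ were a single vertex then, as $|V(G)|\ge 2$, there would be at least two even-degree vertices, a contradiction; so $G$ has a connected component $T$ with $|V(T)|\ge 2$, which is necessarily a tree. Note that $\deg_G(v)=\deg_T(v)$ for every $v\in V(T)$ because $T$ is a component, and that any automorphism of $T$, extended by the identity on $V(G)\setminus V(T)$, is an automorphism of $G$ of the same order; in particular an involution of $T$ yields an involution of $G$. So it suffices to exhibit an involution of~$T$.

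First I would observe that no two leaves of $T$ have a common neighbour: if leaves $a$ and~$b$ were both adjacent to a vertex $c$, the permutation swapping $a$ and~$b$ and fixing all other vertices would be an involution of $T$, contradicting that $G$ is involution-free. Next, let $T'$ be the tree obtained from $T$ by deleting all of its leaves (this is connected since the non-leaves of a tree induce a connected subtree). The crux of the argument is to show $|V(T')|\le 1$. If $|V(T')|\ge 2$ then $T'$ is a tree with at least two vertices, hence has at least two leaves, and I claim every leaf $v$ of $T'$ satisfies $\deg_T(v)=2$: indeed $v$ is not a leaf of $T$, so $\deg_T(v)\ge 2$; exactly one neighbour of $v$ lies in $T'$, and all its other neighbours are leaves of $T$, of which there is at most one by the previous observation, hence exactly one. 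Thus two distinct leaves of $T'$ would be two even-degree vertices of $G$, contradicting the hypothesis. Therefore $|V(T')|\le 1$.

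Finally I would rule out $|V(T')|=1$: if $V(T')=\{w\}$ then $w$ is the only non-leaf of $T$, so $T$ is a star centred at $w$ with at least one leaf; having two leaves would violate the no-common-neighbour property, and having exactly one leaf would force $w$ itself to be a leaf, contradicting $w\in V(T')$. Hence $V(T')=\emptyset$, so every vertex of $T$ is a leaf, and the only tree on at least two vertices all of whose vertices are leaves is a single edge. But a single edge admits the involution exchanging its two endpoints, which extends to an involution of $G$ --- the final contradiction. The only genuinely non-routine step is the leaf-deletion count establishing $|V(T')|\le1$; the rest is bookkeeping, and the argument incidentally re-derives the fact (used by Faben and Jerrum) that an involution-free tree on at least two vertices has at least two vertices of even degree.
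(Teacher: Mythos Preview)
Your proof is correct and follows essentially the same approach as the paper: reduce to a tree component of the forest and invoke the fact that an involution-free tree on at least two vertices has at least two even-degree vertices. The only difference is that the paper cites this fact as \cite[Lemma~5.3]{FJ13}, whereas you re-derive it from scratch via the leaf-deletion argument; your version is therefore self-contained, at the cost of being longer.
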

\begin{proof}
    We prove the contrapositive.  Let $G$ be an involution-free
    acyclic graph.  At most one component of $G$~is an isolated vertex
    so, if $G$~has two or more vertices, it has at least one component
    with two or more vertices. 
    This component is an
    involution-free tree which, by \cite[Lemma~5.3]{FJ13}, contains at
    least two vertices of even degree.
\end{proof}

\subsection{Even-degree vertices}\label{sec:even-degree}

We prove that involution-free graphs containing at
least one vertex of positive, even degree have a hardness gadget.  In
this section, we will use one extra kind of gadget.

\begin{definition}
    For a vertex~$v\in V(H)$, define the \emph{neighbourhood gadget}
    $J_{\Gamma(v),x} = (G, \{w\mapsto v\})$, where $G$~is the single edge
    $(x,w)$.
\end{definition}

It is immediate from the definition that, for any $v\in V(H)$,
\begin{equation*}
    |\Homs{(J_{\Gamma(v),x},x)}{(H,u)}| = \begin{cases}
                                     \ 1 & \text{if $u\in\Gamma_H(v)$} \\
                                     \ 0 & \text{otherwise.}
                                     \end{cases}
\end{equation*}

We first show how to find hardness
gadgets for connected graphs containing at least two even-degree
vertices (their degree must be positive, since the graph is connected)
and then deal with the harder case of graphs containing exactly one
vertex of positive, even degree.  The following lemma constructs a
caterpillar gadget, so the lemma depends
on $H$~being square-free.  The extended
conclusion about pinned vertices is needed for technical reasons in
the proof of Lemma~\ref{lem:even-deg}.

\begin{lemma}\label{lem:two-even}
    Let $H$ be a connected, square-free graph with at least two
    even-degree vertices. Then $H$ has a hardness gadget $(i, s, (J_1,
    y), (J_2, z), (J_3, y, z))$.  Furthermore, we can choose $J_1$,
    $J_2$ and~$J_3$ so that each contains at least one pinned vertex.
\end{lemma}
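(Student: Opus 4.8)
The plan is to build the gadget around a shortest path between two even-degree vertices, using the neighbourhood gadgets for $J_1,J_2$ and a caterpillar gadget for $J_3$. First I would fix the vertices: among all paths in~$H$ joining two \emph{distinct} even-degree vertices, choose one, $P=v_0v_1\cdots v_k$, of minimum length (such a path exists since $H$ is connected and has at least two even-degree vertices, and $k\ge1$). The endpoints $v_0,v_k$ have even degree, which is positive since $H$ is connected. Each interior vertex $v_j$ with $1\le j\le k-1$ has odd degree: otherwise $v_0\cdots v_j$ would be a strictly shorter path between two even-degree vertices. Now set $i=v_1$, $s=v_{k-1}$, and take $J_1=J_{\Gamma(v_0),y}$ and $J_2=J_{\Gamma(v_k),z}$ to be neighbourhood gadgets. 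By their defining property, $\Oy=\Gamma_H(v_0)$ and $\Oz=\Gamma_H(v_k)$, which have even size because $v_0$ and $v_k$ have even degree and which contain $i=v_1$ and $s=v_{k-1}$ respectively; so properties~1 and~2 of Definition~\ref{defn:hardness-gadget} hold, and $J_1$ and $J_2$ each contain a pinned vertex.

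It remains to choose $J_3$. If $k\ge2$, I would take $J_3=J_P$, the caterpillar gadget of~$P$. Since $H$ is square-free, $v_1,\dots,v_{k-1}$ all have odd degree, $i=v_1\in\Oy\subseteq\Gamma_H(v_0)$ and $s=v_{k-1}\in\Oz\subseteq\Gamma_H(v_k)$, Lemma~\ref{lem:J3-caterpillar} applies verbatim and delivers exactly properties~3 and~4 (it gives $|\Sox|=0$, $|\Sos|=|\Six|=1$ and $|\Sis|$ odd); moreover $J_P$ contains the pinned vertices $w_1,\dots,w_{k-1}$, and $k-1\ge1$. The only case Lemma~\ref{lem:J3-caterpillar} does not settle by itself is $k=1$, where $J_P$ is the single edge $(y,z)$ and has no pinned vertex, so a small modification is needed.

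For $k=1$ the vertices $v_0,v_1$ are adjacent even-degree vertices, $i=v_1$, $s=v_0$, $\Oy=\Gamma_H(v_0)$ and $\Oz=\Gamma_H(v_1)$. I would take $J_3$ to be the partially labelled graph on vertices $\{y,z,w\}$ with edges $\{yz,yw\}$, pinning function $\{w\mapsto v_0\}$, and distinguished vertices $y,z$; this is connected and contains the pinned vertex~$w$. A homomorphism from $(J_3,y,z)$ to $(H,a,b)$ exists, and is then unique, exactly when $ab\in E(H)$ and $a\in\Gamma_H(v_0)$. Checking the four properties: $\Sis$ corresponds to $(a,b)=(v_1,v_0)$, and $v_1v_0\in E(H)$, $v_1\in\Gamma_H(v_0)$, so $|\Sis|=1$; for $o\in\Oy-i$ we have $ov_0\in E(H)$ and $o\in\Gamma_H(v_0)$, so $|\Sos|=1$; for $x\in\Oz-s$ we have $v_1x\in E(H)$ and $v_1\in\Gamma_H(v_0)$, so $|\Six|=1$; and for $o\in\Oy-i$, $x\in\Oz-s$, if $ox\in E(H)$ then $v_0,v_1,o,x$ are four distinct vertices (loops are absent, $o\ne v_1$ and $x\ne v_0$) with $v_0o,ox,xv_1,v_1v_0\in E(H)$, a $4$-cycle, contradicting square-freeness, so $|\Sox|=0$. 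Hence properties~3 and~4 hold, and $(i,s,(J_1,y),(J_2,z),(J_3,y,z))$ is a hardness gadget with $J_1,J_2,J_3$ each containing a pinned vertex.

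I expect the only real obstacle to be the degenerate case $k=1$: there the canonical choice of $J_3$ (a single edge, obtained from Lemma~\ref{lem:J3-caterpillar} with $k=1$) satisfies the numerical conditions but violates the ``each part has a pinned vertex'' clause, and one must check that the minimal repair above does not disturb the hardness-gadget conditions; this is exactly where square-freeness re-enters, via the $4$-cycle argument. (Note that this case genuinely occurs even for trees, e.g.\ when the two even-degree vertices closest together are adjacent, so it cannot simply be argued away.) Everything else is a routine assembly of the neighbourhood gadget property and Lemma~\ref{lem:J3-caterpillar}.
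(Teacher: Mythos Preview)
Your proof is correct and follows essentially the same route as the paper: pick a path between two even-degree vertices whose interior vertices all have odd degree, use neighbourhood gadgets for $J_1,J_2$ and the caterpillar gadget $J_P$ for $J_3$, and in the degenerate case $k=1$ attach an extra pinned leaf at~$y$ (pinned to~$v_0$) to supply the required pinned vertex. The only cosmetic difference is that the paper obtains the odd-interior path by taking an arbitrary path between even-degree vertices and truncating at the first even-degree interior vertex, whereas you take a globally shortest such path; both yield the same property, and your direct verification of the $k=1$ case is equivalent to the paper's observation that the added pin is redundant because $\Oy=\Gamma_H(v_0)$ already.
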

\begin{proof}
    Let $v_0\dots v_m$ be a path in~$H$ between distinct even-degree
    vertices $v_0$ and~$v_m$ and let $P=v_0\dots v_k$, where
    $k\in\{1, \dots, m\}$ is minimal such that $\deg_H(v_k)$ is
    even.  We claim that $(v_1, v_{k-1}, (J_{\Gamma(v_0),y}, y),
    (J_{\Gamma(v_k),z}, z), (J_P,y,z))$ is a hardness gadget.
    $|\Oy|$ and~$|\Oz|$ are even because $v_0$ and~$v_k$ have even
    degree; and they contain $v_1$ and~$v_{k-1}$, respectively.
    The remaining properties required by
    Definition~\ref{defn:hardness-gadget} hold by
    Lemma~\ref{lem:J3-caterpillar}, since $v_1, \dots, v_{k-1}$ have
    odd degree.

    Each of $J_{\Gamma(v_0),y}$ and~$J_{\Gamma(v_k),z}$ contains a
    pinned vertex and, if $k>1$, $J_P$~also contains at least one
    pinned vertex.  If $k=1$, then $G(J_P)$ is the single edge
    $(y,z)$ and $\tau(J_P) = \emptyset$.  However, we may add to
    $G(J_P)$ 
a new vertex~$w_0$ and
    an edge $(w_0,y)$ and set $\tau(J_P) = \{w_0\mapsto
    v_0\}$: this requires $y$~to be mapped to a neighbour of~$v_0$.
    This has no effect on the hardness gadget since
    Definition~\ref{defn:hardness-gadget} only imposes requirements on
    $|\Homs{(J_3,y,z)}{(H,a,b)}|$ when $a\in \Oy$.
    Since $\Oy = \Gamma_H(v_0)$, we are already only considering
    homomorphisms that map $y$ to a neighbour of~$v_0$ and the change
    to~$J_3$ is merely restating this condition.
\end{proof}

It is worth noting that, since all involution-free trees have at least
two even-degree vertices, Lemma~\ref{lem:two-even} implies Faben and
Jerrum's dichotomy for \parhcol{} where $H$~is a tree~\cite{FJ13}.
They also use two even-degree vertices but their gadgets rely on the
fact that there is a unique path between two vertices of a tree, which
doesn't hold in general graphs.  However, from
Lemma~\ref{lem:two-even}, we conclude that uniqueness of the path is not
required and we can prove hardness even when there are multiple paths
between even-degree vertices.

To handle graphs with fewer than two vertices of even degree, we first
investigate the results of deleting vertices from such
graphs.  If we delete the unique even-degree vertex from a connected
graph, then each component of the resulting graph contains at least
one vertex of even degree.  If we are lucky, one of the resulting
components will contain two or more vertices of even degree, raising
the hope that we can use Lemma~\ref{lem:two-even} to prove
\parp{}-completeness.  If all of the resulting components have exactly
one even-degree vertex, then we can iterate, deleting those vertices
to obtain yet more fragments.  As long as the graph contains at least
one cycle, it is not hard to see that we can eventually obtain a
component with two or more even-degree vertices.  However, to apply
Lemma~\ref{lem:two-even}, we must ensure that the resulting component
has no involution.  We prove this in the following two lemmas.

\begin{lemma}
\label{lem:one-even-asym}
    Let $H$ be an involution-free graph with exactly one vertex~$v$ of
    positive, even degree.  Then $H'=H-v$ is also involution-free.
\end{lemma}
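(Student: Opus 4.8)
The plan is to argue by contradiction. Suppose $H'=H-v$ has an involution $\rho$; I will show that $\rho$ extends to an involution of $H$, contradicting the assumption that $H$ is involution-free. The natural candidate is $\hat\rho\colon V(H)\to V(H)$ defined by $\hat\rho(v)=v$ and $\hat\rho(w)=\rho(w)$ for $w\neq v$. Since $\rho\in\Aut(H')$ and $E(H')$ is exactly $E(H)$ restricted to $V(H')$, the map $\hat\rho$ already respects every edge of $H$ with both endpoints different from $v$; so the whole problem reduces to showing that $\hat\rho$ respects the edges at $v$, i.e.\ that $\rho$ maps $\Gamma_H(v)$ onto itself.

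The key point is that $\Gamma_H(v)$ is determined by $H'$ through a parity-of-degree argument. Because $v$ is the unique vertex of $H$ of positive even degree, every non-isolated vertex $w\neq v$ has odd degree in $H$; and deleting $v$ decreases $\deg_H(w)$ by exactly one when $w\in\Gamma_H(v)$ and leaves it unchanged otherwise. Hence, provided $H$ has no isolated vertex, the even-degree vertices of $H'$ are precisely the vertices of $\Gamma_H(v)$. An automorphism of $H'$ preserves degrees, so $\rho$ preserves this set: $\rho(\Gamma_H(v))=\Gamma_H(v)$. The one situation calling for extra care is when $H$ has an isolated vertex $u_0$ (there is at most one, since $H$ is involution-free), for then the even-degree vertices of $H'$ are $\Gamma_H(v)\cup\{u_0\}$; this case does not arise in the uses of the lemma in this section, where $H$ is connected, and I expect that disposing of it in full generality (if needed) is the only genuinely delicate step of the argument.

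Granting $\rho(\Gamma_H(v))=\Gamma_H(v)$, the map $\hat\rho$ is an automorphism of $H$: for $w\neq v$ we have $vw\in E(H)$ iff $w\in\Gamma_H(v)$ iff $\rho(w)\in\Gamma_H(v)$ iff $v\,\hat\rho(w)\in E(H)$, using that $\rho$ is a bijection stabilising $\Gamma_H(v)$ (and there is no loop at $v$). Moreover $\hat\rho\circ\hat\rho$ is the identity because $\rho\circ\rho$ is, and $\hat\rho$ is not the identity because $\rho$ moves some vertex of $V(H')$. So $\hat\rho$ is an involution of $H$, the desired contradiction, and therefore $H'$ is involution-free.
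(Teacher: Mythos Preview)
Your approach is exactly the paper's: assume $H'$ has an involution $\rho$, extend it to $\hat\rho$ on $H$ by fixing $v$, and argue that $\hat\rho$ is an involution of $H$ because automorphisms preserve degree and $\Gamma_H(v)$ is the set of even-degree vertices of $H'\!$. The paper asserts this last identification without caveat; you are more careful in flagging the isolated-vertex case.

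That caution is warranted, and the situation is worse than ``delicate'': the lemma as stated is actually false without a connectedness hypothesis, so there is no way to ``dispose of'' that case in full generality. For a counterexample, take any asymmetric graph $F$ in which every vertex has odd degree (for instance the Frucht graph, a $3$-regular asymmetric graph on $12$ vertices), attach a leaf $w_1$ to one vertex $v$ of $F$, and let $H$ be the resulting graph together with an isolated vertex $u_0$. Then $H$ is involution-free and $v$ is its unique vertex of positive even degree, but in $H-v$ both $u_0$ and $w_1$ are isolated, and swapping them is an involution of $H-v$. As you observe, this does not damage the paper: in the proof of the main theorem the lemma is ultimately applied only to connected graphs, where no isolated vertex exists and both your argument and the paper's go through cleanly.
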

\begin{proof}
    Each vertex $u\in \Gamma_H(v)$ has odd degree in~$H$ and has
    exactly one neighbour removed, so $\deg_{H'}(u)$ is even.
    Suppose, towards a contradiction, that $\rho$~is an involution
    of~$H'\!$.  No automorphism can map an odd-degree vertex to an
    even-degree vertex or vice-versa and $\Gamma_H(v)$~is exactly the
    set of even-degree vertices in $H'\!$.  Therefore, the restriction
    of~$\rho$ to the neighbours of~$v$ is a permutation.
    Define $\hat\rho\colon V(H)\to V(H)$ by
    $\hat\rho(v)=v$ and $\hat\rho(w)=\rho(w)$ for $w\neq v$.
    $\hat\rho$ preserves all edges in~$H'$ and all edges incident
    on~$v$ in~$H$, so it is an involution of~$H$, contradicting the
    supposition that $H$~has no involution.
\end{proof}

So far, we have described our goal as being to iteratively delete
vertices until we find a component with more than one even-degree
vertex.  This is a useful intuition but we do not know how to simulate
such a sequence of vertex deletions using gadgets.  Instead, we show
how to achieve the goal of a component with more than one even-degree
vertex by deleting a set of vertices, which we do know how to do with
a gadget.

For a vertex $v\in V(H)$ and an integer~$r\geq 0$, let
$B_r(v) = \{u\in V(H) \mid \dist(u,v) = r\}$.

\begin{corollary}
\label{cor:one-even-asym}
    Let $H$ be an involution-free graph that has
    exactly one vertex~$v$ of positive, even degree.  For some~$r$,
    $H-B_r(v)$ has an involution-free component~$H^*$ that does not
    contain~$v$ but does contain at least two even-degree vertices.
    Furthermore, we can take~$r = \min\, \{\dist(v,w)\mid w \text{ is
      on a cycle}\}$.
\end{corollary}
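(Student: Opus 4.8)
The plan is to induct on $r$, defined as in the statement. First I would reduce to the case that $H$ is connected: if $H$ is disconnected then, since $r$ is finite, some cycle lies in the component $H_0$ of $v$; this $H_0$ is itself an involution-free graph with $v$ as its unique positive-even-degree vertex, distances to $v$ agree in $H_0$ and $H$, and any component of $H_0 - B_r(v)$ avoiding $v$ is also a component of $H - B_r(v)$. With $H$ connected it has at least two vertices and exactly one even-degree vertex, so by Lemma~\ref{lem:invo-free-cycle} it contains a cycle and $r$ is well defined. Throughout, I write $B^C_t(u)$ for the set of vertices of a graph $C$ at distance exactly $t$ from $u$ in $C$, so that $B_r(v) = B^H_r(v)$.

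For the base case $r = 0$, the vertex $v$ lies on a cycle $C$, and $B_0(v) = \{v\}$. By Lemma~\ref{lem:one-even-asym}, $H - v$ is involution-free, hence so is each of its components (an involution of a component extends to an involution of $H - v$ by acting as the identity on the other components). The two neighbours of $v$ on $C$ are distinct and lie in a single component $H^*$ of $H - v$, being joined by the path $C - v$; each had odd degree in $H$ and loses exactly one edge when $v$ is deleted, hence has even degree in $H^*$. This $H^*$ is the required component.

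For the inductive step $r \ge 1$, the vertex $v$ lies on no cycle. I would first record the structure of $H - v$: it is involution-free by Lemma~\ref{lem:one-even-asym}; each of its components $C$ contains exactly one neighbour $v_C$ of $v$, since two neighbours in one component would close a cycle through $v$; and consequently $v_C$ is the unique even-degree vertex of $C$, because every other vertex of $C$ keeps its odd $H$-degree while $v_C$ loses one edge. Next I would pick a cycle vertex $w$ with $\dist_H(v,w) = r$ and let $C^*$ be its component in $H - v$. A triangle-inequality argument along a shortest path from $v$ to $w$ shows that $v_{C^*}$ satisfies $\dist_{C^*}(v_{C^*},w) = r-1$ and that no cycle of $C^*$ lies closer to $v_{C^*}$; also $C^*$ contains a cycle, so it has more than one vertex and $v_{C^*}$ has positive (even) degree in $C^*$. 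Thus $C^*$ meets the hypotheses of the corollary with distinguished vertex $v_{C^*}$ and parameter $r-1$, and the induction hypothesis produces an involution-free component $H^*$ of $C^* - B^{C^*}_{r-1}(v_{C^*})$ that avoids $v_{C^*}$ and has at least two even-degree vertices.

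The remaining, and in my view the main, step is to check that this $H^*$ is also the component we want in $H - B_r(v)$. The crux is the identity $B^{C^*}_{r-1}(v_{C^*}) = B_r(v) \cap V(C^*)$: since $v_{C^*}$ is the only vertex of $C^*$ adjacent to $v$, any path in $H$ from $u \in V(C^*)$ to $v$ stays inside $C^*$ until its final edge $v_{C^*}v$, so $\dist_H(u,v) = \dist_{C^*}(u,v_{C^*}) + 1$. Because $H^*$ avoids $v_{C^*}$, every vertex of $H^*$ has all of its $H$-neighbours inside $V(C^*)$, and combining this with the identity shows that $H^*$ is a component of $H - B_r(v)$ with the same vertex degrees; it does not contain $v$ since $v \notin V(C^*)$. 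I expect this bookkeeping — confirming that deleting the radius-$r$ sphere about $v$ in $H$ restricts, on $C^*$, to deleting the radius-$(r-1)$ sphere about $v_{C^*}$, so that the inductive hypothesis genuinely applies — to be the real content, together with a careful check of the mild degenerate situations ($r = 1$, where $B^{C^*}_0(v_{C^*}) = \{v_{C^*}\}$, and components of $H - v$ that are single pendant vertices of $v$).
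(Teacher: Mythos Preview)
Your argument is correct and follows essentially the same idea as the paper: both proofs peel off one vertex at a time along the unique path from $v$ to the nearest cycle vertex, repeatedly applying Lemma~\ref{lem:one-even-asym}. The packaging differs slightly --- you induct on the full statement of the corollary with parameter~$r$ (delete $v$, pass to the component $C^*$, recurse with parameter $r-1$), whereas the paper does an auxiliary induction along the path $v=v_0,v_1,\dots,v_r$ inside the proof, deleting one $v_j$ at a time and only invoking the ``two even-degree neighbours'' argument at the final step. Your version has the advantage of being explicit about the bookkeeping that the resulting $H^*$ really is a component of $H-B_r(v)$ via the identity $B^{C^*}_{r-1}(v_{C^*}) = B_r(v)\cap V(C^*)$; the paper simply asserts this.
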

\begin{proof}
    $H$~contains a cycle by Lemma~\ref{lem:invo-free-cycle} so we can
    take~$r$ as in the statement of the lemma and this is well-defined.
    If $r=0$, then $v$~is in some cycle~$C$ in~$H$.  $H-v$~has no
    involution by Lemma~\ref{lem:one-even-asym}, so no component of
    $H-v$ has an involution.  The component~$H^*$ of $H-v$ that contains
    $C-v$ contains at least two vertices of $\Gamma_H(v)$ ($v$'s two
    neighbours in~$C$) and these vertices have even degree
    in~$H^*\!$.  $H^*$~does not, of course, contain~$v$.

    Suppose that $r>0$.  By the choice of~$r$, there must be a
    component~$H'$ of $H-B_{r-1}(v)$ that contains a
    vertex~$v_r\in B_r(v)$ that is in a cycle~$C'$ of~$H'$.
    Since no vertex at distance less than~$r$ from~$v$ is in a cycle
    in~$H$, there is
    a unique path from $v$ to~$v_r$.  Let this be $v_0\dots
    v_r$, where $v=v_0$.  A simple induction on $j=0, \dots,
    r-1$, using Lemma~\ref{lem:one-even-asym}, shows that the
    component of $H-v_j$ containing~$v_r$ has no involution, does not
    contain~$v$ and has exactly one even-degree vertex: namely,
    $v_{j+1}$.  In particular, the component of $H-v_{r-1}$ that
    contains~$v_r$ is~$H'$.  But, now, the component of $H'-v_r$
    that contains $C'-v_r$ has no involution (because no component
    of $H'-v_r$ has an involution) and contains at least two vertices
    of even degree (because $v_r$ has at least two neighbours
    in~$C'$).  Further, this component is the component~$H^*$ of
    $H-B_r(v)$ that we seek.
\end{proof}

Thus, starting with an involution-free graph~$H$ containing only one
vertex of positive, even degree, we have shown how to make a set of
vertex deletions (some set $B_r(v)$) to obtain an involution-free
component~$H^*$ with at least two even-degree vertices.  We now show
that we can achieve these vertex deletions using gadgetry.
The following technical lemma allows us to construct a gadget that, in
a sense, ``selects'' the vertices of~$H^*$ within~$H$.

\begin{lemma}
\label{lem:Jpath}
    Let $H$ be a graph, let $P=x_0\dots x_{r+1}$ with $r\geq 0$ be a
    path in~$H$ and let $w\in V(H)$.  If
    every vertex in~$H$ within distance $r-1$ of~$w$ has odd
    degree, then $|\Homs{(P,x_0)}{(H,w)}|$ has opposite parity to the
    number of distinct $r$-paths in~$H$ from $w$ to vertices of
    even degree.
\end{lemma}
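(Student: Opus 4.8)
The plan is to reformulate everything in terms of walks. A homomorphism from $(P,x_0)=(x_0\dots x_{r+1},x_0)$ to $(H,w)$ is nothing but a walk $w=\sigma(x_0),\sigma(x_1),\dots,\sigma(x_{r+1})$ of length $r+1$ in~$H$ starting at~$w$, so $|\Homs{(P,x_0)}{(H,w)}|$ equals the number $W_{r+1}$ of such walks. (Only the fact that $P$ is a path graph on $r+2$ vertices with distinguished endpoint $x_0$ matters here.) Every length-$(r+1)$ walk from~$w$ arises from a unique length-$r$ walk from~$w$ by appending one edge at its endpoint, so $W_{r+1}=\sum_U\deg_H(\mathrm{end}(U))$, where $U$ ranges over the length-$r$ walks from~$w$ and $\mathrm{end}(U)$ is the last vertex of~$U$. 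Reducing mod~$2$, $W_{r+1}$ is congruent to the number of length-$r$ walks from~$w$ ending at an \emph{odd}-degree vertex.

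I would then establish two facts, both of which are where the degree hypothesis is used. First, $W_r$, the total number of length-$r$ walks from~$w$, is odd: this follows by induction from $W_0=1$, since for $0\le\ell\le r-1$ every length-$\ell$ walk from~$w$ ends at a vertex within distance $\ell\le r-1$ of~$w$, which has odd degree by hypothesis, whence $W_{\ell+1}=\sum_U\deg_H(\mathrm{end}(U))\equiv W_\ell\pmod2$. Second, no \emph{non-path} walk of length~$r$ from~$w$ can end at an even-degree vertex: for $r\le1$ there are no non-path walks of length $\le1$ at all, and for $r\ge2$, if a length-$r$ walk $u_0\dots u_r$ has $u_i=u_j$ with $i<j$ then $j\ge i+2$ (as $H$ is loop-free), and deleting the segment strictly between the two copies yields a walk from $w=u_0$ to~$u_r$ of length $i+(r-j)\le r-2$, so $\dist(w,u_r)\le r-2\le r-1$ and $u_r$ has odd degree by hypothesis. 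Consequently the length-$r$ walks from~$w$ ending at an even-degree vertex are exactly the length-$r$ \emph{paths} from~$w$ ending at an even-degree vertex.

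Combining these: the number of length-$r$ walks from~$w$ ending at an odd-degree vertex equals $W_r$ minus the number ending at an even-degree vertex, which by the second fact equals $W_r$ minus the number of length-$r$ paths from~$w$ to even-degree vertices, and by the first fact this is congruent mod~$2$ to $1$ plus that number of paths. Hence $|\Homs{(P,x_0)}{(H,w)}|\equiv 1+\#\{\text{length-}r\text{ paths from }w\text{ to even-degree vertices}\}\pmod2$, i.e.\ the two quantities have opposite parity, as required. The main obstacle — really the only non-routine point — is the structural observation that a walk that repeats a vertex shrinks its reach from $r$ to at most $r-2$, which is precisely what makes the hypothesis ``odd degree within distance $r-1$'' suffice; one also has to keep the degenerate small cases straight (for $r=0$ the statement reduces to the trivial fact that $\deg_H(w)+[\deg_H(w)\text{ even}]$ is odd).
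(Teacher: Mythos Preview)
Your proof is correct and follows essentially the same approach as the paper's: interpret homomorphisms as walks, show the total number $W_r$ of length-$r$ walks from~$w$ is odd, and show that any length-$r$ walk from~$w$ ending at an even-degree vertex must be a simple path. The paper packages this as a single induction on~$r$ with the full lemma statement as the inductive hypothesis, whereas you isolate the two key facts and prove them directly; your cycle-excision argument for Fact~2 (yielding $\dist(w,u_r)\le r-2$) differs superficially from the paper's geodesic observation (a walk of length~$r+1$ to a vertex at distance $\ge r+1$ is necessarily a path), but the content is the same.
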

\begin{proof}
    We prove the lemma by induction on~$r$.
    For $r=0$, the result is trivial.  The condition on vertices
    within distance $r-1$ is vacuous.  The number of $0$-paths
    from $w$ to vertices of even degree is zero if $\deg(w)$ is odd;
    it is one if $\deg(w)$ is even; and $|\Homs{(P,x_0)}{(H,w)}| =
    \deg(w)$.

    Suppose the result holds for the path $P=x_0\dots x_{r+1}$ and consider the
    path $Px_{r+2}$ and a graph~$H$ in which every vertex within
    distance~$r$ of~$w$ has odd degree.

    Every homomorphism~$\sigma$ from $(Px_{r+2},x_0)$ to~$(H,w)$
    induces a homomorphism~$\hat\sigma$ from $(P,x_0)$ to $(H,w)$.
    Write $\sigma \sim \sigma'$ if $\hat\sigma = \hat\sigma'\!$.
    $\sim$~is an equivalence relation and its equivalence classes
    partition $\Homs{(Px_{r+2},x_0)}{(H,w)}$.  Let $\eqclass{\sigma}$ be
    the $\sim$-equivalence class of~$\sigma$.

    If every vertex within distance~$r$ of~$w$ in~$H$ has odd
    degree, there are no $r$-paths from $w$ to vertices of even
    degree so, by the inductive hypothesis, there are an odd number of
    homomorphisms from $(P,x_0)$ to~$(H,w)$, so there are an odd number of
    equivalence classes.  
Further, $|\eqclass{\sigma}| = \deg(\sigma(x_{r+1}))$ 
(this is well-defined since
    $\sigma(x_{r+1})=\hat\sigma(x_{r+1})$, so all homomorphisms
    $\sigma'\in\eqclass{\sigma}$ agree on the value of $\sigma'(x_{r+1})$).
    Any vertex of even degree is at distance at least $r+1$
    from~$w=\sigma(x_0)$ so, if $\deg_H(\sigma(x_{r+1}))$ is even, then the
    $r$-walk $\sigma(x_0) \sigma(x_1)\dots \sigma(x_{r+1})$ is, in fact, a
    simple $(r+1)$-path.  Therefore, the number~$N$ of
    even-cardinality equivalence classes is equal to the number of
    $(r+1)$-paths in~$H$ from~$w$ to a vertex of even degree, and
    subtracting these from the total number of equivalence classes gives 
    $|\Homs{(Px_{r+2},x_0)}{(H,w)}| \equiv 1-N \bmod2$, as required.
\end{proof}

Now, we can obtain a hardness gadget for~$H$ by combining the
``selection gadget'' with the hardness gadget for the subgraph~$H^*$
given to us by Corollary~\ref{cor:one-even-asym}.

\begin{lemma}\label{lem:even-deg}
    Any involution-free, square-free graph~$H$ that has exactly one
    vertex~$v$ of positive, even degree has a hardness gadget.
\end{lemma}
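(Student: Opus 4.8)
The plan is to build a hardness gadget for $H$ by \emph{transplanting} one from a suitable induced subgraph, simulating a set of vertex deletions with a path gadget. First I would apply Corollary~\ref{cor:one-even-asym}: set $r=\min\{\dist(v,w)\mid w\text{ lies on a cycle of }H\}$, which is well-defined and finite because $H$ contains a cycle by Lemma~\ref{lem:invo-free-cycle}. This gives an involution-free component $H^*$ of $H-B_r(v)$ that does not contain~$v$ but has at least two even-degree vertices. Now $H^*$ is connected (it is a component) and square-free (an induced subgraph of~$H$), so Lemma~\ref{lem:two-even} supplies a hardness gadget $(i,s,(J_1,y),(J_2,z),(J_3,y,z))$ for~$H^*$ in which each of $J_1,J_2,J_3$ contains a pinned vertex. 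Writing $v_0\dots v_k$ for the path in~$H^*$ used there, we have $\Oy=\Gamma_{H^*}(v_0)\subseteq V(H^*)$ and $\Oz=\Gamma_{H^*}(v_k)\subseteq V(H^*)$, and every pinned vertex of $J_1,J_2,J_3$ is pinned to a vertex of~$H^*$.

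Next I would record the structural facts behind the simulation. Since $H^*$ is a component of $H-B_r(v)$ avoiding~$v$, shortest-path arguments give: (a)~every vertex of $V(H^*)$ is at distance $\ge r+1$ from~$v$ in~$H$; and (b)~every edge of~$H$ with one endpoint in $V(H^*)$ has its other endpoint in $V(H^*)\cup B_r(v)$. As $H$ is connected with more than one vertex, it has no isolated vertex, so $v$ is the \emph{unique} even-degree vertex of~$H$; together with~(a), every vertex within distance $r-1$ of a vertex of $V(H^*)$ has odd degree. Moreover, by the choice of~$r$ no vertex at distance less than~$r$ from~$v$ lies on a cycle, so the set $\{u\mid \dist(u,v)\le r-1\}$ induces a tree (empty if $r=0$); hence each $w\in B_r(v)$ has a unique neighbour at distance $r-1$ from~$v$ and therefore a unique shortest path to~$v$ in~$H$. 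Let $P$ be the path on $r+2$ vertices $x_0,\dots,x_{r+1}$ with distinguished vertex~$x_0$ (realised inside~$H$ as a shortest $v$-to-cycle path extended by one cycle edge). By Lemma~\ref{lem:Jpath}, for any $w$ all of whose vertices within distance $r-1$ have odd degree, $|\Homs{(P,x_0)}{(H,w)}|$ has opposite parity to the number of $r$-paths in~$H$ from~$w$ to even-degree vertices; by the facts above this applies to every $w\in V(H^*)\cup B_r(v)$, so $|\Homs{(P,x_0)}{(H,w)}|$ is odd for $w\in V(H^*)$ (no such $r$-path) and even for $w\in B_r(v)$ (exactly one, the unique shortest path to~$v$). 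Setting $S=\{w\in V(H)\mid |\Homs{(P,x_0)}{(H,w)}|\text{ odd}\}$, we obtain $V(H^*)\subseteq S$ and $S\cap B_r(v)=\emptyset$.

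For the construction I would form $J_k'$ from $J_k$, for $k=1,2,3$, by attaching to every non-pinned vertex $u$ of $G(J_k)$ a fresh copy of~$P$ with $x_0$ identified with~$u$, keeping the distinguished vertices unchanged. A homomorphism from $J_k'$ to~$H$ amounts to a homomorphism $\sigma$ from $J_k$ to~$H$ together with, for each non-pinned~$u$, an arbitrary element of $\Homs{(P,x_0)}{(H,\sigma(u))}$, so $|\Homs{(J_1',y)}{(H,a)}|$ (and analogously for $J_2',J_3'$) is a sum over such $\sigma$ of $\prod_u|\Homs{(P,x_0)}{(H,\sigma(u))}|$, and a term is odd exactly when $\sigma$ maps every non-pinned vertex into~$S$. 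I would then argue that any such $\sigma$ maps all of $G(J_k)$ into $V(H^*)$: a pinned vertex has image in $V(H^*)$; if a vertex adjacent to one whose image lies in $V(H^*)$ is non-pinned, its image lies in $\Gamma_H(V(H^*))\cap S\subseteq(V(H^*)\cup B_r(v))\cap S=V(H^*)$ by~(b) and $S\cap B_r(v)=\emptyset$, and if it is pinned its image is in $V(H^*)$ anyway; since $G(J_k)$ is connected and has a pinned vertex, everything is forced into $V(H^*)$. Conversely, any $\sigma$ mapping $G(J_k)$ into $V(H^*)$ makes every factor odd. Hence $|\Homs{(J_1',y)}{(H,a)}|\equiv|\Homs{(J_1,y)}{(H^*,a)}|\pmod2$ for all~$a$ (similarly for $J_2'$), and $|\Homs{(J_3',y,z)}{(H,a,b)}|\equiv|\Homs{(J_3,y,z)}{(H^*,a,b)}|\pmod2$ for all $a,b$, with both sides even whenever $a$ or $b$ lies outside $V(H^*)$. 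Therefore the sets $\Oy,\Oz$ of $(i,s,(J_1',y),(J_2',z),(J_3',y,z))$ coincide with those of the gadget for~$H^*$, both contained in $V(H^*)$, and since the pairs $(i,s),(o,s),(i,x),(o,x)$ at which the four conditions of Definition~\ref{defn:hardness-gadget} are checked have all coordinates in $V(H^*)$, each condition transfers from~$H^*$ to~$H$. This yields a hardness gadget for~$H$.

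The hard part is the structural analysis in the second paragraph, in particular the equality $S\cap B_r(v)=\emptyset$: this is exactly where choosing $r$ to be the distance to the nearest cycle is used, as it forces $\{u\mid \dist(u,v)\le r-1\}$ to induce a tree and hence each $w\in B_r(v)$ to have a unique shortest path to~$v$, which is what makes Lemma~\ref{lem:Jpath} return the required parity. Once that is established, the rest — that the path gadget ``selects'' $V(H^*)$, that $\Oy$ and $\Oz$ are unchanged, and that the four hardness-gadget conditions survive the transplant — is routine bookkeeping.
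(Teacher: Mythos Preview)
Your proposal is correct and follows essentially the same approach as the paper: apply Corollary~\ref{cor:one-even-asym} to obtain~$H^*$, pull a hardness gadget for~$H^*$ from Lemma~\ref{lem:two-even}, and then attach a length-$(r+1)$ path to the vertices of each $J_k$, using Lemma~\ref{lem:Jpath} and the uniqueness of the shortest $w$--$v$ path for $w\in B_r(v)$ to show that the attachment filters out, modulo~$2$, every homomorphism whose image leaves~$H^*$. The only cosmetic difference is that the paper attaches a copy of~$P$ to \emph{every} vertex of $G(J_k^*)$ (including pinned ones), whereas you attach only to non-pinned vertices; since pinned vertices are already forced into $V(H^*)$ and $|\Homs{(P,x_0)}{(H,w)}|$ is odd for $w\in V(H^*)$, this makes no difference to the parities.
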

\begin{proof}
    Let $r = \min\,\{\dist(v,w)\mid w\text{ is on a cycle}\}$.
    By Corollary~\ref{cor:one-even-asym}, there is an
    involution-free component~$H^*$ of $H-B_r(v)$ that does not
    contain~$v$ but contains at least two vertices of even degree.
    $H^*$~is square-free because it is an induced subgraph of a
    square-free graph.
    Therefore, by Lemma~\ref{lem:two-even}, $H^*$~has a
    hardness gadget $\calX^* = (i, s, (J^*_1,y), (J^*_2,z), (J^*_3,y,z))$
    in which each of $J_1^*$, $J_2^*$ and~$J_3^*$ contains a pinned
    vertex.

    We construct a hardness gadget~$\calX$ for~$H$ from~$\calX^*\!$.  Let
    $P$~be a path of length $r+1\geq 1$, with vertices $x_0\dots
    x_{r+1}$.  Let $J_1=(G,\tau)$ be the partially $H$-labelled graph
    such that $\tau=\tau(J^*_1)$ and $G$~is defined from $G(J^*_1)$ as
    follows: start with $G(J^*_1)$ and, for every vertex $u\in
    G(J^*_1)$, add a new copy of~$P$ and identify that copy's
    vertex~$x_0$ with~$u$.  Define $J_2$ and~$J_3$ similarly, from
    $J_2^*$ and~$J_3^*$.  We claim that the tuple
    \begin{equation*}
        \calX = \big(i, s, (J_1, y), (J_2, z), (J_3, y, z)\big)
    \end{equation*}
    is the desired hardness gadget for~$H$.

    To find out what $\calX$ does, we first consider homomorphisms from
    one copy of the path~$P$ to~$H$.  For a vertex $w\in V(H)$, let
    $N_w = |\Homs{(P,x_0)}{(H,w)}|$.  
    If $\dist(v,w)=r$ (i.e., $w\in
    B_r(v)$), then there is a unique $r$-path from $w$ to a vertex of
even degree.  This is because $v$~is the unique vertex of even degree 
and, if there were distinct $r$-paths $Q_1$ and $Q_2$ from $w$ to~$v$ then $Q_1\cup Q_2$ would contain a cycle, which would contain vertices at distance strictly less than~$r$ from~$v$, contradicting the definition of~$r$.
    If $\dist(v,w)>r$, then there are
    no $r$-paths from $w$ to even-degree vertices.  Therefore, by
    Lemma~\ref{lem:Jpath}, $N_w$~is even if $\dist(v,w)=r$ and
    $N_w$~is odd if $\dist(v,w)>r$ (we will see that the parity
    of~$N_w$ does not matter if $\dist(v,w)<r$).

    Now, let $a\in V(H)$ and consider homomorphisms $\sigma, \sigma'\in
    \Homs{(J_1,y)}{(H,a)}$.  Write $\sigma\sim\sigma'$ if
    $\sigma(u)=\sigma'(u)$ for all $u\in V(G(J^*_1))$ and write
    $\eqclass{\sigma}$ for the $\sim$-equivalence class
    containing~$\sigma$.  $|\Homs{(J_1,y)}{(H,a)}|$ is the sum of
    the sizes of the $\sim$-equivalence classes.  For any~$\sigma$, we
    have
    \begin{equation*}
        |\eqclass{\sigma}| \ \ = \!\!\!\!
            \prod_{x\in V(G(J^*_1))}\!\!\!\! |\Homs{(P, x_0)}{(H, \sigma(x))}|\,.
    \end{equation*}
    Therefore, $|\eqclass{\sigma}|$ is even if $\sigma$~maps any
    vertex of $G(J^*_1)$ into $B_r(v)$.  In this case,
    $|\eqclass{\sigma}|$ contributes nothing to the sum, modulo~$2$.

    Thus, we may restrict our attention to homomorphisms from $J_1^*$
    to~$H$ that have no vertex in $B_r(v)$ in their image.  $J_1^*$~is
    connected and contains a vertex pinned to a vertex in~$H^*\!$.
    Therefore, restricting to homomorphisms that have no vertex in
    $B_r(v)$ in their image means restricting to homomorphisms whose
    image is wholly within~$H^*\!$.  For any vertex $w\in H^*\!$,
    $\dist_H(v,w)>r$, so this gives
    \begin{equation*}
        |\Homs{(J_1,y)}{(H,a)}|
            \equiv |\Homs{(J_1^*,y)}{(H^*\!,a)}| \pmod2\,,
    \end{equation*}
    for any $a\in V(H^*)$ and $|\Homs{(J_1,y)}{(H,a)}|\equiv 0
    \bmod2$, for $a\notin V(H^*)$; and similarly for $J_2$ and~$J_3$.
Thus, since $\calX^*$ is a hardness gadget for~$H^*$,
$\calX$ is a hardness gadget for~$H$.
\end{proof}

The proof of Lemma~\ref{lem:even-deg} does not explicitly use
caterpillar gadgets.  However, the hardness gadget~$\calX$ is
constructed from~$\calX^*\!$, which was produced by
Lemma~\ref{lem:two-even}.  It follows that $J_3^*$~is a caterpillar
gadget, so Lemma~\ref{lem:even-deg} requires $H$ to be square-free, as
stated.

\subsection{Odd cycles}
\label{sec:odd-cycles}

In the previous section, we showed how to find a hardness gadget for
any involution-free, square-free graph containing at least one vertex
of even degree.  In this
section, we show that any square-free graph in which all vertices have
odd degree has a hardness gadget if it has an odd cycle.
We first introduce a gadget for selecting certain vertices in cycles.

\begin{definition}\label{defn:Jcycle}
    (See Figure~\ref{fig:cycle}).
    Let $P=v_1\dots v_k$ be a path in~$H$. For any $\ell > \max\,
    \{2,k\}$, define the \emph{$\ell$-cycle gadget} $J_{\ell,P,x} =
    (G,\tau)$ where $G$~is the cycle $xu_1\dots u_{\ell-1}x$ and
    $\tau = \{u_1\mapsto v_1, \dots, u_k\mapsto v_k\}$.
\end{definition}

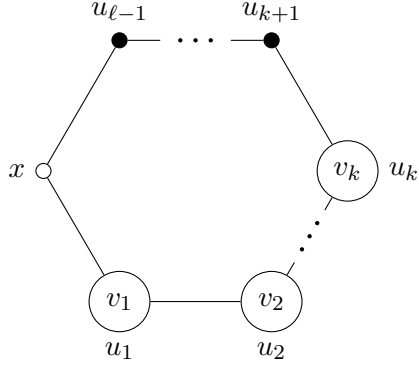
\begin{figure}
\begin{center}
\begin{tikzpicture}[scale=2,node distance = 1.5cm]
\tikzstyle{dot}   =[fill=black, draw=black, circle, inner sep=0.15mm]
\tikzstyle{vertex}=[fill=black, draw=black, circle, inner sep=2pt]
\tikzstyle{dist}  =[fill=white, draw=black, circle, inner sep=2pt]
\tikzstyle{pinned}=[draw=black, minimum size=8mm, circle]

    \node[dist]   (x)  at (180:1) [label=180:$x$]   {};
    \node[vertex] (u5) at (120:1)[label=90:$u_{\ell-1}$] {};
    \node[vertex] (u4) at (60:1) [label=90:$u_{k+1}$] {};
    \node[pinned] (u1) at (240:1) [label=-90:$u_1$] {$v_1$};
    \node[pinned] (u2) at (300:1) [label=-90:$u_2$] {$v_2$};
    \node[pinned] (u3) at (  0:1) [label=  0:$u_k$] {$v_k$};

    \draw (u5) -- (x) -- (u1) -- (u2);
    \draw (u3) -- (u4);

    \node[dot] (b) at ($(u2)!0.5!(u3)$) {};
    \node[dot] (a) at ($(b)!1mm!(u2)$) {};
    \node[dot] (c) at ($(b)!1mm!(u3)$) {};
    \draw (u2) -- ($(u2)!0.75!(a)$);
    \draw (u3) -- ($(u3)!0.75!(c)$);

    \node[dot] (e) at ($(u4)!0.5!(u5)$) {};
    \node[dot] (d) at ($(e)!1mm!(u4)$) {};
    \node[dot] (f) at ($(e)!1mm!(u5)$) {};
    \draw (u4) -- ($(u4)!0.67!(d)$);
    \draw (u5) -- ($(u5)!0.67!(f)$);

\end{tikzpicture}
\end{center}
\caption{The $\ell$-cycle gadget $J_{\ell,P,x}$ corresponding to a
  path $P=v_1\dots v_k$ in an $\ell$-cycle in~$H$.}
\label{fig:cycle}
\end{figure}

Recall that the odd-girth of a graph is the length of its shortest odd
cycle.  By convention, the odd-girth of a graph without odd
cycles is infinite; in the following, we write ``a graph whose
odd-girth is~$\ell$'' as a short-hand for ``a graph whose odd-girth is
finite and equal to~$\ell$.''

\begin{lemma}\label{lem:cycle-to-cycle}
    Let $H$~be a graph whose odd-girth is~$\ell$ and let $G$~be an
    $\ell$-cycle.  The image of~$G$ under any homomorphism from $G$
    to~$H$ is an $\ell$-cycle in~$H$.
\end{lemma}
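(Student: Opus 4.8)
The plan is to translate the homomorphism into a statement about closed walks. Write $v_1, \dots, v_\ell$ for the vertices of the cycle $G$ in cyclic order and let $\sigma\colon G\to H$ be a homomorphism; then $W = \sigma(v_1)\sigma(v_2)\cdots\sigma(v_\ell)\sigma(v_1)$ is a closed walk of length~$\ell$ in~$H$, and the image subgraph $\sigma(G)$ consists of exactly the vertices and edges traversed by~$W$. So it suffices to show that $W$ is a simple cycle, which then automatically has length~$\ell$.

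The key ingredient is the standard fact, proved by induction on odd~$m$, that in any graph a closed walk of odd length~$m$ contains an odd cycle of length at most~$m$: if the walk repeats no vertex apart from its coinciding endpoints it is already such a cycle, and otherwise we split it at a repeated vertex into two closed subwalks, each of length at least~$2$ (there are no loops) and with lengths summing to~$m$, so that one of the two has odd length strictly less than~$m$, to which the inductive hypothesis applies. Since the odd-girth of~$H$ is~$\ell$, this shows that $H$ has no closed walk of odd length less than~$\ell$ (such a walk would contain an odd cycle shorter than~$\ell$); note also $\ell\geq 3$, since the odd-girth being finite means $H$ has a triangle or a longer odd cycle.

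It then follows that $W$ cannot repeat a vertex except at its endpoints. Indeed, a repetition would split $W$ into two closed subwalks of lengths at least~$2$ summing to~$\ell$, so the odd one of the two has length at most $\ell-2$, which is both positive (as $\ell\geq 3$) and less than~$\ell$, contradicting the previous paragraph. Hence $W$ is a simple cycle, of length~$\ell$, and $\sigma(G)=W$ is an $\ell$-cycle in~$H$, as claimed. The only delicate point — the ``main obstacle'', such as it is — is the bookkeeping in the splitting argument: one must verify, using loop-freeness, that both resulting subwalks are genuinely nontrivial (length at least~$2$, not~$0$ or~$1$), so that the odd subwalk is strictly shorter than the original walk, and handle the base case of the induction; everything else is routine.
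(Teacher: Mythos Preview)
Your proof is correct and uses essentially the same idea as the paper: the image of the $\ell$-cycle under~$\sigma$ is traced by a closed odd walk of length~$\ell$, and this forces the image to be an $\ell$-cycle because a shorter odd closed walk would produce an odd cycle of length less than~$\ell$. The paper packages the contradiction slightly differently --- it observes that if the image~$C$ were not an $\ell$-cycle then (having at most $\ell$ vertices and edges, and no shorter odd cycle by the odd-girth hypothesis) $C$ would be bipartite, which is impossible since it carries an odd closed walk --- but this is just the contrapositive of the ``odd closed walk contains an odd cycle'' fact you invoke, so the two arguments are interchangeable.
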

\begin{proof}
    Let $G = u_0\dots u_{\ell-1}u_0$.  Since $G$~is an $\ell$-cycle
    and $H$~contains an $\ell$-cycle, $\Homs{G}{H}$ is non-empty so
    let $\sigma\in\Homs{G}{H}$.  Let $C$~be the image of~$G$
    under~$\sigma$, i.e., subgraph of~$H$ consisting of vertices
    $\{\sigma(u_0), \dots, \sigma(u_{\ell-1})\}$ and edges
    $\{(\sigma(u_j),\sigma(u_{j+1})) \mid 0\leq j<\ell\}$, with
    addition on indices carried out modulo~$\ell$.  Suppose towards a
    contradiction that $C$~is not an $\ell$-cycle.  Since $C$~has at
    most $\ell$~vertices and at most $\ell$~edges, it cannot have an
    $\ell$-cycle as a proper subgraph. Since $H$~has no odd cycles
    shorter than~$\ell$, $C$~must be bipartite.  But then the walk
    $\sigma(u_0) \sigma(u_1) \dots \sigma(u_{\ell-1}) \sigma(u_0)$ is an
    odd-length walk from a vertex to itself and no such walk can exist
    in a bipartite graph.
\end{proof}

\begin{corollary}\label{cor:Jcycle}
    Let $H$ be a graph whose odd-girth is~$\ell$.  For any path~$P$ on
    fewer than $\ell$~vertices, $|\Homs{(J_{\ell,P,x},x)}{(H,v)}|$ is
    the number of $\ell$-cycles in~$H$ that contain the path~$vP$.
\end{corollary}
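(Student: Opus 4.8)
The plan is to set up an explicit bijection between $\Homs{(J_{\ell,P,x},x)}{(H,v)}$ and the set of $\ell$-cycles of~$H$ that contain the path~$vP$, given by sending a homomorphism to its image; the corollary then follows immediately by counting.

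Write $P=v_1\dots v_k$ with $k\ge 1$, and recall that an element of $\Homs{(J_{\ell,P,x},x)}{(H,v)}$ is a homomorphism~$\sigma$ from the $\ell$-cycle $G=xu_1\dots u_{\ell-1}x$ to~$H$ with $\sigma(x)=v$ and $\sigma(u_j)=v_j$ for all $j\in[k]$. First I would apply Lemma~\ref{lem:cycle-to-cycle} (valid since $G$~is an $\ell$-cycle and $H$~has odd-girth~$\ell$) to conclude that the image~$C$ of~$\sigma$ is an $\ell$-cycle of~$H$. The key step is then to upgrade this to the statement that $\sigma$~is an \emph{isomorphism} from~$G$ onto~$C$: the vertex map $\sigma\colon V(G)\to V(C)$ is a surjection between sets of the same size~$\ell$, hence a bijection, and since the edge set of~$C$ equals $\{(\sigma(a),\sigma(b)):(a,b)\in E(G)\}$ and has exactly~$\ell$ elements while $E(G)$~also has~$\ell$, the map $\sigma$~is a bijection on edges as well. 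In particular $v,v_1,\dots,v_k$ are pairwise distinct and $vv_1\dots v_k$, being the image of the subpath $xu_1\dots u_k$ of~$G$, is a path in~$C$; so $C$~does contain~$vP$, and the image map is well defined into the claimed target set.

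It then remains to verify the two directions. For injectivity: if $\sigma,\sigma'$ both have image~$C$, they are isomorphisms $G\to C$ with $\sigma(x)=\sigma'(x)=v$ and $\sigma(u_1)=\sigma'(u_1)=v_1$, so walking around the cycle~$G$ from the edge~$xu_1$ and using that every vertex of a cycle has exactly two neighbours, the value on each successive~$u_j$ is forced and $\sigma=\sigma'$. For surjectivity: given an $\ell$-cycle~$C$ of~$H$ containing the path~$vP$, that path occupies a consecutive arc of~$C$, so reading the vertices of~$C$ off starting at~$v$ in the direction of~$v_1$ yields a cyclic list $v,v_1,\dots,v_k,c_{k+1},\dots,c_{\ell-1}$; setting $\sigma(x)=v$, $\sigma(u_j)=v_j$ for $j\le k$ and $\sigma(u_j)=c_j$ for $k<j\le\ell-1$ maps consecutive vertices of~$G$ to consecutive vertices of~$C$ and respects~$\tau$, so it lies in $\Homs{(J_{\ell,P,x},x)}{(H,v)}$ and has image~$C$. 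The case in which $vP$~fails to be a path (because $v\in\{v_1,\dots,v_k\}$ or $v$~is not adjacent to~$v_1$) needs no separate treatment: both sides are then~$0$, consistent with the bijection being between empty sets. I expect the isomorphism-upgrade step to be the only point requiring genuine care; the rest is routine cycle bookkeeping, and much of the work is already done for us by Lemma~\ref{lem:cycle-to-cycle}.
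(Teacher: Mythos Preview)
Your proposal is correct and takes essentially the same approach as the paper, which is considerably terser: it invokes Lemma~\ref{lem:cycle-to-cycle} to conclude the image is an $\ell$-cycle containing~$vP$ and leaves the bijection implicit. Your explicit isomorphism-upgrade and the injectivity/surjectivity verification fill in precisely the details the paper elides.
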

\begin{proof}
    By Lemma~\ref{lem:cycle-to-cycle}, the image of $G(J_{\ell,P,x})$ under
    any homomorphism to~$H$ is an $\ell$-cycle in~$H$ and, because of
    the pinning and distinguished vertex, this cycle must contain the
    path~$vP$.
\end{proof}

Let $\NCl(vw)$ be the number of $\ell$-cycles in~$H$ containing the
edge $(v,w)$. 

\begin{lemma}\label{lem:RLv-even}
    Let $H$ be a graph whose odd-girth is~$\ell$.  Every vertex $v\in
    V(H)$ has an even number of neighbours~$w$ such that $\NCl(vw)$ is
    odd.
\end{lemma}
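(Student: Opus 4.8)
The plan is to prove the lemma by a straightforward double-counting argument; no homomorphism machinery is needed here. Fix a vertex $v\in V(H)$ and let $M$ denote the number of $\ell$-cycles in~$H$ that pass through~$v$. I claim that $\sum_{w\in\Gamma_H(v)} \NCl(vw) = 2M$. Granting this, the lemma follows at once: a sum of integers is even if and only if an even number of its terms are odd, so $\sum_{w\in\Gamma_H(v)} \NCl(vw)$ being even forces the number of neighbours~$w$ with $\NCl(vw)$ odd to be even.

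To prove the claimed identity I would argue as follows. Since $H$~is simple and loop-free, $\ell\geq 3$. Consider any $\ell$-cycle~$C$ in~$H$ that contains~$v$, and write $C$ as $v_0 v_1\cdots v_{\ell-1} v_0$ with $v = v_0$. The two edges of~$C$ incident on~$v$ are exactly $(v_0,v_1)$ and $(v_0,v_{\ell-1})$, and $v_1\neq v_{\ell-1}$ since cycles do not repeat vertices and $\ell\geq 3$. Hence $C$ contributes exactly~$1$ to $\NCl(vv_1)$ and exactly~$1$ to $\NCl(vv_{\ell-1})$, and $0$ to $\NCl(vw)$ for every other neighbour~$w$ of~$v$. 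Conversely, every $\ell$-cycle counted by $\NCl(vw)$ for some $w\in\Gamma_H(v)$ passes through~$v$. Summing $\NCl(vw)$ over all $w\in\Gamma_H(v)$ thus counts each $\ell$-cycle through~$v$ exactly twice, once for each of its two edges at~$v$, which yields $\sum_{w\in\Gamma_H(v)}\NCl(vw) = 2M$.

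I do not anticipate any real obstacle. The argument is elementary, and the only point deserving care is the observation that an $\ell$-cycle through~$v$ meets~$v$ in precisely two incident edges, which is immediate from the convention (Section~\ref{sec:notation}) that cycles do not repeat vertices. (The hypothesis on the odd-girth of~$H$ is not actually needed for this particular lemma; it appears only because the lemma is used in that setting.)
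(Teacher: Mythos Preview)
Your proof is correct and follows the same double-counting strategy as the paper: both show that $\sum_{w\in\Gamma_H(v)} \NCl(vw)$ is even by observing that each $\ell$-cycle through~$v$ contributes exactly~$2$ to this sum. The only difference is in how this last point is justified. The paper invokes the odd-girth hypothesis to argue that no vertex of the cycle other than $w_1$ and~$w_{\ell-1}$ can be adjacent to~$v$ in~$H$ (ruling out chords), which is a stronger statement than needed; you instead note directly that $v$ has degree exactly~$2$ in the cycle itself, so the cycle contains exactly two edges incident to~$v$ regardless of any chords in~$H$. Your route is more economical, and your remark that the odd-girth hypothesis is not actually needed for this lemma is correct.
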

\begin{proof}
    If $v$~is not in any $\ell$-cycle, the claim is vacuous: the even
    number is zero.  Otherwise, let $C=vw_1\dots w_{\ell-1}v$ be an
    $\ell$-cycle in~$H$.  If $w_j\in\Gamma_H(v)$ for some even $j\neq
    \ell-1$, the odd cycle $vw_1\dots w_jv$ contradicts the stated
    odd-girth of~$H$.  If $w_j\in\Gamma_H(v)$ for some odd $j\neq 1$,
    the odd cycle $vw_j\dots w_{\ell-1}v$ contradicts the odd-girth.
    Therefore, $w_1$ and~$w_{\ell-1}$ are the only vertices in~$C$
    that are adjacent to~$v$ and every $\ell$-cycle
    through~$v$ contributes exactly~$2$ to $\sum_{w\in\Gamma_H(v)}
    \NCl(vw)$.  Therefore, the sum is even, so it has an even number
    of odd terms.
\end{proof}

\begin{lemma}\label{lem:edge-in-odd}
    Let $H$ be a square-free graph whose odd-girth is~$\ell$.  If
    $H$~contains an edge that is in an odd number of $\ell$-cycles,
    then $H$~has a hardness gadget.
\end{lemma}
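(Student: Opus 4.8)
The plan is to build a hardness gadget for~$H$ directly from the guaranteed edge, with no case analysis. Fix an edge $(v,w)$ of~$H$ that lies in an odd number of $\ell$-cycles, so $\NCl(vw)$ is odd, and set $i = w$ and $s = v$. For $J_1$ and $J_2$ I would take the $\ell$-cycle gadgets for the trivial one-vertex paths, namely $J_1 = J_{\ell,(v),y}$ and $J_2 = J_{\ell,(w),z}$ (these are well-defined instances of Definition~\ref{defn:Jcycle} since $\ell \geq 3$ as $H$ has no loops, so $\ell > \max\{2,1\}$). For $J_3$ I would simply take the single edge $(y,z)$ with empty pinning function, which is exactly the one-edge caterpillar gadget for the path $vw$. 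The claim is that $(i, s, (J_1,y), (J_2,z), (J_3,y,z))$ satisfies Definition~\ref{defn:hardness-gadget}; all three partially labelled graphs are connected, as required.

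For conditions~1 and~2, Corollary~\ref{cor:Jcycle} gives $|\Homs{(J_1,y)}{(H,a)}| = \NCl(av)$ for every vertex~$a$, so $\Oy = \{a : \NCl(av)\text{ is odd}\}$; since $\NCl(av) > 0$ forces $a\in\Gamma_H(v)$, we have $\Oy\subseteq\Gamma_H(v)$, and Lemma~\ref{lem:RLv-even} applied at~$v$ shows $|\Oy|$ is even. As $\NCl(vw)$ is odd, $i = w\in\Oy$. Symmetrically, $\Oz = \{b : \NCl(bw)\text{ is odd}\}\subseteq\Gamma_H(w)$ has even cardinality and contains $s = v$.

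For conditions~3 and~4 I would use that $J_3$ is a single edge, so $|\Homs{(J_3,y,z)}{(H,a,b)}|$ equals $1$ if $(a,b)\in E(H)$ and $0$ otherwise. Condition~4 is then immediate: $(w,v)$ is an edge, $(o,v)$ is an edge for every $o\in\Oy\subseteq\Gamma_H(v)$, and $(w,x)$ is an edge for every $x\in\Oz\subseteq\Gamma_H(w)$, so $|\Sis| = |\Sos| = |\Six| = 1$. For condition~3, take $o\in\Oy-i$ and $x\in\Oz-s$, so $o\in\Gamma_H(v)\setminus\{w\}$ and $x\in\Gamma_H(w)\setminus\{v\}$. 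If $(o,x)\notin E(H)$ (in particular if $o = x$) then $|\Sox| = 0$; otherwise $o,v,w,x$ are four distinct vertices and $o\!-\!v\!-\!w\!-\!x\!-\!o$ is a $4$-cycle in~$H$, contradicting square-freeness. Hence $|\Sox| = 0$ is even in every case. This is the single point at which square-freeness is invoked, which is precisely why the hypothesis is needed.

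I do not anticipate a real obstacle: the heavy lifting has already been done in setting up the $\ell$-cycle gadget (Corollary~\ref{cor:Jcycle}) and the handshake-style counting of Lemma~\ref{lem:RLv-even}, which together force the parities of $|\Oy|$ and~$|\Oz|$ and place $i$ and~$s$ inside them. The only things to check carefully are the degenerate-path instance of Definition~\ref{defn:Jcycle} and the connectedness of the three gadget graphs, both of which are routine.
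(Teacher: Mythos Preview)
Your proposal is correct and is essentially identical to the paper's own proof: the paper also takes the distinguished edge, calls it $(i,s)$ directly, uses the one-vertex-path cycle gadgets $J_{\ell,s,y}$ and $J_{\ell,i,z}$ for $J_1,J_2$, and the single edge for $J_3$, then invokes Corollary~\ref{cor:Jcycle} and Lemma~\ref{lem:RLv-even} exactly as you do. Your treatment is slightly more explicit about the degenerate case $o=x$ (which can occur when $\ell=3$), but otherwise the arguments match line for line.
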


Note that, for the case $\ell=3$, any edge in a $3$-cycle in~$H$ must be in
exactly one $3$-cycle since, if an edge $(x,y)$ is in distinct
$3$-cycles $xyzx$ and $xyz'x$, then $xzyz'x$ is a $4$-cycle in~$H$,
which is forbidden by the hypothesis of the lemma.  The absence of
$4$-cycles is also required for the caterpillar gadget produced in the
proof.

\begin{proof}
    Let $(i, s)$ be an edge in an odd number of $\ell$-cycles in~$H$.
    Let $J_1$~be the $\ell$-cycle gadget $J_{\ell,s,y}$ (so $\tau(J_1) =
    \{u_1\mapsto s\}$) and let $J_2$~be the $\ell$-cycle gadget
    $J_{\ell,i,z}$.  Let $G(J_3)$~be the single edge $(y,z)$ and let
    $\tau(J_3)=\emptyset$ ($J_3$~is, technically, a caterpillar gadget
    but it is easier to analyse it directly).

    We claim that $(i, s, (J_1,y), (J_2,z), (J_3,y,z))$ is a hardness
    gadget for~$H$.  By Corollary~\ref{cor:Jcycle},
    $|\Homs{(J_{\ell,s,y},y)}{(H,v)}|$ is the number of $\ell$-cycles
    in~$H$ that contain the edge $(v,s)$, so
    \begin{equation*}
        \Oy = \{v\in V(H)\mid (v,s) \text{ is in an odd number of
                                                      $\ell$-cycles}\}\,.
    \end{equation*}
    Thus, $|\Oy|$ is even by Lemma~\ref{lem:RLv-even}. $\Oy$~contains $i$
    by the choice of the edge $(i,s)$ in an odd number of
    $\ell$-cycles.  Similarly, $\Oz$~is even and contains~$s$.
    To verify the remaining properties required by
    Definition~\ref{defn:hardness-gadget}, note that $J_3$~is a single
    edge so, for any $a,b\in V(H)$, $|\Homs{(J_3,y,z)}{(H,a,b)}|$
    is~$1$ if $(a,b)\in E(H)$ and~$0$, otherwise.  We have
    $\Oy\subseteq \Gamma_{\!H}(s)$ and $\Oz\subseteq \Gamma_{\!H}(i)$
    so, for any $o\in \Oy-i$ and any $x\in \Oz-s$, $H$~contains the
    edges $(o,s)$, $(s,i)$ and $(i,x)$ but it cannot contain the
    edge~$(o,x)$ because $H$~is square-free.
\end{proof}

\begin{figure}[t]
\begin{center}
\begin{tikzpicture}[scale=1,node distance = 1.5cm]
\tikzstyle{dot}   =[fill=black, draw=black, circle, inner sep=0.15mm]
\tikzstyle{vertex}=[fill=black, draw=black, circle, inner sep=2pt]
\tikzstyle{dist}  =[fill=white, draw=black, circle, inner sep=2pt]
\tikzstyle{pinned}=[fill=white, draw=black, minimum size=9mm, circle,inner sep=0pt]

\newcommand{\drawH}[1]
{
    \foreach \x in {0, 40, ..., 360}
    {
        \node[draw=#1,fill=#1,circle,inner sep=2pt] at (\x:1.5) {};
    }
    \draw[#1] (0,0) circle (1.5);

    \node[text=#1] at (  0:1.1) {$v_k$};
    \node[text=#1] at ( 40:1  ) {$v_{k-1}$};
    \node[text=#1] at (160:1.1) {$v_1$};
    \node[text=#1] at (200:1.1) {$v_0$};
    \node[text=#1] at (240:1.1) {$v_{\ell-1}$};
    \node[text=#1] at (320:1  ) {$v_{k+1}$};

    \draw [#1,<-] (200:1.8) arc (-160:0:1.8);
    \node[text=#1,fill=white,draw=white,circle, inner sep=0.5pt] at (300:1.8) {$P$};
}

    %
    %
    \begin{scope}[shift={(-4,0)}]
        \drawH{gray!80!white};

        \node at (140:3.5) {$J_1$:};

        \draw (0,0) circle (2.5);

        \node[pinned] at (  0:2.5) {$v_k$};
        \node[vertex] at ( 40:2.5) {};
        \node[vertex] at ( 80:2.5) {};
        \node[vertex] at (120:2.5) {};
        \node[dist]   at (160:2.5) {};
        \node[pinned] at (200:2.5) {$v_0$};
        \node[pinned] at (240:2.5) {$v_{\ell-1}$};
        \node[pinned] at (280:2.5) {};
        \node[pinned] at (320:2.5) {$v_{k+1}$};

        \node at (160:2.85) {$y$};
    \end{scope}

    %
    %
    \begin{scope}[shift={(4,0)}]
        \drawH{gray!80!white};

        \node at (140:3.5) {$J_2$:};

        \draw (0,0) circle (2.5);

        \node[pinned] at (  0:2.5) {$v_k$};
        \node[dist]   at ( 40:2.5) {};
        \node[vertex] at ( 80:2.5) {};
        \node[vertex] at (120:2.5) {};
        \node[vertex] at (160:2.5) {};
        \node[pinned] at (200:2.5) {$v_0$};
        \node[pinned] at (240:2.5) {$v_{\ell-1}$};
        \node[pinned] at (280:2.5) {};
        \node[pinned] at (320:2.5) {$v_{k+1}$};

        \node at (40:2.85) {$z$};
    \end{scope}

    %
    %
    \begin{scope}[shift={(0,-6)}]
        \drawH{gray!80!white};

        \node at (160:4.5) {$J_3$:};

        \draw (0:3.3) arc (0:200:3.3);
        \node[dist]        at (  0:3.3) {};
        \node[vertex] (u4) at ( 40:3.3) {};
        \node[vertex] (u3) at ( 80:3.3) {};
        \node[vertex] (u2) at (120:3.3) {};
        \node[vertex] (u1) at (160:3.3) {};
        \node[dist]        at (200:3.3) {};

        \node at (200:3.65) {$y$};
        \node at (  0:3.65) {$z$};

        \node[pinned] (w4) at ( 40:2.3) {$v_{k-1}$};
        \node[pinned] (w3) at ( 80:2.3) {};
        \node[pinned] (w2) at (120:2.3) {};
        \node[pinned] (w1) at (160:2.3) {$v_1$};

        \foreach \x in {1,2,3,4}
            \draw (u\x) -- (w\x);
    \end{scope}

\end{tikzpicture}
\end{center}
\caption{The parts $J_1$, $J_2$ and $J_3$ of the hardness gadget
    constructed in the proof of Lemma~\ref{lem:odd-cycle}.  The
    corresponding cycle in~$H$ is indicated in grey within each
    gadget.  The path~$P = v_k\dots v_{\ell-1}v_0$ is undirected but
    the arrow indicates the order in which the vertices are listed.}
\label{fig:gadget-from-cycle}
\end{figure}

\begin{lemma}\label{lem:odd-cycle}
    Let $H$ be a square-free graph in which every vertex
    has odd degree.  If $H$~contains an odd cycle, then it has a
    hardness gadget.
\end{lemma}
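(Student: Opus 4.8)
The plan is to build a hardness gadget directly from a shortest odd cycle of~$H$, using a caterpillar gadget for the part~$J_3$ so that Lemma~\ref{lem:J3-caterpillar} does most of the work. Since $H$~contains an odd cycle, its odd-girth~$\ell$ is finite; fix a shortest odd cycle $C = v_0 v_1 \cdots v_{\ell-1} v_0$. If $\ell = 3$, then by square-freeness each edge of a triangle of~$H$ lies in exactly one triangle (two triangles sharing an edge would yield a $4$-cycle), so $H$~has an edge in an odd number of $\ell$-cycles and Lemma~\ref{lem:edge-in-odd} applies. More generally, if any edge of~$H$ lies in an odd number of $\ell$-cycles then Lemma~\ref{lem:edge-in-odd} already finishes the proof, so from now on I may assume that every edge of~$H$ lies in an \emph{even} number of $\ell$-cycles.

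In this remaining case, let $J_3$ be the caterpillar gadget $J_P$ (Definition~\ref{defn:caterpillar}) for the sub-path $P = v_0 v_1 \cdots v_k$ of~$C$, with distinguished vertex~$y$ at the $v_0$ end and~$z$ at the $v_k$ end, and set $i = v_1$, $s = v_{k-1}$. Every vertex of~$H$ has odd degree, so the internal-degree hypothesis of Lemma~\ref{lem:J3-caterpillar} holds automatically, and $H$~is square-free; hence that lemma guarantees properties~3 and~4 of Definition~\ref{defn:hardness-gadget} for $(J_3,y,z)$ as soon as $\Oy \subseteq \Gamma_H(v_0)$ with $v_1\in\Oy$ and $\Oz \subseteq \Gamma_H(v_k)$ with $v_{k-1}\in\Oz$. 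To realise these sets, let $J_1$ be the $\ell$-cycle gadget $J_{\ell,Q,y}$ (Definition~\ref{defn:Jcycle}) in which $Q$ pins the copies of its vertices to the vertices of the arc of~$C$ running from~$v_0$ (through $v_{\ell-1}, v_{\ell-2}, \dots$) round to~$v_k$, arranged so that $y$ is adjacent to the copy pinned to~$v_0$ with a free sub-path of length $k-1$ joining~$y$ to the copy pinned to~$v_k$; define $J_2$ symmetrically, interchanging the roles of the two ends $v_0$ and $v_k$ of~$P$. By Lemma~\ref{lem:cycle-to-cycle} and Corollary~\ref{cor:Jcycle}, $|\Homs{(J_1,y)}{(H,a)}|$ equals the number of $\ell$-cycles of~$H$ containing the arc $a\,v_0\,v_{\ell-1}\cdots v_k$, which is~$0$ unless $a\in\Gamma_H(v_0)$, so $\Oy\subseteq\Gamma_H(v_0)$; and the arc $v_1 v_2 \cdots v_k$ drawn from~$C$ itself witnesses at least one such cycle for $a = v_1$.

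The main obstacle is the verification of the two \emph{parity} conditions, Definition~\ref{defn:hardness-gadget}(1,2): that $|\Oy|$ and $|\Oz|$ are even and that $v_1\in\Oy$, $v_{k-1}\in\Oz$. Summing $|\Homs{(J_1,y)}{(H,a)}|$ over $a\in\Gamma_H(v_0)$ counts each $\ell$-cycle through the arc $v_0\,v_{\ell-1}\cdots v_k$ exactly once (the vertex~$a$ is recovered as the second neighbour of~$v_0$ on the cycle), so the parity of~$|\Oy|$ is the parity of the number of such cycles; with the choice $k = \ell-1$ the pinned arc is the single edge $v_0 v_{\ell-1}$ and this number is $\NCl(v_0 v_{\ell-1})$, which is even by the standing assumption, so $|\Oy|$ is even (and symmetrically $|\Oz|$). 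The genuinely delicate point is then $v_1\in\Oy$ (and $v_{k-1}\in\Oz$): $v_1\in\Oy$ means that the number of $\ell$-cycles through the $2$-path $v_1\,v_0\,v_{\ell-1}$ (equivalently, the number of length-$(\ell-2)$ paths from~$v_{\ell-1}$ to~$v_1$ avoiding~$v_0$) is odd, a count to which~$C$ contributes~$1$ but which other $\ell$-cycles may perturb. I expect this to require a counting argument in the style of Lemma~\ref{lem:RLv-even} — each $\ell$-cycle meets a vertex in exactly two edges — combined with a pigeonhole over the $\ell$ edges of the odd cycle~$C$: letting $p_j$ be the number of $\ell$-cycles through the $2$-path $v_{j-1}\,v_j\,v_{j+1}$, one wants two cyclically consecutive~$p_j$ to be odd, which should follow by analysing $\sum_j p_j \bmod 2$ (the term from~$C$ is~$\ell$, odd) against the contributions of the other $\ell$-cycles, using square-freeness to exclude the short overlaps that would otherwise spoil the parity; this then pins down the correct labelling $v_0,v_1$ (and symmetrically $v_{k-1},v_k$). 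Once all four numerical conditions of Definition~\ref{defn:hardness-gadget} are checked, $(i,s,(J_1,y),(J_2,z),(J_3,y,z))$ is the desired hardness gadget for~$H$.
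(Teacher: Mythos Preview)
Your reduction to Lemma~\ref{lem:edge-in-odd} when some edge lies in an odd number of $\ell$-cycles is fine, and the overall architecture---$\ell$-cycle gadgets for $J_1,J_2$, a caterpillar for $J_3$, with Lemma~\ref{lem:J3-caterpillar} handling conditions~3 and~4---matches the paper. The gap is exactly where you flag it as ``genuinely delicate'': showing $i\in\Oy$ and $s\in\Oz$. Your fixed choice $k=\ell-1$ forces you to prove that the two specific $2$-paths $v_1v_0v_{\ell-1}$ and $v_{\ell-2}v_{\ell-1}v_0$ each lie in an odd number of $\ell$-cycles, and your pigeonhole sketch does not establish this. Even if $\sum_j p_j$ were odd (which itself is unclear: another $\ell$-cycle meeting $C$ in a path of $m$ edges contributes $m-1$ to the sum, with no evident parity control), that would only say an odd number of the $p_j$ are odd, not that two \emph{cyclically consecutive} ones are. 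There is no mechanism here forcing adjacency of the odd $p_j$'s, so the argument cannot be completed as stated.

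The paper avoids this obstacle by not fixing~$k$ in advance. Instead it takes $P=v_kv_{k+1}\cdots v_{\ell-1}v_0$ to be a \emph{longest} path that lies in a positive, even number of $\ell$-cycles (such a path exists because every edge of~$C$ does, by the standing assumption), and then uses the caterpillar on the complementary arc $v_0\cdots v_k$. Maximality is the whole point: any one-vertex extension $uP$ of~$P$ is in either zero or an odd number of $\ell$-cycles. Summing over $u\in\Gamma_H(v_k)$ gives the (even) number of $\ell$-cycles through~$P$, so $|\Oz|$ is even; and the particular extension $v_{k-1}P$ is in at least one $\ell$-cycle (namely~$C$), hence in an odd number, so $s=v_{k-1}\in\Oz$. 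The same argument at the other end gives $|\Oy|$ even and $i=v_1\in\Oy$. Thus the maximal-path choice delivers both parity requirements simultaneously, with no separate counting argument needed.
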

\begin{proof}
    Let $\ell$ be the odd-girth of~$H$.
    If $H$~contains an edge in an odd number of $\ell$-cycles (which is
    guaranteed   for $\ell=3$, since $H$ is square-free), then $H$~has a hardness gadget by
    Lemma~\ref{lem:edge-in-odd}.  So, for the remainder of the proof,
    we may assume that the
    shortest odd cycle in~$H$ has length $\ell>4$ and that every edge is
    in a (not necessarily positive) even number of $\ell$-cycles.

    Let $P=v_k v_{k+1}\dots v_{\ell-1}v_0$ be a longest path that is
    in a positive, even number of $\ell$-cycles (see
    Figure~\ref{fig:gadget-from-cycle}; it turns out to be most
    convenient to label the vertices in this order; the path has
    length $\ell-k$).  Such a path certainly exists because any edge
    in an $\ell$-cycle is in a positive, even number of them. So, in
    particular, $P$~contains at least one edge.  Further $P$~has fewer
    than $\ell-1$ edges, because any path on $\ell-1$ edges is in at
    most one $\ell$-cycle, since $H$~has no parallel edges.  Let
    $C=v_0v_1\dots v_{\ell-1}v_0$ be an $\ell$-cycle containing~$P$.
    Let $\mathrm{rev}(P) = v_0v_{\ell-1} \dots v_k$ be the path~$P$
    with the vertices listed in the reverse order.

    Let $i=v_1$ and $s=v_{k-1}$. Let $J_1$~be the $\ell$-cycle gadget
    $J_{\ell,\mathrm{rev}(P),y}$, let $J_2$~be the $\ell$-cycle gadget
    $J_{\ell,P,z}$, and let
    $J_3$~be the caterpillar gadget $J_{v_0\dots v_k}$.

    We claim that $(i, s, (J_1, y), (J_2, z), (J_3, y, z))$ is a
    hardness gadget for~$H$.  Since $P$~was chosen to be a longest
    path in a positive, even number of $\ell$-cycles, any path $uP$
    in~$H$ must be in an odd number of $\ell$-cycles or in none at
    all.  Since $P$~itself is in an even number of $\ell$-cycles, the
    number of extensions~$uP$ in an odd number of cycles must be even.
    By Corollary~\ref{cor:Jcycle}, $|\Homs{(J_{\ell,P,z},z)}{(H,u)}|$ is
    the number of $\ell$-cycles in~$H$ that contain the path~$uP$.
    Therefore, $\Oz$~is precisely the set of
    vertices~$u$ such that $uP$ is in an odd number of $\ell$-cycles,
    so we have established that $|\Oz|$~is even.  Since $sP$~is an
    extension of~$P$, it is not in a positive, even number of
    $\ell$-cycles; it is in at least one $\ell$-cycle (namely, $C$) so
    it is in an odd number of them.  Therefore, $s\in\Oz$.  Similarly,
    $|\Oy|$~is even and $i\in\Oy$.

    It remains to verify that the conditions of
    Lemma~\ref{lem:J3-caterpillar} hold for~$J_3$, so that lemma gives
    us the remaining properties we need from
    Definition~\ref{defn:hardness-gadget}.  All vertices in~$H$ have
    odd degree by assumption, including in particular the interior
    vertices of~$P$.  We have already established that $i = v_1 \in
    \Oy$ and $s = v_{k-1} \in \Oz$.  Finally, $\Oy\subseteq
    \Gamma_H(v_0)$ because, in $G(J_1)$, $y$~is adjacent to a vertex
    that is pinned to~$v_0$.  Similarly, $\Oz\subseteq \Gamma_H(v_k)$.
\end{proof}

\subsection{Bipartite graphs}
\label{sec:gadget:bipartite}

The only remaining case is bipartite graphs~$H$ in which every vertex
has odd degree.  We show that, if $H$~has an ``even gadget'',
it has a hardness gadget.  And it turns out that 
every connected bipartite graph with more than one edge has an even gadget.

\begin{definition}
\label{defn:even-gadget}
An \emph{even gadget} for a bipartite graph~$H$ with at least one edge
is an edge $(a,b)$ of $H$ 
together with    
a connected
    bipartite graph~$G$ with a distinguished edge $(w,x)$ such that
    $|\Homs{(G,w,x)}{(H,a,b)}|$ is even.
\end{definition}

Note that, for bipartite $G$ and~$H$, 
with edges $(w,x)$ and $(a,b)$, respectively,
there is
always at least one homomorphism from $(G,w,x)$ to $(H,a,b)$, since
the whole of~$G$ can be mapped to the edge $(a,b)$.  So, although
Definition~\ref{defn:even-gadget} only requires
$|\Homs{(G,w,x)}{(H,a,b)}|$ to be even, the number of homomorphisms is
always non-zero.

Suppose that
 $H$ is any connected bipartite graph 
with more than one edge such
that, for some edge $(a,b)$ of $H$, 
$(H,a,b)$ is involution-free.
We will show that $H$
 has an even gadget.  If, furthermore, $H$~is
square-free, this even gadget gives a hardness gadget.  If $H$~is also
involution-free, the hardness gadget implies \parp{}-completeness
of \parhcol{}, by Theorem~\ref{thm:hardness-gadget}.

\begin{lemma}
\label{lem:always-even-gadget}
Suppose that $H$ is a connected bipartite graph with more than one edge such that,
for some edge $(a,b)$ of $H$, $(H,a,b)$ is involution-free. Then $H$ has an even gadget.
\end{lemma}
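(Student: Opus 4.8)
The plan is to argue by contradiction, using the connected version of the \Lovasz-style lemma, Corollary~\ref{cor:Lovasz}. Fix the edge $(a,b)$ of~$H$ for which $(H,a,b)$ is involution-free, and suppose $H$ has no even gadget. In particular, there is no even gadget whose $H$-edge is $(a,b)$, so for \emph{every} connected bipartite graph $G$ with a distinguished edge $(w,x)$, the quantity $|\Homs{(G,w,x)}{(H,a,b)}|$ is not even; since $G$ can always be mapped onto the edge $(a,b)$ (sending the bipartition class of~$w$ to~$a$ and that of~$x$ to~$b$), this quantity is at least~$1$, hence odd.

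Let $K$ be the graph consisting of the single edge $(a',b')$ (a fresh copy of $K_2$). I claim that $(H,a,b)$ and $(K,a',b')$ satisfy \eqref{eq:Lovasz} for every connected graph $(G,x_1,x_2)$ with two distinguished vertices such that $G[x_1,x_2]$ is connected. There are three cases. If $x_1=x_2$, then both sides of \eqref{eq:Lovasz} are~$0$: since $a\neq b$ and $a'\neq b'$, no homomorphism can send the single distinguished vertex to both required images. If $x_1\neq x_2$ then $(x_1,x_2)\in E(G)$ because $G[x_1,x_2]$ is connected. If moreover $G$ is not bipartite, it contains an odd cycle, so there are no homomorphisms from $G$ to either $H$ or~$K$ (both are bipartite) and both sides of \eqref{eq:Lovasz} are~$0$. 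If $G$ is bipartite, then $|\Homs{(G,x_1,x_2)}{(K,a',b')}|=1$, because a connected bipartite graph has exactly two proper $2$-colourings and fixing $x_1\mapsto a'$ selects one of them, which automatically sends $x_2\mapsto b'$ as $(x_1,x_2)$ is an edge; and $|\Homs{(G,x_1,x_2)}{(H,a,b)}|$ is odd by the contradiction hypothesis. So both sides are odd. In every case \eqref{eq:Lovasz} holds.

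It remains to check the hypotheses of Corollary~\ref{cor:Lovasz}: $(H,a,b)$ is connected and involution-free by assumption, and $H[a,b]$ is the edge $(a,b)$, hence connected; $(K,a',b')$ is connected, the subgroup of $\Aut(K)$ fixing both $a'$ and~$b'$ is trivial (so $(K,a',b')$ is involution-free), and $K[a',b']=K$ is connected. Corollary~\ref{cor:Lovasz} then yields $(H,a,b)\isoto(K,a',b')$, so $H$ has exactly two vertices and a single edge, contradicting the assumption that $H$ has more than one edge. Hence $H$ has an even gadget. The only genuine work is the three-case verification of \eqref{eq:Lovasz}; the mildly delicate points are the equality-type case $x_1=x_2$ and the observation that non-bipartite~$G$ contributes zero on both sides because $H$ and~$K$ are bipartite. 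I do not expect a serious obstacle.
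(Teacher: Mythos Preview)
Your proof is correct and uses essentially the same idea as the paper: compare $(H,a,b)$ with the single edge $(K_2,a,b)$ via Corollary~\ref{cor:Lovasz}. The paper argues directly --- since $(H,a,b)\not\isoto(K_2,a,b)$, the corollary produces a connected $(G,w,x)$ with $G[w,x]$ connected on which the counts differ mod~$2$; non-bipartite~$G$ is ruled out because both counts would be~$0$, and then $|\Homs{(G,w,x)}{(K_2,a,b)}|=1$ forces $|\Homs{(G,w,x)}{(H,a,b)}|$ even --- whereas you run the contrapositive. Your explicit treatment of the degenerate case $x_1=x_2$ is a detail the paper leaves implicit, but otherwise the two arguments coincide.
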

\begin{proof}
Let $H$ be a graph satisfying the conditions in the statement of the lemma.
Let $K_2$ be the graph consisting of the single edge $(a,b)$.
Clearly, $(K_2,a,b)$ is involution-free (since there are no non-trivial automorphisms of~$K_2$
that fix~$a$ and~$b$) and $H \not\isoto K_2$ since $H$ has
more than one edge, so $(H,a,b) \not\isoto (K_2,a,b)$.
By Corollary~\ref{cor:Lovasz} (taking $H'=K_2$ and $\ybar = \ybar' = (a,b)$),   
 there is a connected graph $(G,w,x)$ with distinguished vertices $w$ and $x$ such that
 $(w,x)$ is an edge 
 and  
\begin{equation}   |\Homs{(G,w,x)}{(H,a,b)}|
            \not\equiv |\Homs{(G,w,x)}{(K_2,a,b)}| \pmod2\,.\label{eq:August}\end{equation}
            
$G$ must be bipartite --- otherwise
$$   |\Homs{(G,w,x)}{(H,a,b)}|
             =  |\Homs{(G,w,x)}{(K_2,a,b)}| =0\,,$$
             contradicting~\eqref{eq:August}.
Thus,
 $ |\Homs{(G,w,x)}{(K_2,a,b)}|=1$,       
so the edge $(a,b)$ of~$H$ together with $(G,w,x)$ 
 is an even gadget. \end{proof}

\begin{lemma}
\label{lem:even-gadget}
Suppose that $H$ is a connected, bipartite, 
square-free
graph with more than one edge such that,
for some edge $(a,b)$ of $H$, $(H,a,b)$ is involution-free. 
Suppose that every vertex of~$H$ has odd degree.
Then $H$ has a hardness gadget.
\end{lemma}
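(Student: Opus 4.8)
The plan is to turn the even gadget supplied by Lemma~\ref{lem:always-even-gadget} into a hardness gadget whose third component is a caterpillar gadget running along a path of~$H$ between~$a$ and~$b$, with $J_1$ and~$J_2$ obtained by partially pinning the even gadget. By Lemma~\ref{lem:always-even-gadget}, $H$ has an even gadget: an edge $(a,b)$ of~$H$ and a connected bipartite graph $(G,w,x)$ with $(w,x)\in E(G)$ such that $N:=|\Homs{(G,w,x)}{(H,a,b)}|$ is even; since $G$ and~$H$ are bipartite, $N\geq 2$. Let $(J_1,y)$ be the partially $H$-labelled graph obtained from~$G$ by pinning $x$ to~$b$ and taking $w$ as the distinguished vertex~$y$, and let $(J_2,z)$ be obtained from~$G$ by pinning $w$ to~$a$ and taking $x$ as the distinguished vertex~$z$. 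Because $(w,x)$ is an edge of~$G$, every homomorphism must send the free endpoint to a neighbour of the pinned endpoint, so $\Oy\subseteq\Gamma_H(b)$ and $\Oz\subseteq\Gamma_H(a)$; moreover $a\notin\Oy$ and $b\notin\Oz$, precisely because $N$ is even.

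Since $H$ is bipartite and $ab\in E(H)$, the vertices $a$ and~$b$ lie in opposite colour classes and have no common neighbour, so $\Gamma_H(a)\cap\Gamma_H(b)=\emptyset$ and hence $\Oy\cap\Oz=\emptyset$. Assuming that $\Oy$ and~$\Oz$ are nonempty and of even cardinality (see below), I would pick $i\in\Oy$ and $s\in\Oz$ together with a simple path $P=b\,v_1\cdots v_{k-1}\,a$ in~$H$ with $v_1=i$ and $v_{k-1}=s$; such a path has odd length and $k\geq 3$, since $a\notin\Oy$ rules out $k=1$ and $a,b$ having no common neighbour rules out $k=2$. I would then take $J_3=J_P$. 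Every interior vertex of~$P$ has odd degree (every vertex of~$H$ does), $\Oy\subseteq\Gamma_H(b)=\Gamma_H(v_0)$ and $\Oz\subseteq\Gamma_H(a)=\Gamma_H(v_k)$, and $i=v_1\in\Oy$, $s=v_{k-1}\in\Oz$, so Lemma~\ref{lem:J3-caterpillar} applies and delivers exactly properties~3 and~4 of Definition~\ref{defn:hardness-gadget}: $|\Sox|=0$, $|\Sos|=|\Six|=1$ and $|\Sis|$ odd for all $o\in\Oy-i$ and $x\in\Oz-s$. As in Lemma~\ref{lem:J3-caterpillar}, square-freeness of~$H$ is exactly what makes this caterpillar behave. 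Combined with properties~1 and~2, this makes $\big(i,s,(J_1,y),(J_2,z),(J_3,y,z)\big)$ a hardness gadget for~$H$, which is all the lemma requires.

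The step I expect to be the main obstacle is establishing properties~1 and~2, i.e.\ that $\Oy$ and~$\Oz$ are nonempty and of even cardinality, and the related fact that a path~$P$ of the required shape exists. For the parity, the idea is that $\Aut(H,a,b)$ has odd order (it is involution-free, so this follows from Cauchy's theorem as in Corollary~\ref{cor:odd-orbit}), it fixes~$a$ and~$b$ and acts on~$\Gamma_H(b)$, and the quantity $c\mapsto|\Homs{(G,w,x)}{(H,c,b)}|$ is constant on the orbits of this action; hence $\Oy$ is a union of orbits of odd size inside~$\Gamma_H(b)$, so $|\Oy|$ has the same parity as the number of orbits it contains. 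One then uses the partition of~$\Gamma_H(b)$ into $\Aut(H,a,b)$-orbits — one of which is the singleton~$\{a\}$, which is not in~$\Oy$ — together with a suitable normalisation of the even gadget (replacing $(G,w,x)$ by a union of modified copies of itself glued along the edge $(w,x)$, which does not change the parity of~$N$) to force this count to be even and at least~$2$; the same argument applies to~$\Oz$. Finally, given that $\Oy,\Oz$ are nonempty and are disjoint subsets of $\Gamma_H(b),\Gamma_H(a)$, the path~$P$ exists because $H$ is connected: one chooses $i\in\Oy$ and $s\in\Oz$ lying in a common component of $H-\{a,b\}$ and splices a shortest $i$--$s$ path of $H-\{a,b\}$ between the edges $bi$ and~$sa$; if no such pair $i,s$ existed the removal of $\{a,b\}$ would separate~$H$ in a way that, given $|\Oy|,|\Oz|\geq 2$, contradicts connectivity, and this can be ruled out directly. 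With Definition~\ref{defn:hardness-gadget} verified, $H$ has a hardness gadget.
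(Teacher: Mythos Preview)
Your approach has a real gap at exactly the point you flag as ``the main obstacle'': with your choice of $J_1$ (pin $x\mapsto b$, distinguish $y=w$), the set $\Oy=\{c:\lvert\Homs{(G,w,x)}{(H,c,b)}\rvert\text{ odd}\}$ need not have even cardinality, and your orbit argument does not fix this. The action of $\Aut(H,a,b)$ does show that $\Oy$ is a union of odd-size orbits inside $\Gamma_H(b)\setminus\{a\}$, so $\lvert\Oy\rvert$ has the same parity as the number of orbits it contains --- but nothing tells you that number is even. Your proposed ``normalisation'' (gluing copies of $G$ along $(w,x)$) squares the relevant homomorphism counts, so it leaves $\Oy$ unchanged and cannot help. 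The non-emptiness of $\Oy$ is equally unestablished: there is no reason every $c\in\Gamma_H(b)$ couldn't give an even count. The path argument at the end is also shaky (nothing prevents $\Oy$ and $\Oz$ from lying in different components of $H-\{a,b\}$), but that is moot given the earlier gap.

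The paper's proof avoids all of this by a different, and cleaner, construction. It first takes the even gadget $(G,w,x)$ with $G$ \emph{minimal}, then pins \emph{both} $w\mapsto i$ and $x\mapsto s$ in the partially labelled graph $J$, and takes as distinguished vertex a \emph{third} vertex $y\in\Gamma_G(x)\setminus\{w\}$ (minimality plus the odd-degree hypothesis guarantees such a $y$ exists). Now $\Oy=C(y)=\{c:\lvert\Homs{(G,w,x,y)}{(H,i,s,c)}\rvert\text{ odd}\}$, and since $\sum_c\lvert\Homs{(G,w,x,y)}{(H,i,s,c)}\rvert=N$ is even, $\lvert C(y)\rvert$ is automatically even. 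Non-emptiness of $C(y)$ comes from minimality of $G$: if $i\notin C(y)$ then identifying $y$ with $w$ yields a smaller even gadget. Finally, because $y$ is adjacent to $x$ (pinned to $s$) and $z$ is adjacent to $w$ (pinned to $i$), one gets $\Oy\subseteq\Gamma_H(s)$ and $\Oz\subseteq\Gamma_H(i)$, so $J_3$ can be the single edge $(y,z)$ --- no caterpillar and no path-finding are needed at all.
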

\begin{proof}
    By Lemma~\ref{lem:always-even-gadget}, $H$~has an even gadget.
 Choose
an even
gadget consisting of an edge $(i,s)$ of~$H$
 and a connected bipartite graph $G$ with distinguished edge $(w,x)$
 so that  $N = |\Homs{(G,w,x)}{(H,i,s)}|$ is even.
 Choose the even gadget so that the number of vertices of~$G$
 is as small as possible.
    There is a homomorphism from~$G$ to the edge $(i,s)$ so $N>0$.
    $N$~is even, so $G$~cannot be a single edge.

    First, we show that $\deg_G(w)\geq 2$.  Suppose, towards a
    contradiction, that $\deg_G(w)=1$, i.e., that $x$~is the only
    neighbour of~$w$ in~$G$.  If this is the case, then $x$~must have
    some neighbour~$w'\neq w$, since $G$~is not a single edge.  We
    have
    \begin{align*}
        0&\equiv |\Homs{(G,w,x)}{(H,i,s)}| \pmod2 \\
         &\equiv |\Homs{(G-w,x)}{(H,s)}| \pmod2   \\
         &= \!\!\!\sum_{c\in\Gamma_{\!H}(s)}\!\!\! |\Homs{(G-w,x,w')}{(H,s,c)}|\,.
    \end{align*}
    Since every vertex in~$H$ has odd degree, the sum has an odd
    number of terms.  Since the total is even, there must be some~$c$
    such that $|\Homs{(G-w,x,w')}{(H,s,c)}|$ is even, contradicting
    the choice of~$G$.  By the same argument, $\deg_G(x)\geq 2$, also.

    For any vertex~$v\in V(G)$, let
    \begin{equation*}
        C(v) = \{c\in V(H) \mid |\Homs{(G,w,x,v)}{(H,i,s,c)}|
                                    \text{ is odd}\}\,.
    \end{equation*}
    Note that, for any $v\in V(G)$, $|C(v)|$~is even since, otherwise,
    $N$~would be odd.

    We now show that $C(y)\neq\emptyset$ for every $y\in
    \Gamma_{\!G}(x) \setminus \{w\}$.  If $C(y)=\emptyset$, then, in
    particular, $i\notin C(y)$, so $|\Homs{(G,w,x,y)}{(H,i,s,i)}|$ is
    even.  But then $|\Homs{(G'\!,w,x)}{(H,i,s)}|$ is even, where
    $G'$~is the graph made from $G$ by identifying the (distinct)
    vertices $w$ and~$y$ and calling the resulting vertex~$w$.  This
    contradicts minimality in the choice of~$G$.
    Similarly, $C(z)\neq\emptyset$ for
    every $z\in \Gamma_{\!G}(w) \setminus \{x\}$.  Choose vertices
    $y\in \Gamma_{\!G}(x) \setminus \{w\}$ and $z\in \Gamma_{\!G}(w)
    \setminus \{x\}$.

    Finally, let $J$ be the partially $H$-labelled graph $(G,
    \{w\mapsto i, x\mapsto s\})$ and let $G(J_3)$~be the single edge
    $(y,z)$ and $\tau(J_3)=\emptyset$.
    We show that
    $(i, s, (J,y), (J,z), (J_3,y,z))$ is a hardness gadget for~$H$.  $\Oy =
    C(y)$ is even and $i\in C(y)$; likewise, $\Oz=C(z)$ is even and
    $s\in C(z)$.

    By the choice of~$J$, $\Oy\subseteq \Gamma_{\!H}(s)$
    and $\Oz\subseteq \Gamma_{\!H}(i)$.  For any $o\in\Oy-i$ and
    $x\in\Oz-s$, $H$~contains edges $(o,s)$, $(s,i)$ and~$(i,x)$ so it
    does not contain the edge $(o,x)$ as it is square-free.
    Therefore, $|\Sos| = |\Sis| = |\Six| = 1$ and $|\Sox| = 0$ and we
    have established all the conditions of
    Definition~\ref{defn:hardness-gadget}.
\end{proof}

\section{Main theorem}
\label{sec:mainthm}

We have shown that all connected, square-free, involution-free graphs
(and some  disconnected graphs, too) have hardness gadgets and
that \parhcol{} is \parp{}-complete for any involution-free
graph that has a hardness gadget.  To deal with
graphs that have involutions, we use  
reduction by involutions. 
As we noted in the introduction, 
Faben and Jerrum showed that every graph~$H$ has
a unique (up to isomorphism) involution-free reduction~$H^*$.
 They also proved~\cite[Theorem 3.4]{FJ13} 
that for any graph~$G$,    
$|\Homs{G}{H}| \equiv |\Homs{G}{H^*}|
    \bmod2$. Hence,  \parhcol{} has the same complexity as $\parhcol[H^*]$.  
  
If $H$~is a tree (as it was for Faben and Jerrum), then 
its involution-free reduction $H^*$~is connected.  
However, for general
graphs, the fact that $H$ is connected does not imply that $H^*$~is connected.\footnote{For
example, consider non-isomorphic, disjoint, connected, involution-free
graphs $H_1$
and~$H_2$ and let $H$~be a graph made by adding two disjoint paths of
the same length from some vertex $x_1\in H_1$ to some vertex $x_2\in
H_2$.  The only involution of this graph exchanges the interior
vertices of the two paths, so $H^* = H_1\cup H_2$, which is
disconnected.} The final result that we need from Faben and Jerrum is
\cite[Theorem~6.1]{FJ13}, which allows us to deal with disconnected
graphs:

\begin{lemma}\label{lem:disconnected}
    Let $H$~be an involution-free graph.  If $H$~has a component~$H'$
    for which \parhcol[H'] is \parp{}-complete, then \parhcol{}
    is \parp{}-complete.
\end{lemma}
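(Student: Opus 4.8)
The plan is to prove both halves of $\parp$-completeness. Membership in $\parp$ is immediate: counting the homomorphisms from a graph to a fixed graph is a $\nump$ problem, so computing the count modulo~$2$ is in $\parp$. For $\parp$-hardness I would exhibit a polynomial-time Turing reduction from $\parhcol[H']$ to $\parhcol$. Write $H = H' \cup H''$, where $H''$ is the union of the components of $H$ other than $H'$. Since $H$ is involution-free, Theorem~\ref{thm:partlabcol} provides a reduction $\partlabparhcol \leq \parhcol$, so it suffices to reduce $\parhcol[H']$ to $\partlabparhcol$; that is, to compute $|\Homs{G}{H'}| \bmod 2$ for an input graph~$G$ given an oracle that returns $|\Homs{J}{H}| \bmod 2$ for any partially $H$-labelled graph~$J$.

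First I would reduce the disconnected case to the connected case. If $G$ has connected components $G_1, \dots, G_m$, then $|\Homs{G}{H'}| = \prod_{l=1}^m |\Homs{G_l}{H'}|$, because a homomorphism from a disjoint union chooses a homomorphism on each component independently; modulo~$2$ this product is odd exactly when every factor is odd. As $m \le |V(G)|$, it is enough to compute each $|\Homs{G_l}{H'}| \bmod 2$, i.e.\ to handle connected inputs.

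So suppose $G$ is connected and fix a vertex $v_0 \in V(G)$. The crucial observation is that, for any $a \in V(H')$, every homomorphism $\sigma\colon G \to H$ with $\sigma(v_0) = a$ has image contained in $H'$: the image is connected, being the homomorphic image of the connected graph~$G$, and it contains the vertex $a$ of the connected component~$H'$. Hence, writing $J_a = (G, \{v_0 \mapsto a\})$ for the partially $H$-labelled graph in which $v_0$ is pinned to~$a$,
\begin{equation*}
    |\Homs{J_a}{H}| = |\Homs{(G, v_0)}{(H', a)}|\,.
\end{equation*}
Partitioning $\Homs{G}{H'}$ by the image of $v_0$ gives
\begin{equation*}
    |\Homs{G}{H'}| = \sum_{a \in V(H')} |\Homs{(G, v_0)}{(H', a)}|
                   = \sum_{a \in V(H')} |\Homs{J_a}{H}|\,,
\end{equation*}
and each summand is one call to the $\partlabparhcol$ oracle. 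Because $H$ (hence $H'$) is fixed, this is a constant number of calls, so the whole construction is a polynomial-time Turing reduction; composing it with the component-splitting step and with Theorem~\ref{thm:partlabcol}, and invoking $\parp$-completeness of $\parhcol[H']$, shows that $\parhcol$ is $\parp$-complete.

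The one point that needs care is the reduction from disconnected to connected inputs. Pinning a single vertex of $G$ forces only the component containing it into $H'$, so the localisation argument genuinely requires connectivity; one cannot instead pin out all of $H'$ simultaneously, as that would involve exponentially many choices of pins. Everything else is routine, and in particular all use of pinning --- and hence the only place the hypothesis that $H$ is involution-free is used --- is packaged inside Theorem~\ref{thm:partlabcol}.
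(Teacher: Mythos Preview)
The paper does not give its own proof of this lemma; it is quoted verbatim as \cite[Theorem~6.1]{FJ13} from Faben and Jerrum, so there is no in-paper argument to compare against.

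Your proof is correct. It is worth noting that you are exploiting machinery that is actually \emph{stronger} than what Faben and Jerrum had when they proved the original result: Theorem~\ref{thm:partlabcol} lets you pin a vertex of~$G$ to a single vertex of~$H$, whereas \cite{FJ13} could only pin to whole orbits. With vertex-level pinning, your argument is very clean --- pin one vertex of each connected component of~$G$ into~$V(H')$ and use connectivity to confine the image --- and the involution-free hypothesis is used exactly once, to invoke Theorem~\ref{thm:partlabcol}. The decomposition of~$G$ into connected components, followed by the sum over $a\in V(H')$, gives at most $|V(G)|\cdot|V(H')|$ oracle calls, which is polynomial since $H$ is fixed. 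So this is a valid polynomial-time Turing reduction and a legitimate alternative proof within the paper's framework.
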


We can now prove our main result.

{
\renewcommand{\thetheorem}{\ref{thm:main}}
\begin{theorem}
\statethmmain{}
\end{theorem}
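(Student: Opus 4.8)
The plan is to assemble the pieces developed above; essentially all the real work has been done, and what remains is bookkeeping. First I would dispatch two routine observations: the map $G \mapsto |\Homs{G}{H}|$ is in $\nump$, so computing it modulo~$2$ is in $\parp$; and, by the Faben--Jerrum reduction by involutions~\cite[Theorem~3.4]{FJ13} (which gives $|\Homs{G}{H}| \equiv |\Homs{G}{H^*}| \bmod 2$ for every graph~$G$, as recalled above), the problems $\parhcol$ and $\parhcol[H^*]$ have exactly the same complexity. It therefore suffices to classify $\parhcol[H^*]$, where by hypothesis $H^*$ is involution-free and square-free.

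For the tractable case, suppose $|V(H^*)| \le 1$. If $H^*$ has one vertex then $|\Homs{G}{H^*}|$ is $1$ when $G$ has no edges and $0$ otherwise; if $H^*$ is the empty graph then $|\Homs{G}{H^*}|$ is $1$ when $G$ is empty and $0$ otherwise. Either way $|\Homs{G}{H^*}| \bmod 2$ is computable in polynomial time, so $\parhcol[H^*]$, and hence $\parhcol$, is in~$\Ptime$.

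For the hardness case, suppose $|V(H^*)| \ge 2$. An edgeless graph on at least two vertices has an involution (transpose two vertices), so the involution-free graph~$H^*$ must have a component~$H'$ with at least two vertices. This $H'$ is square-free, being an induced subgraph of~$H^*$, and involution-free, since an involution of a single component would extend to an involution of~$H^*$ by acting as the identity on the other components; and it is connected with at least two vertices. I would then run the case analysis of Section~\ref{sec:finding-gadgets} on~$H'$, which is exhaustive and in each case produces a hardness gadget: (i) if $H'$ has at least two even-degree vertices, by Lemma~\ref{lem:two-even}; (ii) if $H'$ has exactly one even-degree vertex --- which must have positive degree, as $H'$ is connected with at least two vertices, so $H'$ has exactly one vertex of positive even degree --- by Lemma~\ref{lem:even-deg}; (iii) if every vertex of~$H'$ has odd degree and $H'$ contains an odd cycle, by Lemma~\ref{lem:odd-cycle}; and (iv) if every vertex has odd degree and $H'$ has no odd cycle, so that $H'$ is bipartite, by Lemma~\ref{lem:even-gadget} --- here $H'$ is not a single edge, since $K_2$ has an involution, so $H'$ is connected, bipartite, square-free with more than one edge, and for every edge $(a,b)$ of~$H'$ the graph $(H',a,b)$ is involution-free because $\Aut(H',a,b)$ is a subgroup of $\Aut(H')$, whose order is odd by Cauchy's theorem (Theorem~\ref{thm:Cauchy}) as $H'$ is involution-free. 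In every case $H'$ is involution-free and has a hardness gadget, so $\parhcol[H']$ is $\parp$-complete by Theorem~\ref{thm:hardness-gadget}; and since $H^*$ is involution-free and $H'$ is a component of~$H^*$, Lemma~\ref{lem:disconnected} upgrades this to $\parp$-completeness of $\parhcol[H^*]$, and therefore of $\parhcol$.

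I do not expect a genuine obstacle, since Theorem~\ref{thm:hardness-gadget} and the gadget-constructing lemmas of Section~\ref{sec:finding-gadgets} carry all the weight. The points needing care --- the nearest thing to a difficulty --- are checking that the four-way split on degrees and bipartiteness really is exhaustive for a connected, involution-free, square-free graph on at least two vertices and that each branch matches the hypotheses of the lemma it invokes, together with the small facts that $K_2$ and the two-vertex edgeless graph have involutions, and that passing to a component, or fixing a tuple of vertices via the odd-order-subgroup argument already used for Corollary~\ref{cor:odd-orbit}, preserves involution-freeness.
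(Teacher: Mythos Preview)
Your proposal is correct and follows essentially the same route as the paper's proof: reduce to the involution-free reduction~$H^*$, pick a nontrivial component, and run the four-way case split (two even-degree vertices; one even-degree vertex; all odd degree with an odd cycle; all odd degree bipartite), invoking Theorem~\ref{thm:hardness-gadget} and Lemma~\ref{lem:disconnected} to finish. The only cosmetic differences are that the paper uses Lemma~\ref{lem:invo-free-cycle} to get a cycle (hence more than one edge) in the all-odd-degree case, whereas you instead observe that $K_2$ has an involution; and you spell out membership in~\parp{} and the involution-freeness of $(H',a,b)$ via the odd-order argument, which the paper leaves implicit.
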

}
\begin{proof}

    As we noted above,   \parhcol{}
    has the same complexity as $\parhcol[H^*]$.
   If $H^*$~has at most one 
vertex, then
\parhcol[H^*] is in~\Ptime{}: $|\Homs{G}{H^*}|=1$
if $G$ has no edges and
$\Homs{G}{H^*}=\emptyset$ 
if $G$~has an edge.
    Otherwise, let $H^{**}$~be any component of~$H^*$
    with more than one vertex.  Such a component must exist since,
    otherwise, $H^*$~would be a graph with at least two vertices and
    no edges, and any such graph has an involution.

    If $H^{**}$~has two or more vertices of even degree, then it has a
    hardness gadget by Lemma~\ref{lem:two-even}.  If $H^{**}$~has exactly
    one vertex of even degree, it has a hardness gadget by
    Lemma~\ref{lem:even-deg}.
    If the previous cases do not apply, then every vertex of~$H^{**}$
    must have odd degree.  By Lemma~\ref{lem:invo-free-cycle},
    $H^{**}$~contains a cycle.  If it contains an odd cycle, it
    has a hardness gadget by Lemma~\ref{lem:odd-cycle}.  Otherwise,
    $H^{**}$~is bipartite.
    By construction, $H^{**}$ is connected and square-free.
 Since $H^{**}$ contains a cycle, it has more than one edge.
 Since it is involution-free, it certainly contains an edge $(a,b)$
 so that $(H^{**},a,b)$ is involution-free.   Every vertex of~$H^{**}$ has
 odd degree, so
    it has a hardness gadget by Lemma~\ref{lem:even-gadget}.

    We have established that either $H^*$ has at most one vertex, in
    which case \parhcol[H^*] and \parhcol{} are in~\Ptime{}, or that
    some component $H^{**}$ of $H^*$~has a hardness gadget.  In the
    latter case, \parhcol[H^{**}] is \parp{}-complete by
    Theorem~\ref{thm:hardness-gadget}.
    \parhcol[H^*] is \parp{}-complete by Lemma~\ref{lem:disconnected},
    so \parhcol{} is \parp{}-complete.
\end{proof}

 \section{Acknowledgements}
 We thank the referees for very useful comments.

\bibliographystyle{plain}
\bibliography{\jobname}

\end{document}